\providecommand{\tabularnewline}{\\}
\providecommand{\algorithmname}{Algorithm}
\theoremstyle{plain}
\newtheorem{lem}{\protect\lemmaname}
\theoremstyle{plain}
\newtheorem{thm}{\protect\theoremname}
\theoremstyle{plain}
\newtheorem{prop}{\protect\propositionname}
\author{
Yuanshuai~Zheng, and
Junting~Chen


}
\newcommand{\newac}{\newacronym}
\newcommand{\ac}{\gls}
\newcommand{\acpl}{\glspl}
\providecommand{\lemmaname}{Lemma}
\providecommand{\propositionname}{Proposition}
\providecommand{\theoremname}{Theorem}
\begin{document}
\title{Geography-aware Optimal UAV 3D Placement for LOS Relaying: A Geometry
Approach}

\maketitle
%
%


\newcommand*{\SINGLECOLUMN}{}

\ifdefined\SINGLECOLUMN
	\setkeys{Gin}{width=0.5\columnwidth}
	\newcommand{\figfontsize}{\footnotesize} 
	\newcommand{\labelfontsize}{0.7}
	\newcommand{\legendfontsize}{0.57}
	\newcommand{\ticklabelfontsize}{0.7}
	\newcommand{\numberfontsize}{0.45}
\else
	\setkeys{Gin}{width=1.0\columnwidth}
	\newcommand{\figfontsize}{\normalsize} 
	\newcommand{\labelfontsize}{0.7}
	\newcommand{\legendfontsize}{0.65}
	\newcommand{\ticklabelfontsize}{0.7}
	\newcommand{\numberfontsize}{0.5}
\fi

\begin{abstract}
Many emerging technologies for the next generation wireless network
prefer \ac{los} propagation conditions to fully release their performance
advantages. This paper studies 3D \ac{uav} placement to establish
\ac{los} links for two ground terminals in deep shadow in a dense
urban environment. The challenge is that the \ac{los} region for
the feasible \ac{uav} positions can be arbitrary due to the complicated
structure of the environment. While most existing works rely on simplified
stochastic \ac{los} models and problem relaxations, this paper focuses
on establishing theoretical guarantees for the optimal \ac{uav} placement
to ensure \ac{los} conditions for two ground users in an actual propagation
environment. It is found that it suffices to search a bounded 2D area
for the globally optimal 3D \ac{uav} position. Thus, this paper develops
an exploration-exploitation algorithm with a linear trajectory length
and achieves above $99\%$ global optimality over several real city
environments being tested in our experiments. To further enhance the
search capability in an ultra-dense environment, a dynamic multi-stage
algorithm is developed and theoretically shown to find an $\epsilon$-optimal
UAV position with a search length $O(1/\epsilon)$. Significant performance
advantages are demonstrated in several numerical experiments for wireless
communication relaying and wireless power transfer.
\end{abstract}

\begin{IEEEkeywords}
UAV, Positioning, LOS relaying, Geography-aware, Wireless power transfer
\end{IEEEkeywords}

\glsresetall

\section{Introduction}

\label{sec:intro}

The \ac{los} propagation conditions are desired in many trending
technologies for the next generation wireless networks. For example,
millimeter-wave and terahertz signals have much less diffractive and
reflective paths compared to sub-6GHz signals due to their small wavelengths,
and hence, \ac{nlos} terminals usually suffer from great path loss
\cite{WanHuaWanGao:J20,WanWanHuJia:J21}. Free-space optical signals
are difficult to penetrate obstacles. Wireless power transfer (WPT)
also prefers \ac{los} conditions for energy efficiency. However,
it is challenging to establish \ac{los} conditions in a dense urban
area, where high buildings and trees easily block the signals.

Low altitude \ac{uav} provides a promising solution to establish
\ac{los} links to ground terminals in deep shadow \cite{ZenWuZha:J19,GerGarAzaLoz:J22}.
Recent works have discussed employing \acpl{uav} in many scenarios,
including communication relaying, data collection, coverage extension,
and WPT \cite{ZhoGuoLiChe:J20,MaZhoQiaChe:J21,MoaSaaChaTri:J20,LiYaoWanXu:J19,LiuXioLuNi:J21}.
To mitigate the potential issue of the limited propulsion energy at
the \acpl{uav}, some recent works have proposed solutions including
dynamic service landing spots \cite{WanSuZhaLi:J22} and the applications
of tethered \acpl{uav} \cite{MusAhmMoh:J20,LimYuLee:J22,ZhaLiuAns:J22}.

Despite many existing works on \ac{uav} placement for wireless communications,
very few solutions guarantee to establish \ac{los} conditions for
specific users in deep shadow. Most existing works tend to oversimplify
the terrain environment. For example, some early works \cite{JiaSwi:J12,ZenZhaLim:J16,LyuZenZha:J18}
studied UAV placement using a pure \ac{los} model, assuming no blockage
from the terrain. A probabilistic \ac{los} model for urban environment
was established in \cite{AlhKanLar:J14,GapMolAndHea:J21}, with extensions
in \cite{SamRapMac:J15,EsrGanGes:J19}, and was adopted in \cite{CheMozSaaYin:J17,IreSebHal:J18,CheHua:J22,SinAgrSinBan:J22}
for \ac{uav} placement and trajectory planning. Since the models
\cite{CheMozSaaYin:J17,IreSebHal:J18,CheHua:J22,SinAgrSinBan:J22}
capture the \ac{los} conditions only in a {\em statistical} sense,
the corresponding solutions developed in \cite{CheMozSaaYin:J17,IreSebHal:J18,CheHua:J22,SinAgrSinBan:J22}
cannot guarantee \ac{los} conditions for specific users.

Some recent attempts exploit radio maps or city maps to assist \ac{uav}
placement, where radio maps describe the channel quality between a
ground terminal and a possible UAV position \cite{ZenXuJinZha:J21,ZhaZha:J21}.
Yet, it is still challenging to search for the best UAV position.
The work \cite{ZenXuJinZha:J21} applied deep reinforcement learning
(DRL) to assist the navigation of \ac{uav}, but the optimality and
complexity are difficult to analyze. In \cite{ZhaZha:J21} and \cite{DonHeWanZha:J22},
the authors used signal-to-interference-plus-noise ratio (SINR) map-based
methods to solve the \ac{uav} 3D path planning problem, but these
methods require offline city maps or radio maps, and hence, they are
difficult to be applied to online search. In \cite{DabSad:J20}, a
geometry-based approach was developed to optimize the UAV position
for free-space optical relaying for two ground users, but the approach
only guarantees the optimality in a 2D plane. In \cite{YiZhuZhuXia:J22},
the buildings were approximately modeled as a set of polyhedrons,
and a number of constraints on the \ac{uav} positions were formulated
using geometry relations; accordingly, a non-convex \ac{uav} placement
problem was formed and relaxation-based algorithms were developed,
although the global optimality was still unknown. In summary, the
main challenge of the \ac{uav} placement problem originates from
the fact that the terrain obstacles may have arbitrary locations and
shapes, and therefore, the placement problems are generally non-convex
with possibly arbitrarily many local optima.

In this paper, we attempt to establish some theoretical guarantees
for the optimal \ac{uav} placement to ensure \ac{los} conditions
for two ground users in an almost arbitrary urban environment. The
goal is to develop an efficient search strategy to explore only a
small 2D local area for the best 3D \ac{uav} placement. Some prior
work \cite{CheMitGes:T21} attempted a special case of the problem,
where one of the users is placed on a high tower such that there is
always an \ac{los} link between the user and the \ac{uav}. However,
when both users are on the ground and are likely shadowed by buildings,
the method in \cite{CheMitGes:T21} fails to apply.

This paper exploits two universal properties for any \ac{los} patterns
from an almost arbitrary terrain structure: {\em upward invariance}
and {\em colinear invariance}. Specifically, if a \ac{uav} sees
a user, such an \ac{los} condition will remain if the \ac{uav} increases
its altitude or moves away from the user without changing the elevation
and azimuth angles, under some additional mild conditions. Exploiting
these properties, two search strategies are developed. The key theoretical
results and numerical findings are summarized as follows.
\begin{itemize}
\item We develop a search trajectory, Algorithm~1, on the middle perpendicular
plane of the two users. It is proven that the search finds the optimal
solution on the middle perpendicular plane, and the search length
is upper bounded by a linear function of the altitude of the initial
point.
\item We show that given a double-\ac{los} initial point, it suffices to
search a bounded 2D local area for the globally 3D optimal \ac{uav}
position. With this analytical insight, we develop Algorithm~2 with
search complexity $O(1/\text{\ensuremath{\epsilon}})$ for the $\epsilon$-optimal
\ac{uav} position in 3D under some mild condition.
\item We conduct numerical experiments using real city map data for several
typical cities. It is found that both Algorithms 1 and 2 achieve over
$99\%$ optimality in a moderate dense environment. In a simulated
ultra dense environment based on a street map of Guangzhou, China,
Algorithm~2 can achieve over $98\%$ of the global optimality under
a reasonable search distance.
\end{itemize}

The remaining part of the paper is organized as follows. Section II
introduces the system model, and formulates a geography-aware \ac{uav}
position optimization problem that can be employed in multiple applications.
Section III presents Algorithm~1 for the optimal solution on the
middle perpendicular plane with theoretical proof of the optimality
and linear complexity. In Section IV, we further propose Algorithm~2
based on the extracted geographic features of \ac{los} patterns,
and demonstrate the performance-complexity trade-off. Section V contains
our simulations accompanied by the relevant discussion and comparison,
and finally, the paper is concluded in Section VI.

\section{System Model}

\label{sec:system-model-dual}

\subsection{Blockage-aware Air-to-ground Channel Model}

\label{subsec:Blockage-aware-Air-to-ground-Channel-model}

Consider to place a \ac{uav} to establish \ac{los} channels to two
users located at $\mathbf{u}_{1}$ and $\mathbf{u}_{2}$ on the ground
in an outdoor urban environment. The \ac{uav} is deployed with
a minimum height $H_{\text{min}}$ which is set to be greater than
the tallest structure in the area such that there will be no potential
collision for the \ac{uav}. The users are possibly surrounded by
urban structures, and thus, the wireless communication link between
the \ac{uav} and the user can be blocked by buildings or trees.

For presentation convenience, define a Cartesian coordinate system
with the origin $O$ set at $(\mathbf{u}_{1}+\mathbf{u}_{2})/2$,
and three orthonormal basis vectors $\text{\ensuremath{\mathbf{e}_{1}}}$,
$\mathbf{e}_{2}$, and $\mathbf{e}_{3}$ where $\mathbf{e}_{2}=(\mathbf{u}_{2}-\mathbf{u}_{1})/\|\mathbf{u}_{2}-\mathbf{u}_{1}\|_{2}$
representing the direction from user $1$ to user $2$, $\mathbf{e}_{3}$
is the direction perpendicular to the ground pointing upward, and
$\mathbf{e}_{1}$ is determined according to the right-hand rule as
illustrated in Fig.~\ref{fig:symmetric_trajectory}.

Denote $\mathcal{D}_{0}^{(i)}$ as the set of permissible \ac{uav}
positions $\mathbf{p}=(p_{1},p_{2},p_{3})$ such that there is an
\ac{los} link between the \ac{uav} and the $i$th user, $i=1,2$,
and $p_{3}\geq H_{\text{min}}$. The \ac{los} regions $\mathcal{D}_{0}^{(i)}$
can be {\em arbitrary} except that we assume $\mathcal{D}_{0}^{(i)}$
have the following properties: For any $\mathbf{p}\in\mathcal{D}_{0}^{(i)}$,
\begin{enumerate}
\item Upward invariant: any position $\mathbf{p}'$ perpendicularly above
$\mathbf{p}$ also belongs to $\mathcal{D}_{0}^{(i)}$, \emph{i.e.},
$\mathbf{p}'\in\mathcal{D}_{0}^{(i)}$;
\item Colinear invariant: any position $\mathbf{p}'$ that satisfies $\mathbf{p}'-\mathbf{u}_{i}=\rho(\mathbf{p}-\mathbf{u}_{i})$
for some $\rho>1$ also belongs to $\mathcal{D}_{0}^{(i)}$, \emph{i.e.},
$\mathbf{p}'\in\mathcal{D}_{0}^{(i)}$.
\end{enumerate}

In addition, define {\em double-LOS} region $\tilde{\mathcal{D}}_{0}=\mathcal{D}_{0}^{(1)}\cap\mathcal{D}_{0}^{(2)}$
as the set of \ac{uav} positions where there are \ac{los} links
to both users. Since the {double-LOS} region is an intersection
of $\mathcal{D}_{0}^{(i)}$, the upward invariant property automatically
holds, \emph{i.e.}, for any double-LOS position $\mathbf{p}\in\tilde{\mathcal{D}}_{0}$,
any position $\mathbf{p}'$ perpendicularly above $\mathbf{p}$ is
also a double-LOS position which satisfies $\mathbf{p}'\in\tilde{\mathcal{D}}_{0}$.
Note that the colinear invariant property does not hold for $\tilde{\mathcal{D}}_{0}$.

To summarize, the upward invariant and colinear invariant properties
imply that if a UAV sees a user at $\mathbf{u}_{i}$, such an \ac{los}
condition will remain if the \ac{uav} increases its altitude or moves
away from the user without changing the elevation and azimuth angles.
The widely adopted probabilistic \ac{los} model in the \ac{uav}
literature \cite{CheMozSaaYin:J17,IreSebHal:J18,CheHua:J22,SinAgrSinBan:J22}
is a special case that satisfies these properties in a statistical
sense.

The upward invariant and colinear invariant properties can be easily
understood from the ray-tracing mechanism based on the geometry relation
with the environment. The implication is that if the urban structures
all have their top no wider than the base, for instance, a combination
of straight pillars and cones, then the upward invariant and colinear
invariant properties can be automatically satisfied. While practical
city topologies may occasionally violate these properties in some
local area, these properties still serve as a good approximation to
the radio environment of interest.

\subsection{Geography-aware \ac{uav} Position Optimization}

\label{subsec:Optimization_problem_and_applications}

The goal of this paper is to place the \ac{uav} as close to both
users as possible under the double-\ac{los} condition $\mathbf{p}\in\tilde{\mathcal{D}}_{0}$.
Specifically, denote $f(d_{i}(\mathbf{p}))$ as the value function
in terms of the distance $d_{i}(\mathbf{p})=\|\mathbf{p}-\mathbf{u}_{i}\|_{2}$
from the \ac{uav} position $\mathbf{p}$ to the user $\mathbf{u}_{i}$,
$i\in\{1,2\}$. The function $f(d)$ is assumed to be continuous and
decreasing in $d$. The objective is to maximize the performance of
the worse link under the double-\ac{los} condition: 
\begin{equation}
\begin{aligned}\mathscr{P}:\quad\mathop{\mbox{maximize}}\limits _{\mathbf{p}} & \quad F(\mathbf{p})\\
\mathop{\mbox{subject to}} & \quad\mathbf{p}\in\tilde{\mathcal{D}}_{0}
\end{aligned}
\label{basic_problem}
\end{equation}
where $F(\mathbf{p})=\min\{f(d_{1}(\mathbf{p})),f(d_{2}(\mathbf{p}))\}$.

Typical applications of the above formulation include \ac{uav}-assisted
relay communications, WPT to ground devices, and video monitoring
of two ground spots. In decode-and-forward relaying, for instance,
one may choose $f(d)=B\log_{2}(1+\gamma d^{-\alpha})$, where $B$
is the bandwidth, $\gamma$ is the effective \ac{snr}, and $\alpha$
is the path-loss exponent in \ac{los} (see Section \ref{sec:Numerical-Results}
for a more specific example). For WPT or visual monitoring, the link
performance function can be chosen as $f(d)=\kappa d^{-\alpha}$,
where $\kappa$ and $\alpha$ are some parameters depending on the
applications.

The main challenge is due to the possibly complicated structure of
the double-LOS region $\tilde{\mathcal{D}}_{0}$. First, $\tilde{\mathcal{D}}_{0}$
may appear to have an irregular pattern as shown in Fig.~\ref{fig:extreme-case},
where a good solution may not be found in a straight-forward way.
Fig.~\ref{fig:extreme-case} shows a topology viewed from the top,
where users are surrounded by high buildings. The grid-shaded area
represents the double-\ac{los} region $\tilde{\mathcal{D}}_{0}$
sliced at the altitude $H_{\text{min}}$. In Fig.~\ref{fig:extreme-case}(a),
the double-\ac{los} region is off the middle perpendicular plane
between two users when they are behind tall buildings. In Fig.~\ref{fig:extreme-case}(b),
the double-LOS region can even be far away from the two users when
they are surrounded by tall buildings.

Second, the structure of $\tilde{\mathcal{D}}_{0}$ may lead to arbitrarily
many local optima due to possibly a huge number of structures or sub-structures
in the urban area of interest. As a result, the approach in \cite{YiZhuZhuXia:J22}
which models the environment using polyhedrons is difficult to implement
due to the complexity of the environment and the possibly large amount
of local optima. This paper, on the contrary, attempts to develop
an exploration-exploitation approach with an aim to establish some
theoretical guarantees for the global optimality of the UAV position.
\begin{figure}
\begin{centering}
\subfigure[]{\includegraphics[width=0.25\columnwidth]{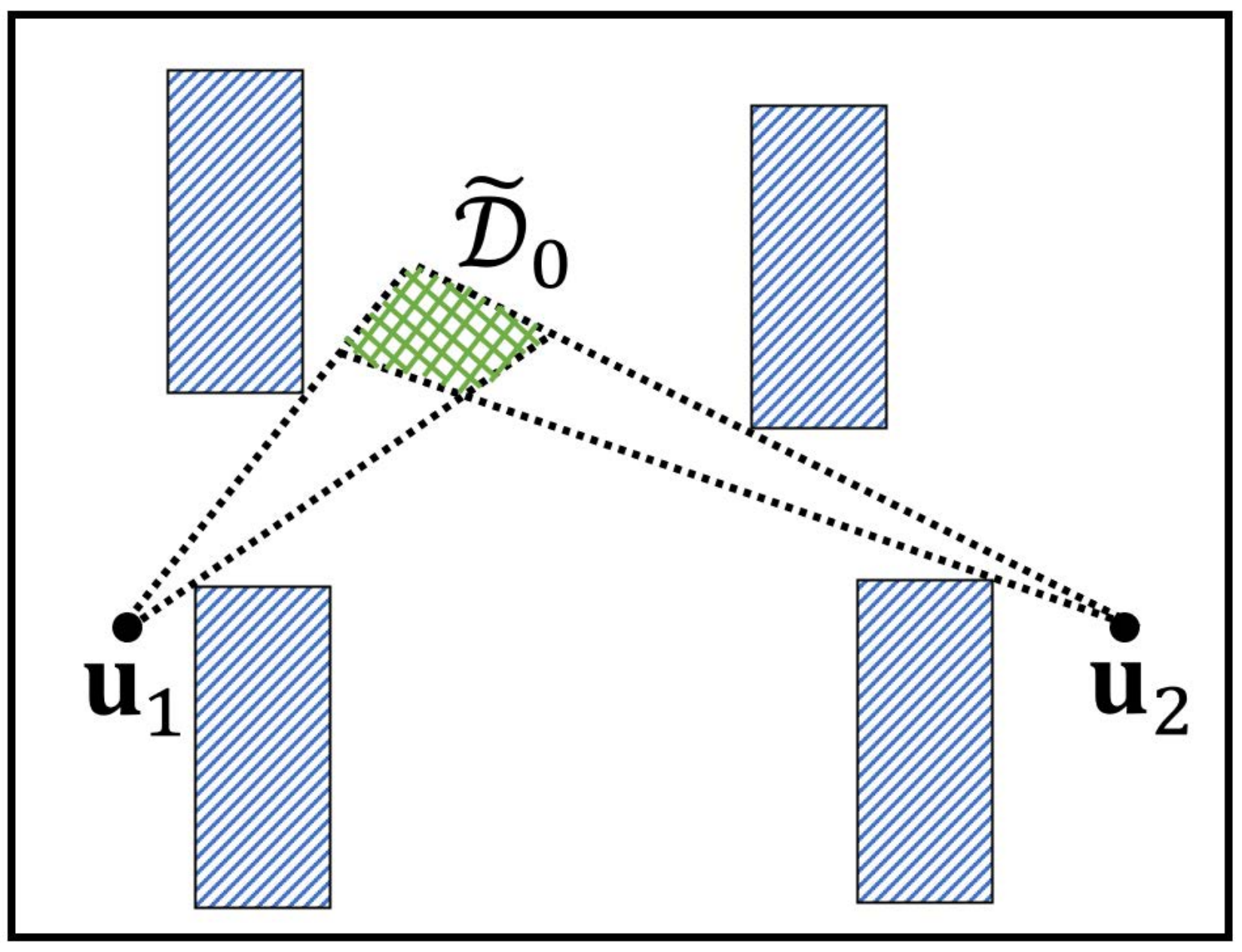}}\subfigure[]{\includegraphics[width=0.25\columnwidth]{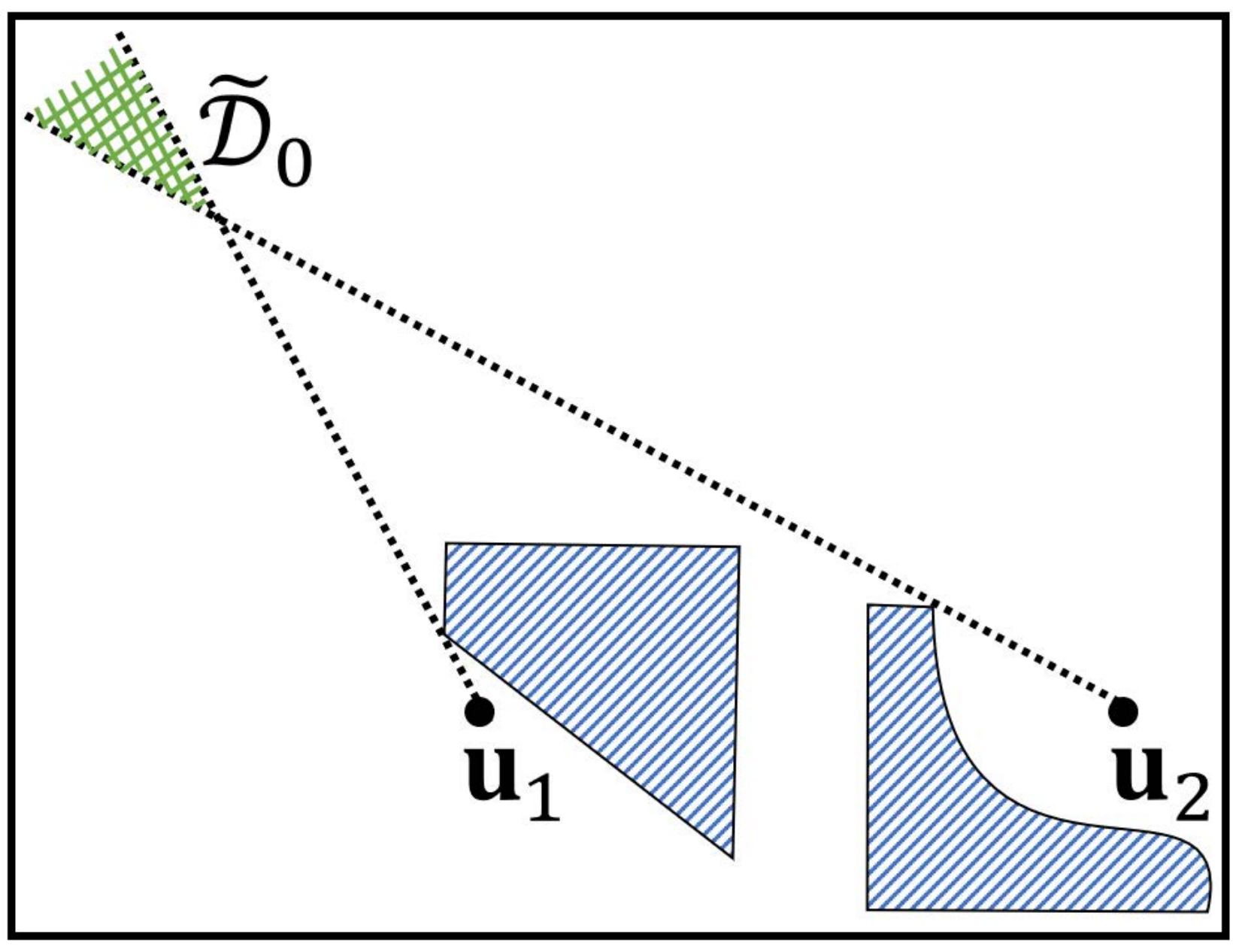}}
\par\end{centering}
\caption{\label{fig:extreme-case} Double-LOS regions sliced at the altitude
$H_{\text{min}}$ under building topologies in extreme cases from
a top view, where the building height is close to the minimum \ac{uav}
altitude $H_{\min}$. (a) $\tilde{\mathcal{D}}_{0}$ appears off the
middle-perpendicular plane between the two users. (b) $\tilde{\mathcal{D}}_{0}$
appears at the top-left in the region, \emph{i.e.}, could be far away
from both users.}
\end{figure}

\section{Algorithm for the Optimal Solution on the Middle-perpendicular Plane}

\label{sec:algorithm_1}

In this section, we solve a simpler version of the problem, where
we aim at finding the optimal \ac{uav} position on the middle perpendicular
plane between the two users. First, two useful properties are investigated
for the \ac{uav} placement problem constrained on the middle perpendicular
plane. Based on these properties, an efficient algorithm is developed.
Then, we prove that the algorithm finds the globally optimal UAV position
on the 2D middle perpendicular plane with a linear trajectory length.

\subsection{Properties on the Middle-perpendicular Plane}

\label{subsec:Search-on-the-middle-perpendicular-plane}

Mathematically, the middle perpendicular plane is specified as $\mathcal{S}=\{\mathbf{p}\in\mathbb{R}^{3}:d_{1}(\mathbf{p})=d_{2}(\mathbf{p})\}$,
which is a 2D plane passing through the midpoint between the two users
at $\mathbf{u}_{1}$ and $\mathbf{u}_{2}$ and perpendicular to the
line connecting the two users. From the definition, one only needs
to focus on minimizing either $d_{1}(\mathbf{p})$ or $d_{2}(\mathbf{p})$.
This property motivates the search on the middle perpendicular plane.

In addition, recall that the two users are assumed to locate at the
ground level, and hence, the middle perpendicular plane is also perpendicular
to the ground. As a consequence, there are two additional properties
summarized in the following lemmas which make it efficient to explore
on the middle perpendicular plane.

The first property is on the double-\ac{los} pattern on the middle
perpendicular plane. Define $\tilde{\mathcal{D}}_{0}^{\text{c}}$
as the set of permissible \ac{uav} positions which are \emph{non-double-LOS}.
\begin{lem}[Double-LOS structure on $\mathcal{S}$]
\label{lem:Double-LOS-structure-on-S} If $\mathbf{p}\in\mathcal{S}\cap\tilde{\mathcal{D}}_{0}$,
then any $\mathbf{p}'\in\mathcal{S}$ perpendicularly above $\mathbf{p}$
also satisfies $\mathbf{p}'\in\mathcal{S}\cap\tilde{\mathcal{D}}_{0}$.
If $\mathbf{p}\in\mathcal{S}\cap\tilde{\mathcal{D}}_{0}^{\text{c}}$,
then any $\mathbf{p}'\in\mathcal{S}$ perpendicularly below $\mathbf{p}$
also satisfies $\mathbf{p}'\in\mathcal{S}\cap\tilde{\mathcal{D}}_{0}^{\text{c}}$.
\end{lem}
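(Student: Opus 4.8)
The plan is to reduce both halves of the lemma to the upward-invariance axioms of Section~\ref{subsec:Blockage-aware-Air-to-ground-Channel-model}, once the geometry of $\mathcal{S}$ is pinned down. The first thing I would record is that $\mathcal{S}$ is a \emph{vertical} plane: because the two users lie at ground level, $\mathbf{u}_2-\mathbf{u}_1$ is horizontal, so the unit normal $\mathbf{e}_2$ of $\mathcal{S}$ is orthogonal to the vertical direction $\mathbf{e}_3$; hence $\mathbf{e}_3$ lies in the direction space of $\mathcal{S}$, and the whole vertical line through any $\mathbf{p}\in\mathcal{S}$ stays in $\mathcal{S}$. Therefore, for $\mathbf{p}\in\mathcal{S}$ any point $\mathbf{p}'$ perpendicularly above or below $\mathbf{p}$ already satisfies $\mathbf{p}'\in\mathcal{S}$, and in both statements the only remaining thing to verify is the double-LOS (resp.\ non-double-LOS) membership.

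For the first statement, take $\mathbf{p}\in\mathcal{S}\cap\tilde{\mathcal{D}}_0$ and $\mathbf{p}'$ perpendicularly above $\mathbf{p}$. Since $\mathbf{p}\in\tilde{\mathcal{D}}_0=\mathcal{D}_0^{(1)}\cap\mathcal{D}_0^{(2)}$, applying the upward-invariant property to each $\mathcal{D}_0^{(i)}$---equivalently the upward invariance of $\tilde{\mathcal{D}}_0$ already observed in Section~\ref{subsec:Blockage-aware-Air-to-ground-Channel-model}---gives $\mathbf{p}'\in\tilde{\mathcal{D}}_0$, and combined with $\mathbf{p}'\in\mathcal{S}$ this is $\mathbf{p}'\in\mathcal{S}\cap\tilde{\mathcal{D}}_0$.

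For the second statement I would argue by contraposition of the same fact. Let $\mathbf{p}\in\mathcal{S}\cap\tilde{\mathcal{D}}_0^{\text{c}}$ and let $\mathbf{p}'\in\mathcal{S}$ be a permissible position perpendicularly below $\mathbf{p}$, so $p'_3\ge H_{\min}$. If $\mathbf{p}'$ were double-LOS, i.e.\ $\mathbf{p}'\in\tilde{\mathcal{D}}_0$, then since $\mathbf{p}$ is perpendicularly above $\mathbf{p}'$, upward invariance of $\tilde{\mathcal{D}}_0$ would force $\mathbf{p}\in\tilde{\mathcal{D}}_0$, contradicting $\mathbf{p}\in\tilde{\mathcal{D}}_0^{\text{c}}$. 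Hence $\mathbf{p}'$ is non-double-LOS and, being permissible, lies in $\mathcal{S}\cap\tilde{\mathcal{D}}_0^{\text{c}}$.

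I do not expect a real obstacle here---the argument is essentially bookkeeping on the invariance axioms, and the colinear invariance is not even needed. The two places that need care are (i) the verification that vertical motion preserves membership in $\mathcal{S}$, which is exactly where the ground-level assumption on $\mathbf{u}_1$ and $\mathbf{u}_2$ is used, and (ii) keeping the altitude floor $H_{\min}$ from the definition of ``permissible'' UAV positions in view when descending in the second statement, so that $\mathbf{p}'$ is genuinely an element of $\tilde{\mathcal{D}}_0^{\text{c}}$ rather than merely a non-LOS point below the allowed airspace.
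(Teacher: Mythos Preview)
Your proposal is correct and follows essentially the same approach as the paper: the first claim is derived directly from upward invariance of each $\mathcal{D}_0^{(i)}$ together with the fact that $\mathcal{S}$ is perpendicular to the ground, and the second claim is obtained by the contrapositive (contradiction) of the first. Your write-up is more explicit than the paper's about why vertical motion preserves membership in $\mathcal{S}$ and about the $H_{\min}$ floor, but the underlying argument is identical.
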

\begin{proof}
The first property follows due to the upward invariant property of
$\mathcal{D}_{0}^{(i)}$ and the fact that $\mathcal{S}$ is perpendicular
to the ground. For the second property, assume that $\mathbf{p}'$
is a double-LOS position, \emph{i.e.}, $\mathbf{p}'\in\mathcal{S}\cap\tilde{\mathcal{D}}_{0}$.
Then, according to the upward invariant property, we must have $\mathbf{p}\in\mathcal{S}\cap\tilde{\mathcal{D}}_{0}$,
violating the condition that $\mathbf{p}\in\mathcal{S}\cap\tilde{\mathcal{D}}_{0}^{\text{c}}$,
leading to a contradiction. Therefore, the second property also holds.
\end{proof}
Lemma \ref{lem:Double-LOS-structure-on-S} can be interpreted as follows:
if a position is double-LOS, then all positions perpendicularly above
it are double-LOS; on the other hand, if a position is non-double-LOS,
then all positions perpendicularly below it are non-double-LOS. An
example of the double-LOS pattern on the search plane $\mathcal{S}$
is illustrated in Fig.~\ref{fig:symmetric_trajectory}.

The second property leads to a simplified problem for the search constrained
on the middle perpendicular plane as follows
\begin{equation}
\begin{aligned}\mathscr{P}':\quad\mathop{\mbox{maximize}}\limits _{\mathbf{p}} & \quad F(\mathbf{p})\\
\mathop{\mbox{subject to}} & \quad\mathbf{p}\in\tilde{\mathcal{D}}_{0}\cap\mathcal{S}.
\end{aligned}
\label{sub_problem_middle}
\end{equation}

Define $\mathbf{o}=\frac{1}{2}(\mathbf{u}_{1}+\mathbf{u}_{2})$ as
the midpoint between the two users. Denote the radius from a point
$\mathbf{p}$ on the perpendicular plane $\mathcal{S}$ to the midpoint
$\mathbf{o}$ as
\begin{equation}
r(\mathbf{p})\triangleq||\mathbf{p}-\mathbf{o}||_{2}.\label{eq:radius_rp}
\end{equation}

\begin{lem}[Optimality with minimum radius]
\label{lem:minimize-the-radius} The solution $\hat{\mathbf{p}}$
to $\mathscr{P}'$ minimizes the radius $r(\mathbf{p})$ subject to
$\mathbf{p}\in\tilde{\mathcal{D}}_{0}\cap\mathcal{S}$.
\end{lem}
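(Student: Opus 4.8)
The plan is to reduce $\mathscr{P}'$ to a one-dimensional monotone problem by exploiting the defining property of the middle perpendicular plane, namely $d_1(\mathbf{p})=d_2(\mathbf{p})$ on $\mathcal{S}$, together with a Pythagorean decomposition of the distance. Concretely, I would show that on $\mathcal{S}$ the objective $F(\mathbf{p})$ is a strictly decreasing function of $r(\mathbf{p})$ alone, so that maximizing $F$ and minimizing $r$ over the same feasible set $\tilde{\mathcal{D}}_0\cap\mathcal{S}$ have identical optimal solutions.

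First I would introduce the half-separation $L\triangleq\tfrac{1}{2}\|\mathbf{u}_2-\mathbf{u}_1\|_2$, so that $\mathbf{o}-\mathbf{u}_1=L\mathbf{e}_2$ and $\mathbf{o}-\mathbf{u}_2=-L\mathbf{e}_2$. For any $\mathbf{p}\in\mathcal{S}$, the vector $\mathbf{p}-\mathbf{o}$ lies in $\mathcal{S}$ and is therefore orthogonal to $\mathbf{e}_2$ (this is exactly the defining equation of the perpendicular bisector plane). Writing $\mathbf{p}-\mathbf{u}_i=(\mathbf{p}-\mathbf{o})+(\mathbf{o}-\mathbf{u}_i)$ and applying the Pythagorean theorem, the cross term vanishes and we obtain $d_i(\mathbf{p})^2=\|\mathbf{p}-\mathbf{o}\|_2^2+L^2=r(\mathbf{p})^2+L^2$ for both $i=1,2$; in particular this re-confirms $d_1=d_2$ on $\mathcal{S}$.

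Substituting into the objective then gives $F(\mathbf{p})=\min\{f(d_1(\mathbf{p})),f(d_2(\mathbf{p}))\}=f\bigl(\sqrt{r(\mathbf{p})^2+L^2}\bigr)$ for every $\mathbf{p}\in\mathcal{S}$. Since $r\mapsto\sqrt{r^2+L^2}$ is strictly increasing on $[0,\infty)$ and $f$ is decreasing, the composite map $r\mapsto f(\sqrt{r^2+L^2})$ is decreasing in $r$. Hence, over the common feasible set $\tilde{\mathcal{D}}_0\cap\mathcal{S}$, maximizing $F(\mathbf{p})$ is equivalent to minimizing $r(\mathbf{p})$: any maximizer $\hat{\mathbf{p}}$ of $F$ is a minimizer of $r$ (and conversely), which is the assertion of the lemma.

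The argument is essentially routine; the only step that requires care is the monotonicity reduction. If $f$ were merely non-increasing rather than strictly decreasing, the two optimal sets could differ, so I would either invoke strict monotonicity of $f$ (consistent with how the value functions are used in Section~II) or, alternatively, note that the strictly increasing inner map $r\mapsto\sqrt{r^2+L^2}$ already forces the minimum-radius feasible point to lie among the maximizers of $F$, which suffices for the search procedure that follows. A minor technical caveat, orthogonal to the equivalence itself, is that existence of the optimizer over $\tilde{\mathcal{D}}_0\cap\mathcal{S}$ is taken for granted (nonemptiness and appropriate closedness of the relevant region).
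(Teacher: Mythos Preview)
Your proof is correct and follows essentially the same approach as the paper: use $d_1(\mathbf{p})=d_2(\mathbf{p})$ on $\mathcal{S}$, apply the Pythagorean relation $d_i(\mathbf{p})=\sqrt{r(\mathbf{p})^2+\|\mathbf{u}_i-\mathbf{o}\|_2^2}$, and conclude via the monotonicity of $f$. The paper's version is terser and does not dwell on the strict-versus-nonincreasing distinction or existence, but the argument is the same; note only that the paper reserves $L$ for the full inter-user distance $\|\mathbf{u}_2-\mathbf{u}_1\|_2$, so your half-separation symbol would clash with later notation.
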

\begin{proof}
First, note that $d_{1}(\mathbf{p})=d_{2}(\mathbf{p})$ on the middle
perpendicular plane. Second, since $f(d)$ is assumed to be decreasing
in $d$, $F(\mathbf{p})=\text{min}\left\{ f(d_{1}(\mathbf{p})),f(d_{2}(\mathbf{p}))\right\} $
is decreasing in $d_{1}(\mathbf{p})$. Therefore, maximizing $F(\mathbf{p})$
is equivalent to minimizing $d_{1}(\mathbf{p})$. Finally, since $d_{1}(\mathbf{p})=\|\mathbf{p}-\mathbf{u}_{1}\|_{2}=\sqrt{r(\mathbf{p})^{2}+\|\mathbf{u}_{1}-\mathbf{o}\|_{2}^{2}}$,
maximizing $F(\mathbf{p})$ is equivalent to minimizing $r(\mathbf{p})$.
\end{proof}
Lemma~\ref{lem:minimize-the-radius} suggests that the optimal position
on the middle perpendicular plane $\mathcal{S}$ that solves $\mathscr{P}'$
is the closest double-LOS position to the midpoint $\mathbf{o}$.
\begin{figure}
\begin{centering}
\includegraphics[width=0.5\columnwidth]{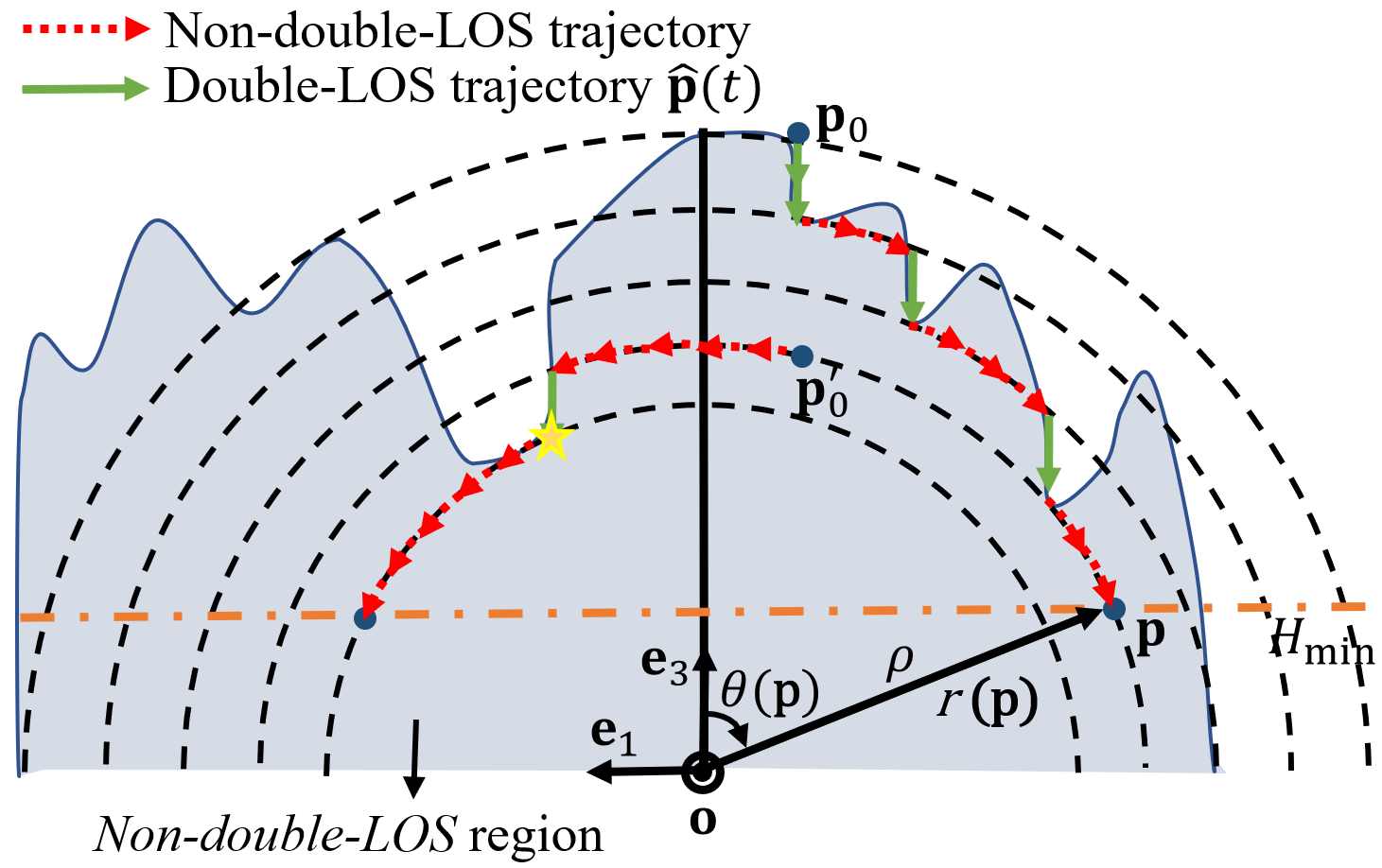}
\par\end{centering}
\caption{Search trajectory on the middle perpendicular plane from the perspective
of $\mathbf{u}_{2}$: flying straight down and along the circle makes
$r(\mathbf{p})$ gradually decrease, and correspondingly, the objective
value $F(\mathbf{p})$ increases.}

\label{fig:symmetric_trajectory}
\end{figure}

\subsection{Search Algorithm}

Using the property of the double-LOS pattern as summarized in Lemma
\ref{lem:Double-LOS-structure-on-S}, the optimal position on the
middle perpendicular plane can be efficiently found following a search
trajectory that starts from a double-LOS initial position and repeats
the following two steps:
\begin{itemize}
\item \textbf{Search downward} whenever the \ac{uav} is at double-LOS;
\item \textbf{Search along the circle} with a fixed radius to the midpoint
$\mathbf{o}$ whenever the \ac{uav} is at non-double-LOS.
\end{itemize}

We now specify the technical details of the above search strategy
using a polar coordinate system $(\rho,\theta)$ defined on the middle
perpendicular plane $\mathcal{S}$. Recall that $\mathbf{e}_{1}$
is a horizontal basis vector perpendicular to $\mathbf{u}_{2}-\mathbf{u}_{1}$,
and $\mathbf{e}_{3}$ is a vertical basis vector pointing upward.
Using the midpoint $\mathbf{o}$ as the origin, the deviation angle
$\theta(\mathbf{p})$ of a position $\mathbf{p}$ \ac{wrt} the direction
$\mathbf{e}_{3}$ shown in Fig.~\ref{fig:symmetric_trajectory} is
computed as
\begin{align}
\theta(\mathbf{p}) & =\text{\text{sign}}(-\mathbf{p}^{\text{T}}\mathbf{e}_{1})\arccos(\mathbf{p}^{\text{T}}\mathbf{e}_{3}/\|\mathbf{p}\|_{2})
\end{align}
where $\text{sign}(-\mathbf{p}^{\text{T}}\mathbf{e}_{1})=1$, if $-\mathbf{p}^{\text{T}}\mathbf{e}_{1}>0$,
indicating $\mathbf{p}$ on the right quadrant, and $\text{sign}(-\mathbf{p}^{\text{T}}\mathbf{e}_{1})=-1$,
if $-\mathbf{p}^{\text{T}}\mathbf{e}_{1}<0$, indicating $\mathbf{p}$
on the left quadrant as shown in Fig.~\ref{fig:symmetric_trajectory}.
As a result, any position $\mathbf{p}$ on the perpendicular plane
$\mathcal{S}$ can be expressed using the polar coordinate $(\rho,\theta(\mathbf{p}))$,
where $\rho=r(\mathbf{p})$ is defined in (\ref{eq:radius_rp}).

Denote the search position at time $t$ as $\mathbf{p}(t)$. Then,
when $\mathbf{p}(t)$ is in non-double-LOS region, the search over
an arc with a fixed radius can be specified by the dynamic equation
$\rho\mathrm{d}\theta=\pm v\mathsf{\mathrm{d}}t$ in the polar coordinate
system $(\rho,\theta)$, where $\rho=r(\mathbf{p}(t))$ and $v$ is
the search speed. The detailed algorithm is summarized in Algorithm~\ref{Alg:symmetric_algorithm},
and an example of search trajectory is shown in Fig.~\ref{fig:symmetric_trajectory}.

Note that Algorithm~\ref{Alg:symmetric_algorithm} requires a double-LOS
initial position $\mathbf{p}_{0}$. Such a position can be found by
increasing the altitude of $\mathbf{p}$ until $\mathbf{p\in\tilde{\mathcal{D}}_{0}}$.
This is because for two outdoor users, double-LOS can be guaranteed
at a high enough altitude for $\mathbf{p}$.
\begin{algorithm}[htbp]
\caption{Dynamic Search Trajectory on the Middle-perpendicular Plane}

\textbf{Input:} Initial double-LOS position $\mathbf{p}_{0}$, and
search speed $v$.

\textbf{Objective: }Design the search trajectory $\mathbf{p}(t)$
and record the double-\ac{los} trajectory $\hat{\mathbf{p}}(t)$.
\begin{enumerate}
\item Initialization: Set $\mathbf{\mathbf{p}}(0)=\mathbf{p}_{0}$ and $\hat{\mathbf{p}}(0)=\mathbf{p}_{0}$.
\item \textbf{\label{enu:Clockwise-search:}Clockwise search:}
\begin{enumerate}
\item \label{enu:Clockwise_if}If {$\mathbf{p}(t)\in\tilde{\mathcal{D}}_{0}$}
then
\begin{enumerate}
\item Set $\hat{\mathbf{p}}(t)=\mathbf{\mathbf{p}}(t)$.
\item Decrease the altitude of $\mathbf{\mathbf{p}}(t)$ according to $\mathrm{d}p_{3}(t)=-v\mathrm{d}t$.
\end{enumerate}
\item \label{enu:Clockwise_else}Else
\begin{enumerate}
\item $\hat{\mathbf{p}}(t)$ remains unchanged, {\em i.e.}, $\mathrm{d}\hat{\mathbf{p}}(t)=0$.
\item \label{enu:Clockwise_else_circle}Move along the circle according
to the dynamical equation: $\rho\textrm{d}\theta=v\textrm{d}t$ expressed
in the polar coordinate system $(\rho,\theta)$, where $\rho=r(\mathbf{p}(t))$.
\end{enumerate}
\item Repeat Step \ref{enu:Clockwise_if} and \ref{enu:Clockwise_else}
until the altitude of $\mathbf{p}(t)$ drops to $H_{\text{min}}$.
\end{enumerate}
\item Define a second initial point $\mathbf{p}_{0}'$ below $\mathbf{p}_{0}$
that satisfies $r(\text{\ensuremath{\mathbf{p}_{0}'}})=r(\hat{\mathbf{p}}(t))$
and $(\mathbf{p}_{0}-\mathbf{p}_{0}')/\|\mathbf{p}-\mathbf{p}_{0}'\|_{2}=\mathbf{e}_{3}$.
Set $\mathbf{p}(t)=\mathbf{p}_{0}'$ and $\hat{\mathbf{p}}(t)$ remains
unchanged.
\item \textbf{Anticlockwise search: }Repeat Step \ref{enu:Clockwise-search:},
but replace the dynamical equation in Step \ref{enu:Clockwise_else_circle}
as $\rho\textrm{d}\theta=-v\textrm{d}t$, until the altitude of $\mathbf{p}(t)$
again drops to $H_{\text{min}}$.
\end{enumerate}
\label{Alg:symmetric_algorithm}
\end{algorithm}

\subsection{Optimality and Complexity of the Search on $\mathcal{S}$}

\label{subsec:analysis}

It turns out that Algorithm~\ref{Alg:symmetric_algorithm} finds
the globally optimal solution to $\mathscr{P}'$ despite that the
double-LOS region $\tilde{\mathcal{D}}_{0}$ can be arbitrarily complicated.
\begin{thm}[Global optimality in 2D]
\label{thm:Algorithm-1-global-optimal-on-S} The double-\ac{los}
trajectory $\hat{\mathbf{p}}(t)$ of Algorithm~\ref{Alg:symmetric_algorithm}
terminates at the globally optimal solution to $\mathscr{P}'$.
\end{thm}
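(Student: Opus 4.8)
The plan is to use Lemma~\ref{lem:minimize-the-radius} to reduce the claim to the following: Algorithm~\ref{Alg:symmetric_algorithm} outputs a point of $A := \tilde{\mathcal{D}}_0 \cap \mathcal{S}$ of minimum radius $r(\cdot)=\|\cdot-\mathbf{o}\|_2$ to $\mathbf{o}$. I would work in $\mathcal{S}$ with the Cartesian coordinates $(x,z)=(\mathbf{p}^{\mathrm{T}}\mathbf{e}_1,\mathbf{p}^{\mathrm{T}}\mathbf{e}_3)$, so that $\mathbf{o}=(0,0)$, $r(\mathbf{p})=\sqrt{x^2+z^2}$, every feasible point has $z\ge H_{\min}>0$, and --- assuming each $\mathcal{D}_0^{(i)}$ closed, otherwise passing to closures --- the set $A$ is closed, nonempty (double-LOS holds at high altitude), and, by Lemma~\ref{lem:Double-LOS-structure-on-S}, \emph{upward closed}: $(x,z)\in A$ and $z'\ge z$ imply $(x,z')\in A$. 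Since $r$ is coercive, $\min_A r$ is attained, say at $\mathbf{q}^\ast=(x^\ast,z^\ast)$ with $\rho^\ast:=r(\mathbf{q}^\ast)$. Because the algorithm's initial point is taken above $\mathbf{o}$ and the clockwise and anticlockwise passes are mirror images of each other, it suffices to show the clockwise pass records radius $\rho^\ast$ whenever $x^\ast\le 0$ (the case $x^\ast\ge 0$ being handled identically by the anticlockwise pass).

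Two structural facts about the clockwise pass are established first. (a) Whenever the UAV is on an arc at time $t$, its radius equals the best recorded radius $\bar\rho(t):=r(\hat{\mathbf{p}}(t))$: an arc is entered only when a descent ends, i.e.\ on $\partial A$, where $\hat{\mathbf{p}}$ is updated to the current point, and during the arc neither the radius nor $\hat{\mathbf{p}}$ changes; hence $\bar\rho(t)$ is nonincreasing, and $\bar\rho(t)\ge\rho^\ast$ for all $t$ since $\hat{\mathbf{p}}(t)\in A$ always. (b) The coordinate $x(t)$ is continuous and nonincreasing along the pass --- constant on descents and strictly decreasing on arcs, where $x=-\rho\sin\theta$ with $\rho$ fixed and $\theta$ increasing while $0\le\theta<\pi/2$ --- running from $x(0)=0$ down to $x_e:=x(t_{\mathrm{end}})\le 0$; and the pass ends with the UAV at altitude $H_{\min}$, so $\bar\rho(t_{\mathrm{end}})=\sqrt{x_e^2+H_{\min}^2}$ (the last recorded double-LOS point is either $(x_e,H_{\min})$ itself, or it lies on the final arc, which has that radius).

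For the core step, first note $x^\ast\ge x_e$: otherwise $\rho^{\ast 2}=x^{\ast 2}+z^{\ast 2}\ge x^{\ast 2}+H_{\min}^2>x_e^2+H_{\min}^2=\bar\rho(t_{\mathrm{end}})^2$, contradicting optimality of $\mathbf{q}^\ast$ since $\hat{\mathbf{p}}(t_{\mathrm{end}})\in A$ has radius $\bar\rho(t_{\mathrm{end}})$. Thus $x(\cdot)$ passes through $x^\ast$; let $\tau$ be the first such time. A short case check (if $\tau$ were interior to a descent, $x$ would equal $x^\ast$ already before $\tau$) shows $\tau$ is the endpoint of an arc, of radius $\rho_\tau=\bar\rho(\tau)\ge\rho^\ast$, so at time $\tau$ the UAV sits at $(x^\ast,z(\tau))$ with $z(\tau)=\sqrt{\rho_\tau^2-x^{\ast 2}}\ge\sqrt{\rho^{\ast 2}-x^{\ast 2}}=z^\ast$. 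By upward closedness and $\mathbf{q}^\ast\in A$, the point $(x^\ast,z(\tau))\in A$, so the algorithm switches to a descent along the vertical line $x=x^\ast$ starting at altitude at least $z^\ast$; upward closedness also puts $(x^\ast,z)$ in $A$ for every $z\in[z^\ast,z(\tau)]$, so this descent passes through $\mathbf{q}^\ast$ and updates $\hat{\mathbf{p}}$ to a point of radius $\le\rho^\ast$. With fact (a) this forces $\bar\rho(t_{\mathrm{end}})=\rho^\ast$; as the output lies in $A$ and attains $\min_A r$, Lemma~\ref{lem:minimize-the-radius} identifies it as a global optimizer of $\mathscr{P}'$. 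The anticlockwise pass, started from $\mathbf{p}_0'$ directly below $\mathbf{p}_0$ at radius $\bar\rho(t_{\mathrm{end}})\ge\rho^\ast$, has all the same properties with $x$ nondecreasing, so the case $x^\ast\ge 0$ follows verbatim.

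The main obstacle is the mismatch between the geometry the algorithm uses (vertical descents, constant-radius arcs) and the geometry of the structural assumption in Lemma~\ref{lem:Double-LOS-structure-on-S} (vertical invariance): an arc moves \emph{radially}, so one cannot argue directly that a non-double-LOS arc ``clears'' everything of smaller radius. The crossing argument above is exactly what bridges this gap --- it pins down the moment the trajectory crosses the vertical line through $\mathbf{q}^\ast$ and shows the UAV is then at altitude at least $z^\ast$, which is precisely the configuration in which upward closedness can be invoked to place first that crossing point and then $\mathbf{q}^\ast$ on the subsequent descent. The rest --- justifying facts (a) and (b), in particular that every arc radius equals the running best (so $\rho_\tau\ge\rho^\ast$) and that each pass reaches altitude $H_{\min}$ in finite time with the stated terminal radius --- I expect to be routine, but it needs to be written carefully because a pass may alternate descents and arcs many times before it terminates.
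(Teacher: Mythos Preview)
Your proposal is correct and follows essentially the same line as the paper's proof: both arguments hinge on showing that the search trajectory must cross the vertical line through the optimum $\hat{\mathbf{p}}$ at an altitude at least $z^\ast$, whereupon Lemma~\ref{lem:Double-LOS-structure-on-S} (upward closedness) forces the subsequent descent to reach $\hat{\mathbf{p}}$. The paper packages this as a two-case contradiction (``$\hat{\mathbf{p}}$ is embraced by the search trajectory, which implies that $\mathbf{p}(t)$ must pass a point perpendicularly above $\hat{\mathbf{p}}$''), whereas you make the crossing argument explicit via the monotonicity of $x(t)$ and the identity $\rho_\tau=\bar\rho(\tau)\ge\rho^\ast$ for arc radii --- this is the same idea, spelled out more carefully.

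One small inaccuracy to fix: you assume the initial point $\mathbf{p}_0$ lies directly above $\mathbf{o}$ (so $x(0)=0$), but Algorithm~\ref{Alg:symmetric_algorithm} does not require this. The remedy is immediate: let $x_0:=\mathbf{p}_0^{\mathrm{T}}\mathbf{e}_1$, note that both passes start from the column $x=x_0$ (the anticlockwise pass from $\mathbf{p}_0'$ directly below $\mathbf{p}_0$), and split according to $x^\ast\le x_0$ versus $x^\ast\ge x_0$ instead of the sign of $x^\ast$. Everything else in your facts (a)--(b) and the crossing argument goes through verbatim.
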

\begin{proof}
See Appendix \ref{sec:Proof-of-Theorem1}.
\end{proof}

Theorem 1 asserts that the global optimality on the 2D middle perpendicular
plane can be guaranteed by a continuous search trajectory which can
be adaptively determined by one of the following two dynamical equations:
$\mathrm{d}p_{3}(t)=-v\textrm{d}t$ and $\rho\textrm{d}\theta=\pm v\textrm{d}t$,
according to the double-\ac{los} status discovered along the trajectory.

In addition, the length of the search trajectory is upper bounded
as shown in the following proposition.
\begin{prop}[Maximum trajectory length]
\label{thm:maximum_trajectory_length} Denote $R_{0}=r(\mathbf{p}_{0})$
as the radius of the initial double-LOS point $\mathbf{p}_{0}$, and
$H_{0}$ is the altitude of $\mathbf{p}_{0}$. The length of the search
trajectory of Algorithm~\ref{Alg:symmetric_algorithm} is upper bounded
by $2(H_{0}-H_{\text{min}})+\pi R_{0}$.
\end{prop}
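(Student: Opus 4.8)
The plan is to split the trajectory produced by Algorithm~\ref{Alg:symmetric_algorithm} into its clockwise phase (Step~2) and its anticlockwise phase (Step~4), bound the length of each by $(H_{0}-H_{\text{min}})+\tfrac{\pi}{2}R_{0}$, and add the two. Within a single phase the trajectory is a concatenation of \emph{descent segments}, on which $\mathrm{d}p_{3}(t)=-v\,\mathrm{d}t$, and \emph{arc segments}, on which $\rho\,\mathrm{d}\theta=\pm v\,\mathrm{d}t$ at fixed radius $\rho=r(\mathbf{p}(t))$. Writing the phase length as $|A|+|B|$, where $A$ collects the descent segments and $B$ the arc segments (measured in arc length $s=vt$), I will bound $|A|$ by $H_{0}-H_{\text{min}}$ and $|B|$ by $\tfrac{\pi}{2}R_{0}$.

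The core geometric fact to establish first is that a vertical-descent move on $\mathcal{S}$ simultaneously (i) lowers the altitude $p_{3}$, (ii) strictly decreases $r(\mathbf{p})$, and (iii) strictly increases $|\theta(\mathbf{p})|$. This holds because, taking $\mathbf{o}$ (at ground level, so at altitude $0\le H_{\text{min}}$) as the origin of $\mathcal{S}$, a descent keeps the horizontal offset of $\mathbf{p}$ from $\mathbf{o}$ fixed while strictly shrinking the positive vertical offset $p_{3}$. Since the clockwise phase starts at the point above $\mathbf{o}$ (so $\theta=0$ there) and on every arc segment $\theta$ also moves away from $0$, these three facts give that, over the \emph{entire} clockwise phase, $\theta(\mathbf{p}(t))$ is non-decreasing, $r(\mathbf{p}(t))$ is non-increasing, and therefore $p_{3}(t)=r(\mathbf{p}(t))\cos\theta(\mathbf{p}(t))$ is non-increasing; moreover $\theta$ stays in $[0,\pi/2)$ because $p_{3}\ge H_{\text{min}}>0$ forces $\cos\theta>0$.

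With $p_{3}(t)$ non-increasing over the phase, the total decrease of $p_{3}$ accumulated on the descent segments is at most the total decrease over the phase, i.e.\ $p_{3}(\text{start})-p_{3}(\text{end})=H_{0}-H_{\text{min}}$ (the clockwise phase starts at altitude $H_{0}$ and stops when the altitude reaches $H_{\text{min}}$); since $|A|=\int_{A}\mathrm{d}s=\int_{A}(-\mathrm{d}p_{3})$ equals exactly that accumulated decrease, $|A|\le H_{0}-H_{\text{min}}$. For $|B|$, on arc segments $\mathrm{d}s=r(\mathbf{p})\,\mathrm{d}\theta$ and $r(\mathbf{p}(t))\le R_{0}$ throughout the phase (monotone and started at $R_{0}$), so $|B|=\int_{B}r(\mathbf{p})\,\mathrm{d}\theta\le R_{0}\int_{B}\mathrm{d}\theta\le R_{0}\,(\theta(\text{end})-\theta(\text{start}))<\tfrac{\pi}{2}R_{0}$, using that $\theta$ is monotone and confined to $[0,\pi/2)$. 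Hence the clockwise phase has length at most $(H_{0}-H_{\text{min}})+\tfrac{\pi}{2}R_{0}$; the anticlockwise phase is bounded identically by symmetry (there $\theta$ decreases monotonically from $0$ inside $(-\pi/2,0]$, and it restarts from $\mathbf{p}_{0}'$, which lies on the vertical through $\mathbf{o}$ at radius $r(\hat{\mathbf{p}})\le R_{0}$ and hence at altitude $\le H_{0}$). Adding the two phase bounds gives $2(H_{0}-H_{\text{min}})+\pi R_{0}$.

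The main obstacle is the second paragraph: the monotonicity of $p_{3}$, $r$, and $\theta$ must hold \emph{across} the switches between descent and arc segments, not merely within a single segment, which is precisely why one first checks that a descent move pushes all three quantities in the same direction that an arc move does; the $\pi/2$ angular budget per phase then follows only because both phases are anchored at $\theta=0$, i.e.\ because the initialization (raising the altitude above $\mathbf{o}$) and the repositioning in Step~3 place the start points on the vertical line through $\mathbf{o}$. A minor point worth a remark is that the instantaneous repositioning in Step~3 is a state reset rather than a traversed path, so it contributes nothing to the measured trajectory length.
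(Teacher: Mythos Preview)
Your argument is the paper's proof fleshed out: the paper also splits the trajectory into straight-down pieces (total $\le 2(H_0-H_{\min})$) and arc pieces (radius $\le R_0$, hence total $\le \pi R_0$), but does so in three sentences without any of your monotonicity checks on $r$, $\theta$, and $p_3$.

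One caveat worth naming: your $\pi/2$-per-phase angular budget hinges on each phase starting at $\theta=0$, i.e., on $\mathbf{p}_0$ (and hence $\mathbf{p}_0'$, which Step~3 places directly \emph{below $\mathbf{p}_0$}, not below $\mathbf{o}$) lying on the vertical through $\mathbf{o}$. Algorithm~1 does not impose this, and the proposition deliberately keeps $R_0$ and $H_0$ distinct; if $\theta_0\ne 0$, a descent in the ``wrong'' half-plane moves $\theta$ opposite to the subsequent arc, so neither $\theta$ nor $p_3$ is globally monotone across a phase, and your bounds $|A|\le H_0-H_{\min}$ and $\int_B d\theta<\pi/2$ no longer follow. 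The paper's own proof glosses over exactly the same point---it never says why the altitude cannot rise on an arc or why the total angular sweep is at most $\pi$---so your write-up is at least as complete as the original.
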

\begin{proof}
When the \ac{uav} is in double-LOS region, it searches downwards.
The total length of straight down steps is upper bounded by $2(H_{0}-H_{\text{min}})$.
When the \ac{uav} is in non-double-LOS region, it searches along
a circle whose radius is upper bounded by $R_{0}$, and correspondingly,
the total length of these arc-shape steps are upper bounded by $\pi R_{0}$.
Therefore, the upper bound of the total length of the trajectory is
given by $2(H_{0}-H_{\text{min}})+\pi R_{0}$.
\end{proof}
Two observations are made from Theorem~\ref{thm:Algorithm-1-global-optimal-on-S}
and Proposition~\ref{thm:maximum_trajectory_length}. First, to guarantee
a globally optimal solution on the 2D middle perpendicular plane $\mathcal{S}$,
it only requires a search complexity to be a linear function of the
initial distance $R_{0}$ and the initial height $H_{0}$, regardless
of the actual structure of the double-\ac{los} region $\tilde{D}_{0}$.
This is due to the fact that Algorithm~\ref{Alg:symmetric_algorithm}
has exploited the upward invariant property of the double-LOS pattern
as summarized in Lemma \ref{lem:Double-LOS-structure-on-S}.

Second, the fact that the globally optimal solution in 2D is theoretically
guaranteed is also due to the continuous search trajectory where one
needs to determine the double-\ac{los} status for each $\mathbf{p}(t)$
with an infinitesimal step size $\mathrm{d}t$ as described in Algorithm~\ref{Alg:symmetric_algorithm}.
Nevertheless, in a more practical setting in our numerical experiments,
a step size of 5 meters is adopted and the global optimality in 2D
is still numerically observed as shown in Section \ref{sec:Numerical-Results}.

\section{Search for the Optimal Solution in 3D}

In this section, we aim at searching for the globally optimal solution
in 3D for problem $\mathscr{P}$ by exploring a bounded 2D area.

Denote the set of permissible \ac{uav} positions as $\mathcal{P}$.
Denote the critical distance $d_{0}(\mathbf{p})=\max\left\{ d_{1}(\mathbf{p}),d_{2}(\mathbf{p})\right\} $
as the longer distance from the \ac{uav} position $\mathbf{p}$ to
the two users. Given an initial double-LOS point $\mathbf{p}_{0}$
in $\mathcal{P}$, define a region
\[
\mathcal{B}(\mathbf{p}_{0})=\{\text{\ensuremath{\mathbf{p}\in\mathcal{P}}:}d_{0}(\mathbf{p})\text{\ensuremath{\leq}}d_{0}(\mathbf{p}_{0})\}
\]
which geometrically appears as a cap.

It follows that the globally optimal solution $\mathbf{p}^{*}$ to
$\mathscr{P}$ must lie in the cap $\mathcal{B}(\mathbf{p}_{0})$.
To see this, since $f(d)$ is decreasing in $d$, the objective function
$F(\mathbf{p})=\text{min}\left\{ f(d_{1}(\mathbf{p})),f(d_{2}(\mathbf{p}))\right\} $
must also be decreasing in the critical distance $d_{0}(\mathbf{p})$.
Since $\mathbf{p}_{0}$ is a feasible solution and, by definition,
any point $\mathbf{p}\notin\mathcal{B}(\mathbf{p}_{0})$ has a critical
distance $d_{0}(\mathbf{p})$ greater than $d_{0}(\mathbf{p}_{0})$,
implying that the optimal solution cannot be outside $\mathcal{B}(\mathbf{p}_{0})$.

Next, we will narrow down the search area from the 3D cap $\mathcal{B}(\mathbf{p}_{0})$
to bounded 2D areas by algebraically deriving the solution under several
typical \ac{los} patterns.

\subsection{A Compact Search Area on the Perpendicular Plane}

It turns out that it suffices to search on the middle perpendicular
plane $\ensuremath{\mathcal{S}}$ to reveal the \ac{los} status of
the majority part of the cap $\mathcal{B}(\mathbf{p}_{0})$. The key
idea is to map the \ac{los} status from a point in $\mathcal{B}(\mathbf{p}_{0})$
to a point on $\mathcal{S}$ using the colinear invariant property
of the \ac{los} regions $\mathcal{D}_{0}^{(i)}$ for each user $i$.
Specifically, given a point $\mathbf{p}\in\mathcal{B}(\mathbf{p}_{0})$,
to investigate the \ac{los} status for the $i$th user, find a point
$\mathbf{p}_{i}$ on the middle perpendicular plane $\mathcal{S}$,
such that the three points $\mathbf{u}_{i}$, $\mathbf{p}_{i}$, and
$\mathbf{p}$ are colinear as illustrated in Fig~\ref{fig:region_B}.
As a result, according to the colinear invariant property, $\mathbf{p}$
and $\mathbf{p}_{i}$ share the same LOS status for user $i$.
\begin{figure}
\centering{}%
\begin{minipage}[t]{0.49\columnwidth}%
\begin{center}
\includegraphics[width=1\columnwidth]{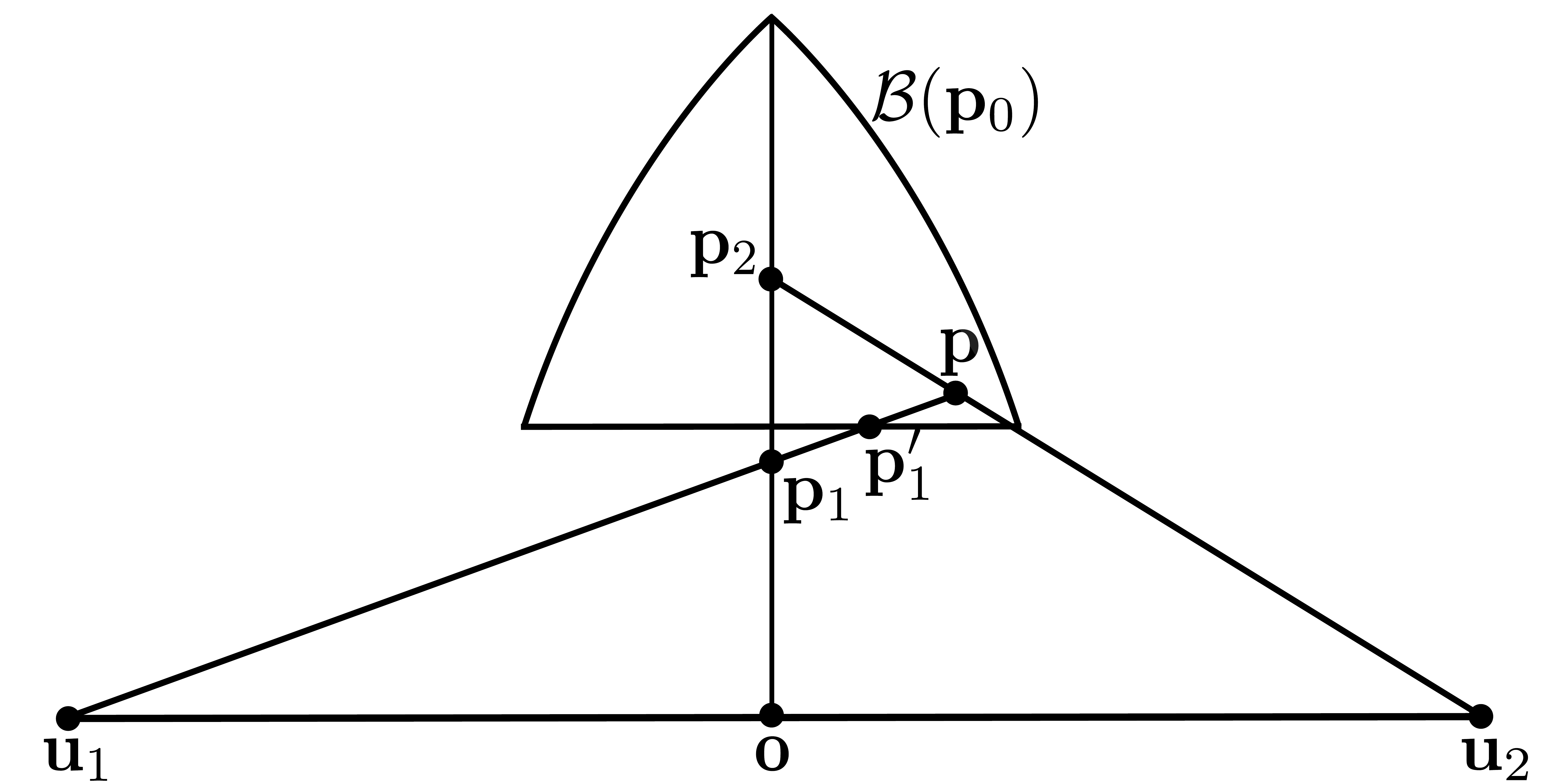}\caption{\label{fig:region_B}Region $\mathcal{B}(\mathbf{p}_{0})$: The \ac{los}
status to $\mathbf{u}_{1}$ of a point $\mathbf{p}\in\mathcal{B}(\mathbf{p}_{0})$
can be determined by that of $\mathbf{p}_{1}\in\mathcal{S}$ or $\mathbf{p}_{1}'\in\mathcal{H}$.}
\par\end{center}%
\end{minipage}%
\begin{minipage}[t]{0.02\columnwidth}%
\includegraphics[width=1\columnwidth]{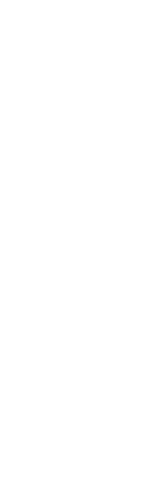}%
\end{minipage}%
\begin{minipage}[t]{0.49\columnwidth}%
\begin{center}
\begin{minipage}[t]{0.5\columnwidth}%
\includegraphics[width=1\columnwidth]{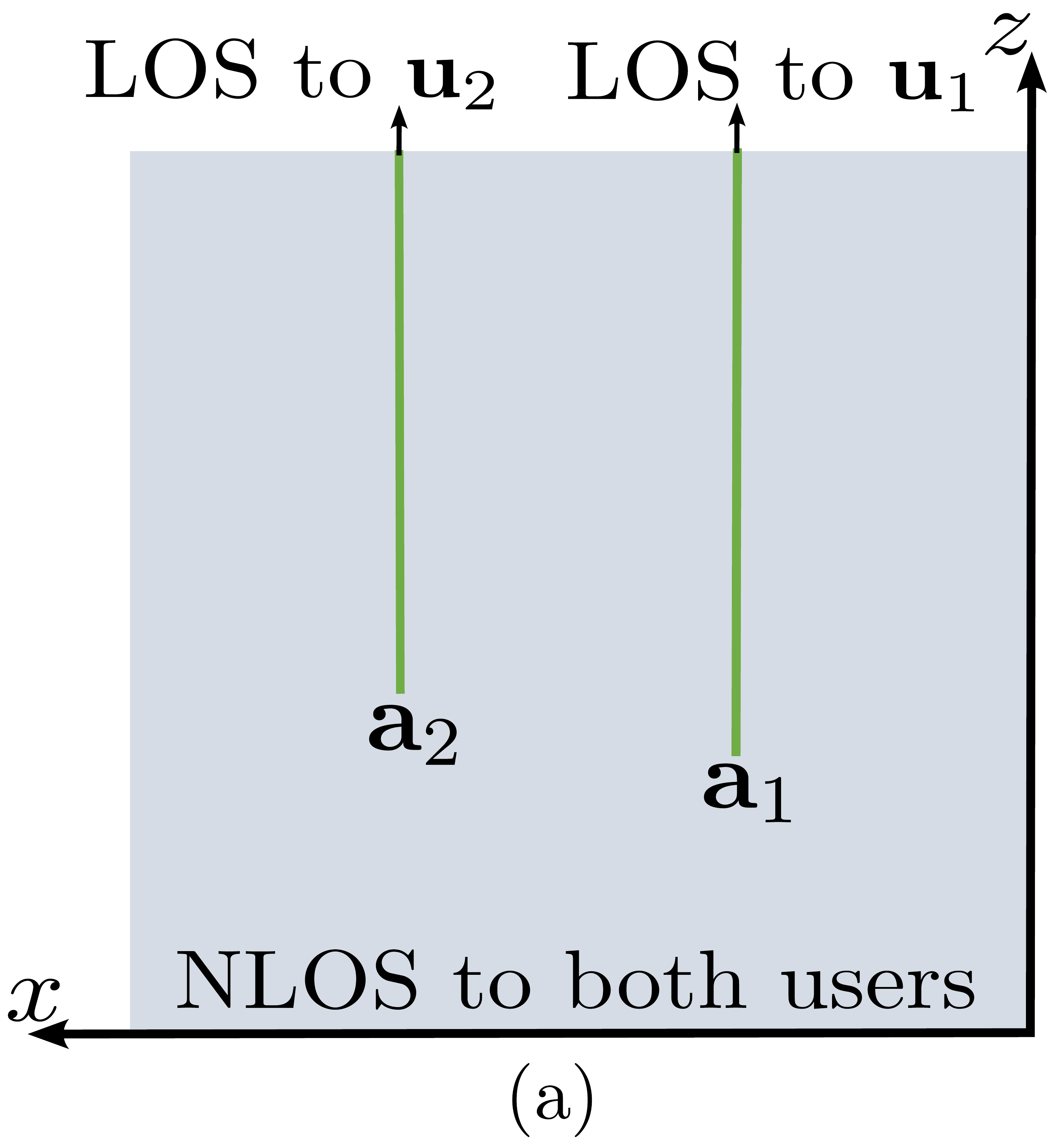}%
\end{minipage}%
\begin{minipage}[t]{0.5\columnwidth}%
\includegraphics[width=1\columnwidth]{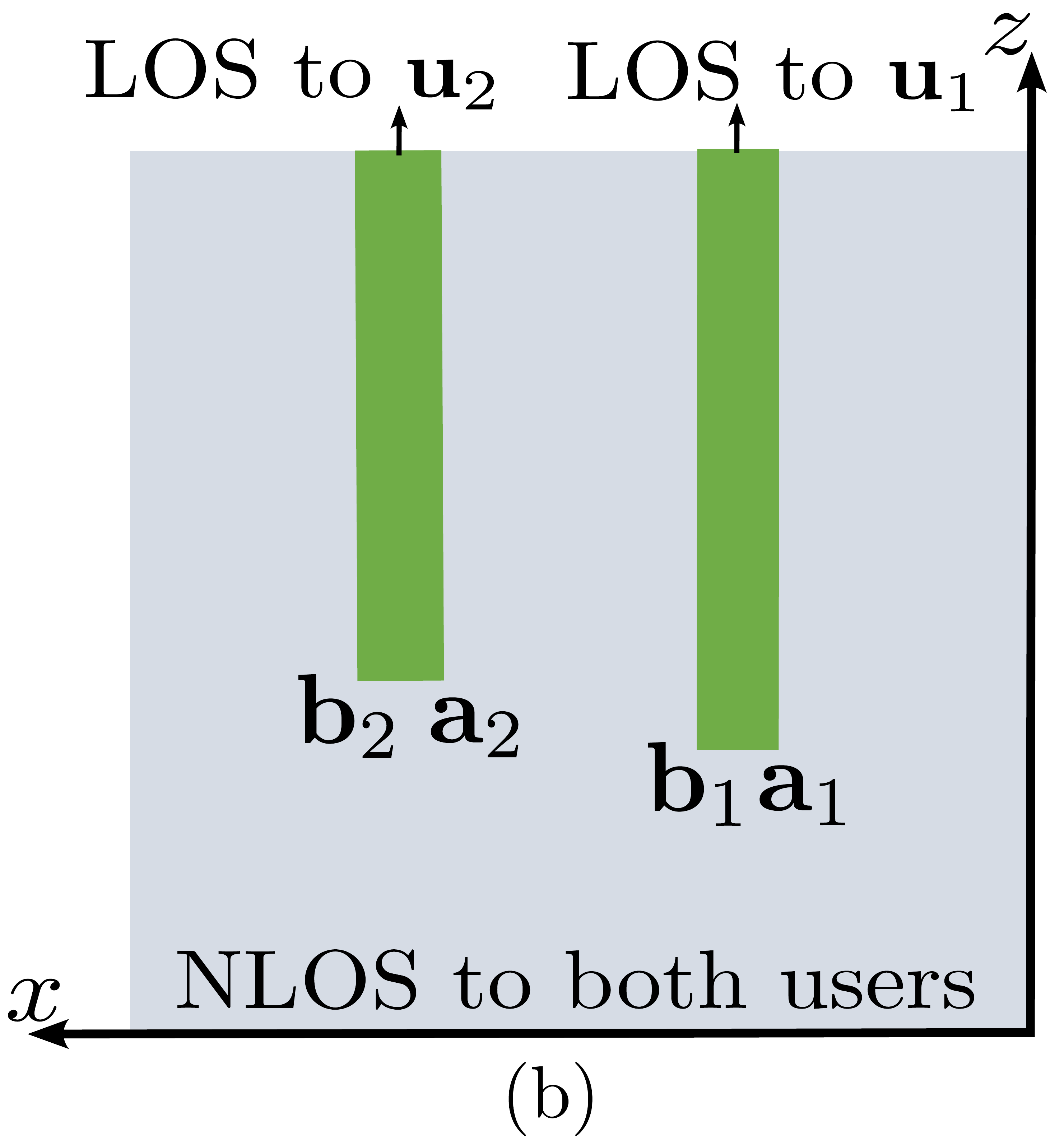}%
\end{minipage}\caption{\label{fig:doube-line-stripe} (a) double-ray LOS pattern; (b) double-stripe
LOS pattern}
\par\end{center}%
\end{minipage}
\end{figure}

However, it is still very challenging to determine the double-\ac{los}
status for $\mathbf{p}$, because one needs to visit two separate
locations, $\mathbf{p}_{1}$ and $\mathbf{p}_{2}$, as shown in Fig.
\ref{fig:region_B}, to determine the \ac{los} status for the two
users, respectively. Thus, the key is how to efficiently combine the
\ac{los} information for the two users along a simple search trajectory.

The first step that we tackle this issue is to develop the closed-form
expression of the globally optimal solution under the simplest \ac{los}
pattern.

1) Double-ray \ac{los} Pattern: Consider two points $\text{\ensuremath{\mathbf{a}_{1}}},\mathbf{a}_{2}\in\mathcal{S}$
on the middle perpendicular plane, where $\mathbf{a}_{1}$ and $\mathbf{a}_{2}$
are LOS positions of user $1$ and user $2$, respectively. According
to the upward invariant property of the LOS regions, the positions
perpendicularly above $\mathbf{a}_{1}$ and $\mathbf{a}_{2}$ are
also \ac{los} for user 1 and user $2$, respectively, as shown in
Fig. \ref{fig:doube-line-stripe}(a).

Suppose that the \ac{los} regions on the middle perpendicular plane
$\mathcal{D}_{0}^{(i)}\cap\mathcal{S}$ are given by the above double-ray
pattern, where the two rays do not overlap. It is clear that there
is no solution on the middle perpendicular plane, but there could
be a solution off the middle perpendicular plane, and the globally
optimal solution $\mathbf{p}^{*}$ to problem $\mathscr{P}$ can be
computed as follows.
\begin{prop}[Double-ray LOS Pattern]
 \label{thm:infer_given_two_positions} Suppose that the set of \ac{los}
positions $\mathcal{D}_{0}^{(1)}\cap\mathcal{S}$ of user $1$ is
a perpendicular ray with a lowest point $\mathbf{a}_{1}=(a_{11},a_{12},a_{13})$.
The set of \ac{los} positions $\mathcal{D}_{0}^{(2)}\cap\mathcal{S}$
of user $2$ is another perpendicular ray with a lowest point $\mathbf{a}_{2}=(a_{21},a_{22},a_{23})$.
If $a_{11}a_{21}>0$, then the globally optimal solution $\mathbf{\mathbf{p}^{*}}$
to $\mathscr{P}$ is given by
\begin{equation}
\mathbf{q}(\mathbf{a}_{1},\mathbf{a}_{2})=\begin{cases}
\frac{2a_{21}}{a_{11}+a_{21}}\mathbf{a}_{1}-\frac{L}{2}\mathbf{e}_{2}, & \text{if \text{ }}\frac{a_{13}}{a_{23}}>\frac{a_{11}}{a_{21}}\\
\frac{2a_{11}}{a_{11}+a_{21}}\mathbf{a}_{2}+\frac{L(a_{21}-2a_{11})}{2a_{11}}\mathbf{e}_{2}, & \text{otherwise}
\end{cases}\label{eq:solution_two_positions}
\end{equation}
where $\mathbf{e}_{2}=(\mathbf{u}_{2}-\mathbf{u}_{1})/\|\mathbf{u}_{2}-\mathbf{u}_{1}\|_{2}$,
and $L=\|\mathbf{u}_{2}-\mathbf{u}_{1}\|_{2}$.
\end{prop}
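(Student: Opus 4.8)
The plan is to reduce $\mathscr{P}$ to minimizing the critical distance $d_0(\mathbf p)=\max\{d_1(\mathbf p),d_2(\mathbf p)\}$ over $\tilde{\mathcal D}_0$ (legitimate because $f$ is decreasing, just as in the argument already given for the cap $\mathcal B(\mathbf p_0)$), and then to describe $\tilde{\mathcal D}_0$ explicitly under the double-ray pattern. Using the colinear invariance together with the assumption that $\mathcal D_0^{(i)}\cap\mathcal S$ is \emph{exactly} the perpendicular ray above $\mathbf a_i$, I would show that a position $\mathbf p$ is LOS to user $i$ if and only if the line through $\mathbf u_i$ and $\mathbf p$ meets $\mathcal S$ at a point lying on that ray (at or above $\mathbf a_i$). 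Writing $\mathbf u_1=-\tfrac L2\mathbf e_2$ and $\mathbf u_2=\tfrac L2\mathbf e_2$, the set of $\mathbf p$ whose $\mathbf u_i$-line pierces $\mathcal S$ on the vertical line through $\mathbf a_i$ is exactly the vertical plane $\Pi_i$ containing $\mathbf u_i$ and that vertical line; hence every double-LOS position lies on the single vertical line $\ell:=\Pi_1\cap\Pi_2$. (No feasible point lies on $\mathcal S$ itself, as already noted, since the two rays are disjoint; and the hypothesis $a_{11}a_{21}>0$ keeps the rays on the same side of the $\mathbf e_3$-axis, which is what makes $\Pi_1$ and $\Pi_2$ non-parallel and places $\ell$ in the region of interest.)

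Because both users are at ground level, along $\ell$ the horizontal offset to each $\mathbf u_i$ is fixed, so $d_i(\mathbf p)^2$, and therefore $d_0(\mathbf p)$, is strictly increasing in the altitude $p_3$; hence the optimum is the \emph{lowest} double-LOS point of $\ell$. Parametrizing $\ell$ by $p_3$, a short computation puts the piercing point of the $\mathbf u_i$-line with $\mathcal S$ at altitude $\tfrac{a_{11}+a_{21}}{2a_{21}}\,p_3$ for $i=1$ and $\tfrac{a_{11}+a_{21}}{2a_{11}}\,p_3$ for $i=2$, so requiring these to lie at or above $\mathbf a_1,\mathbf a_2$ gives the two constraints $p_3\ge\tfrac{2a_{21}}{a_{11}+a_{21}}a_{13}$ and $p_3\ge\tfrac{2a_{11}}{a_{11}+a_{21}}a_{23}$. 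Which one is active is decided by the sign of $a_{21}a_{13}-a_{11}a_{23}$, i.e.\ by whether $\tfrac{a_{13}}{a_{23}}$ exceeds $\tfrac{a_{11}}{a_{21}}$; substituting the corresponding minimal altitude back into the coordinates of $\ell$ (whose horizontal coordinates are $p_1=\tfrac{2a_{11}a_{21}}{a_{11}+a_{21}}$ and $p_2=\tfrac{(a_{21}-a_{11})L}{2(a_{11}+a_{21})}$) produces the two branches of $\mathbf q(\mathbf a_1,\mathbf a_2)$. To finish, I would verify directly that $\mathbf q$ is double-LOS — on the active side its $\mathbf u_i$-line meets $\mathcal S$ at $\mathbf a_i$, on the other side strictly above $\mathbf a_j$ — check that $\mathbf q$ has altitude at least $H_{\text{min}}$, and conclude global optimality: every double-LOS point lies on $\ell$, and $d_0$ increases along $\ell$, so $\mathbf q$ cannot be beaten.

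The main obstacle I anticipate is the first paragraph: proving the full equivalence that $\mathbf p$ is double-LOS precisely when both $\mathbf u_i$-lines pierce the corresponding rays, not merely that this is necessary. The forward direction rests on transferring LOS status onto $\mathcal S$ via colinear invariance together with the exactness of $\mathcal D_0^{(i)}\cap\mathcal S$; the converse needs care about which side of $\mathcal S$ a point of $\ell$ lies on relative to each user, since for the user on whose side $\mathbf p$ falls the piercing point is beyond $\mathbf p$, and one must argue feasibility from the ray-tracing geometry rather than directly from colinear invariance. After that, the algebra — intersecting $\Pi_1$ with $\Pi_2$, tracking the altitude scaling factors, and simplifying the minimizer to the stated vector form — is routine but requires sign care, as $a_{11}$ and $a_{21}$ may both be negative.
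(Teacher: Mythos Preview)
Your approach is essentially the paper's: both construct the vertical planes $\Pi_i$ (the paper's $\pi_i$) through $\mathbf u_i$ and the ray above $\mathbf a_i$, take their intersection line $\ell=l_{12}$, and identify the optimum as the lowest double-LOS point on that line; the paper argues non-existence of double-LOS points off $\ell$ by the same projection-to-$\mathcal S$ contradiction you sketch, and then selects between the two colinear images $\tilde{\mathbf a}_1,\tilde{\mathbf a}_2$ of $\mathbf a_1,\mathbf a_2$ on $\ell$ via upward invariance, which is exactly your comparison of the two altitude thresholds $\tfrac{2a_{21}}{a_{11}+a_{21}}a_{13}$ and $\tfrac{2a_{11}}{a_{11}+a_{21}}a_{23}$. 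The obstacle you flag is real and the paper's proof glosses over the same point: it invokes colinear invariance to conclude $\tilde{\mathbf a}_i\in\mathcal D_0^{(i)}$ for \emph{both} $i$, but the stated $\rho>1$ direction only covers the user on the far side of $\mathcal S$ from $\ell$, while for the near-side user one tacitly needs the converse (physically clear from ray-tracing, but not a formal consequence of the two axioms as written), so your caution there is well placed rather than a defect of your plan.
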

\begin{proof}
See Appendix \ref{sec:Proof-of-Theorem-double-LOS-line}.
\end{proof}

Naturally, any \ac{los} pattern on the search plane can be modeled
as a union of double-ray \ac{los} patterns parameterized by the endpoints
$\left(\mathbf{a}_{1},\mathbf{a}_{2}\right)$. Thus, we extend the
result to the case of double-stripe \ac{los} pattern as follows.

2) Double-stripe LOS pattern: Consider two \ac{los} vertical regions
with horizontal bottom line segments $\text{\ensuremath{\overline{A_{1}B_{1}}}}$,
and $\overline{A_{2}B_{2}}$. Denote the coordinates of the endpoints
of the line segment $\overline{A_{i}B_{i}}$ as $\mathbf{a}_{i}=(a_{i1},a_{i2},h_{i})$
and $\mathbf{b}_{i}=(b_{i1},b_{i2},h_{i})$, respectively. Without
loss of generality, suppose that $|a_{i1}|\leq|b_{i1}|$, $a_{i1}b_{i1}>0$,
and $h_{i}\geq H_{\text{min}}$ for $i\in\left\{ 1,2\right\} $. Note
that positions above $\text{\ensuremath{\overline{A_{i}B_{i}}}}$
are \ac{los} \ac{wrt} user $i$, according to the upward invariant
property, as shown in Fig.~\ref{fig:doube-line-stripe}(b).

The double-stripe \ac{los} pattern can be constructed as a union
of many double-ray \ac{los} patterns. As a result, if the \ac{los}
regions $\mathcal{D}_{0}^{(i)}\cap\mathcal{S}$ on the middle perpendicular
plane $\mathcal{S}$ appear as a double-stripe \ac{los} pattern,
Proposition \ref{thm:infer_given_two_positions} implies that problem
$\mathscr{P}$ can be equivalently reformulated as
\begin{equation}
\begin{aligned}\mathscr{P}'':\mathop{\mbox{maximize}}\limits _{\mathbf{p}} & \quad F(\mathbf{p})\\
\mathop{\mbox{subject to}} & \quad\mathbf{p}=\mathbf{q}(\mathbf{x}_{1},\mathbf{x}_{2})\\
 & \quad\min\left\{ a_{i1},b_{i1}\right\} \leq x_{i1}\leq\max\left\{ a_{i1},b_{i1}\right\} \\
 & \quad x_{i2}=a_{i2},x_{i3}=h_{i},\text{ for }i=1,2.
\end{aligned}
\label{eq:problem_double-stripe_pattern}
\end{equation}

It is found that problem $\mathscr{P}''$ has a closed-form expression
(\ref{eq:solution_problem-double-stripe}) as derived in Appendix
\ref{sec:The-Closed-form-Solution}. The fact that $\mathscr{P}''$
has a closed-form solution can be understood from the following two
aspects. First, the objective function $F(\mathbf{p})$ is monotonically
decreasing in $d_{0}(\mathbf{p})$, the longer distance from $\mathbf{p}$
to the two users. Thus, the objective is equivalent to minimizing
$d_{0}(\mathbf{p})$, a locally convex function of $\mathbf{p}$ in
the regions of $\left\{ \mathbf{p}\in\mathcal{P}:d_{1}(\mathbf{p})<d_{2}(\mathbf{p})\right\} $
or $\left\{ \mathbf{p}\in\mathcal{P}:d_{1}(\mathbf{p})>d_{2}(\mathbf{p})\right\} $.
Second, from (\ref{eq:solution_two_positions}) and (\ref{eq:problem_double-stripe_pattern}),
the constraint set can be decomposed into a union of several rectangles.
Thus, the intermediate variables $\mathbf{x}_{1}$ and $\mathbf{x}_{2}$,
in $\mathbf{p}=\mathbf{q}(\mathbf{x}_{1},\mathbf{x}_{2})$, must be
found at the endpoints of the intervals $\left(\mathbf{a}_{i},\mathbf{b}_{i}\right)$.
As a result, the closed-form solution (\ref{eq:solution_problem-double-stripe})
is derived via a case-by-case discussion for a total of eight cases.

Denote $Q(\mathbf{a}_{1},\mathbf{b}_{1};\mathbf{\mathbf{a}}_{2},\mathbf{b}_{2})$
as the solution to $\mathscr{P}''$, which is also the solution to
$\mathscr{P}$, under the double-stripe \ac{los} pattern. With the
closed-form solution $Q(\mathbf{a}_{1},\mathbf{b}_{1};\mathbf{\mathbf{a}}_{2},\mathbf{b}_{2})$
to $\mathscr{P}$, one can design simple search trajectories to collect
the endpoints $\left(\mathbf{a}_{i},\mathbf{b}_{i}\right)$, $i\in\left\{ 1,2\right\} $
of \ac{los} segments for both users, to find the best double-\ac{los}
position as will be discussed in Section~\ref{subsec:A-Dynamic-Multi-stage-Algorithm}.

\subsection{A Compact Search Area on the Horizontal Plane}

\label{subsec:Search-on-the-H_min-Horizontal-Plane}

The search on the perpendicular search plane $\mathcal{S}$ has limitations,
because the search height cannot be lower than $H_{\text{min}}$ due
to the problem constraint.\footnote{Recall that, in practice, there could be collision with buildings
if the search altitude of the UAV is not lower bounded.} For a point $\mathbf{p}$ that leads to a colinear point $\mathbf{p}_{1}\in\mathcal{S}$,
\emph{i.e.}, $\mathbf{p}_{1}$, $\mathbf{u}_{1}$, and $\mathbf{p}$
are colinear, with altitude lower than $H_{\text{min}}$ as shown
in Fig.~\ref{fig:region_B}, the point $\mathbf{p}_{1}$ cannot be
reached by the search trajectory on $\mathcal{S}$, and hence, the
\ac{los} status of $\mathbf{p}$ \ac{wrt} $\mathbf{u}_{1}$ cannot
be inferred from $\mathbf{p}_{1}$.

The remedy to such a limitation is to find another colinear point
$\mathbf{p}_{1}'$ on the horizontal plane $\mathcal{H}=\left\{ (x,y,z):z=H_{\text{min}}\right\} $,
such that $\mathbf{u}_{1}$, $\mathbf{p}_{1}$, $\mathbf{p}_{1}'$,
and $\mathbf{p}$ are colinear. Given $\mathbf{p}_{1}'$, the coordinates
of point $\mathbf{p}_{1}$ are calculated as
\begin{equation}
\mathbf{p}_{1}=T_{1}(\mathbf{p}_{1}')\triangleq\frac{L}{2\text{(}p'_{12}+L/2)}(\mathbf{p}_{1}'-\mathbf{u}_{1})+\mathbf{u}_{1}.\label{eq:transition_from_H_S1}
\end{equation}
 Then, according to the colinear invariant property of the \ac{los}
regions, $\mathbf{p}$ is \ac{los} from $\mathbf{u}_{1}$ only if
$\mathbf{p}_{1}$ and $\mathbf{p}_{1}'$ are \ac{los} from $\mathbf{u}_{1}$.
As a result, one can search on $\mathcal{H}$ to discover the \ac{los}
opportunity for $\mathbf{p}$.

Similarly, the \ac{los} status of point $\mathbf{p}_{2}$ \ac{wrt}
$\mathbf{u}_{2}$ can be revealed by the colinear point $\mathbf{p}_{2}'$,
\emph{i.e.},
\begin{equation}
\mathbf{p}_{2}=T_{2}(\mathbf{p}_{2}')\triangleq\frac{L}{2(-p'_{22}+L/2)}(\mathbf{p}_{2}'-\mathbf{u}_{2})+\mathbf{u}_{2}.\label{eq:transition_from_H_S2}
\end{equation}

Combining the two search strategies, we have the following results.
\begin{lem}[Compact 2D Search Areas]
\label{thm:search_area_to_find_optimalp} Given a double-LOS initial
point $\mathbf{p}_{0}$, define $\tilde{\mathcal{B}}(\mathbf{p}_{0})=\mathcal{B}(\mathbf{p}_{0})$,
if $d_{0}(\mathbf{p}_{0})\leq\sqrt{2}L/2$ or $d_{0}(\mathbf{p}_{0})\geq L$,
and $\tilde{\mathcal{B}}(\mathbf{p}_{0})=\{\text{\ensuremath{\mathbf{p}\in\mathcal{P}}:}d_{0}(\mathbf{p})<L^{2}/(2\sqrt{L^{2}-d_{0}^{2}(\mathbf{p}_{0})})\}$,
otherwise. Then, the optimal solution to $\mathscr{P}$ can be found
by searching the \ac{los} points in $\tilde{\mathcal{B}}(\mathbf{p}_{0})\cap\mathcal{S}$
and $\tilde{\mathcal{B}}(\mathbf{p}_{0})\cap\mathcal{H}$.
\end{lem}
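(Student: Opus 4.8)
The plan is to establish two facts: (i) that the LOS status of every point $\mathbf{p}\in\mathcal{B}(\mathbf{p}_0)$ with respect to user $i$ can be inferred from a point lying either in $\mathcal{B}(\mathbf{p}_0)\cap\mathcal{S}$ or in $\mathcal{B}(\mathbf{p}_0)\cap\mathcal{H}$, and (ii) that, in the regime where $\mathcal{B}(\mathbf{p}_0)$ is replaced by the smaller cap $\tilde{\mathcal{B}}(\mathbf{p}_0)$, nothing relevant is lost, i.e.\ the optimal solution still lies in $\tilde{\mathcal{B}}(\mathbf{p}_0)$ and its LOS status is still recoverable from the two 2D slices. Combined with the remark already made in the text that $\mathbf{p}^*\in\mathcal{B}(\mathbf{p}_0)$, and with the closed-form solvers $\mathbf{q}(\cdot,\cdot)$ and $Q(\cdot,\cdot;\cdot,\cdot)$ of Proposition~\ref{thm:infer_given_two_positions} and $\mathscr{P}''$, this yields the claim.

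First I would make the colinear-projection argument precise. Fix $\mathbf{p}\in\mathcal{B}(\mathbf{p}_0)$ and user $i$; the line through $\mathbf{u}_i$ and $\mathbf{p}$ meets the plane $\mathcal{S}$ at a unique point $\mathbf{p}_i$ (it is not parallel to $\mathcal{S}$ since $\mathbf{p}\ne\mathbf{u}_i$ and $\mathbf{u}_i\notin\mathcal{S}$). By the colinear invariant property of $\mathcal{D}_0^{(i)}$, $\mathbf{p}$ and $\mathbf{p}_i$ share the LOS status for user $i$ whenever $\mathbf{p}_i$ lies between $\mathbf{u}_i$ and $\mathbf{p}$ or beyond $\mathbf{p}$ — one must check the ray direction, which is where the condition $a_{i1}b_{i1}>0$ style hypotheses and the height $H_{\min}$ enter. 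If the third coordinate of $\mathbf{p}_i$ is below $H_{\min}$, then instead take the intersection $\mathbf{p}_i'$ of the same line with $\mathcal{H}=\{z=H_{\min}\}$; the transition maps $T_1,T_2$ in (\ref{eq:transition_from_H_S1})--(\ref{eq:transition_from_H_S2}) are exactly the bookkeeping for this substitution, and again colinear invariance transfers the LOS status. So every $\mathbf{p}$ is ``covered'' by a point in $\mathcal{S}\cup\mathcal{H}$; intersecting with the cap and tracking distances shows the covering points lie in $\mathcal{B}(\mathbf{p}_0)\cap(\mathcal{S}\cup\mathcal{H})$.

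The quantitative heart is step (ii): deciding when the cap $\mathcal{B}(\mathbf{p}_0)$ can be shrunk to $\tilde{\mathcal{B}}(\mathbf{p}_0)$. I expect this to be the main obstacle. The two threshold cases $d_0(\mathbf{p}_0)\le\sqrt{2}L/2$ and $d_0(\mathbf{p}_0)\ge L$ are the easy regime where $\tilde{\mathcal{B}}=\mathcal{B}$ and there is nothing to prove beyond step (i). In the intermediate regime one must show that any point $\mathbf{p}\in\mathcal{B}(\mathbf{p}_0)$ with $d_0(\mathbf{p})\ge L^2/\bigl(2\sqrt{L^2-d_0^2(\mathbf{p}_0)}\bigr)$ is either infeasible or its colinear image on $\mathcal{S}$ lies above $H_{\min}$, so that such a point is harmless — intuitively, if the UAV is far enough out, the projection onto $\mathcal{S}$ is high enough that the $\mathcal{H}$-detour is never needed, and the portion of the cap that would require an unreachable (sub-$H_{\min}$) point on $\mathcal{S}$ has been carved away. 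The bound $L^2/(2\sqrt{L^2-d_0^2})$ should fall out of a direct computation: parametrize $\mathbf{p}$ by $(d_1,d_2)$ or by its colinear image's height, impose $z_{\mathbf{p}_i}=H_{\min}$ as the worst case, use $d_0(\mathbf{p})\le d_0(\mathbf{p}_0)$, and solve for the radial coordinate; the geometry is that of similar triangles with base $L/2$. I would handle users $1$ and $2$ symmetrically using $T_1$ and $T_2$, take the union of the two excluded regions, and verify the resulting set is exactly the complement of $\tilde{\mathcal{B}}(\mathbf{p}_0)$ inside $\mathcal{B}(\mathbf{p}_0)$.

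Finally I would assemble the pieces: $\mathbf{p}^*\in\mathcal{B}(\mathbf{p}_0)$ (text), hence $\mathbf{p}^*\in\tilde{\mathcal{B}}(\mathbf{p}_0)$ by step (ii); the LOS segments of both users within $\tilde{\mathcal{B}}(\mathbf{p}_0)$ are fully revealed by scanning $\tilde{\mathcal{B}}(\mathbf{p}_0)\cap\mathcal{S}$ and $\tilde{\mathcal{B}}(\mathbf{p}_0)\cap\mathcal{H}$ (step (i)); and once those LOS segments (equivalently, their bottom endpoints $\mathbf{a}_i,\mathbf{b}_i$) are known, the double-stripe reformulation $\mathscr{P}''$ together with its closed-form optimizer $Q(\mathbf{a}_1,\mathbf{b}_1;\mathbf{a}_2,\mathbf{b}_2)$ returns $\mathbf{p}^*$. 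The remaining care is to argue that restricting to $\tilde{\mathcal{B}}(\mathbf{p}_0)$ rather than $\mathcal{B}(\mathbf{p}_0)$ does not truncate any LOS segment that contributes a candidate better than $\mathbf{p}^*$ — but this is immediate since any such candidate would itself lie in $\tilde{\mathcal{B}}(\mathbf{p}_0)$ by step (ii), its generating endpoints being closer to the users than $\mathbf{p}^*$ is.
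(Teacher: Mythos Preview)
Your proposal has the direction of containment backwards, and this inverts the logic of step~(ii). In the intermediate regime $\sqrt{2}L/2<d_0(\mathbf{p}_0)<L$, the set $\tilde{\mathcal{B}}(\mathbf{p}_0)$ is not a shrink of $\mathcal{B}(\mathbf{p}_0)$ but an \emph{enlargement}: writing $x=d_0(\mathbf{p}_0)$, the inequality $x\le L^2/(2\sqrt{L^2-x^2})$ is equivalent to $(L^2-2x^2)^2\ge 0$, so $\mathcal{B}(\mathbf{p}_0)\subseteq\tilde{\mathcal{B}}(\mathbf{p}_0)$ always holds. Consequently your step~(ii) as written is vacuous (there is no $\mathbf{p}\in\mathcal{B}(\mathbf{p}_0)$ with $d_0(\mathbf{p})\ge L^2/(2\sqrt{L^2-d_0^2(\mathbf{p}_0)})$), and the claim at the end of step~(i) that ``tracking distances shows the covering points lie in $\mathcal{B}(\mathbf{p}_0)\cap(\mathcal{S}\cup\mathcal{H})$'' is precisely what fails: the colinear image $\mathbf{p}_i\in\mathcal{S}$ or $\mathbf{p}_i'\in\mathcal{H}$ of a point $\mathbf{p}^*\in\mathcal{B}(\mathbf{p}_0)$ can have $d_0$-value larger than $d_0(\mathbf{p}_0)$ and hence fall outside $\mathcal{B}(\mathbf{p}_0)$.

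The paper's argument runs the other way. It first records $\mathcal{B}(\mathbf{p}_0)\subseteq\tilde{\mathcal{B}}(\mathbf{p}_0)$, so $\mathbf{p}^*\in\tilde{\mathcal{B}}(\mathbf{p}_0)$ is immediate. The actual work is to show, case by case in $d_0(\mathbf{p}_0)$, that for $\mathbf{p}^*\in\mathcal{B}(\mathbf{p}_0)$ at least one of the colinear images $\mathbf{p}_i\in\mathcal{S}$ or $\mathbf{p}_i'\in\mathcal{H}$ stays inside the \emph{enlarged} cap $\tilde{\mathcal{B}}(\mathbf{p}_0)$; the threshold $L^2/(2\sqrt{L^2-d_0^2(\mathbf{p}_0)})$ is the worst-case $d_0$-value attained by such an image, not a carve-out radius applied to $\mathbf{p}^*$ itself. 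Once that containment of the images is established, Proposition~\ref{thm:infer_given_two_positions} finishes the argument exactly as you outline. So your colinear-projection machinery and final assembly are on the right track, but the geometric computation in step~(ii) must be redirected: bound $d_0(\mathbf{p}_i)$ and $d_0(\mathbf{p}_i')$ in terms of $d_0(\mathbf{p}_0)$, rather than trying to excise points from $\mathcal{B}(\mathbf{p}_0)$.
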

\begin{proof}
See Appendix \ref{sec:Proof-of-Theorem-search-area}.
\end{proof}

Lemma \ref{thm:search_area_to_find_optimalp} reveals two important
properties. First, given a double-LOS initial point, the search for
the globally optimal position can be reduced from possibly an 3D unbounded
area to a bounded area $\tilde{\mathcal{B}}(\mathbf{p}_{0})$. Second,
for the 3D globally optimal solution, it suffices to search on a bounded
2D area $\tilde{\mathcal{B}}(\mathbf{p}_{0})\cap\mathcal{S}$ and
$\tilde{\mathcal{B}}(\mathbf{p}_{0})\cap\mathcal{H}$.

\subsection{A Theoretical Bound to the Global Optimality under a Naive Trajectory}

\label{subsec:Gap-to-the-optimality-under-finite-trajectory}

Lemma~\ref{thm:search_area_to_find_optimalp} still requires to search
over a bounded 2D area for the globally optimal solution to $\mathscr{P}$.
However, in practice, the \ac{uav} can only explore through a 1D
trajectory. Therefore, a key question is whether there exists a set
of search trajectories with finite length that guarantee to find a
suboptimal solution, with a performance gap to that of the globally
optimal one upper bounded by a given value.

To answer the above question, we study a set of {\em naive} trajectories
as follows. Denote $H_{0}=\sqrt{d_{0}^{2}(\mathbf{p}_{0})-L^{2}/4}$
as the height at the top of the cap $\mathcal{B}(\mathbf{p}_{0})$,
and $H'_{\text{min}}(\mathbf{p}_{0})=LH_{\text{min}}/(2\sqrt{d_{0}^{2}(\mathbf{p}_{0})-H_{\text{min}}^{2}})$
as the minimum height for $\mathbf{p}_{1}$ to reach the \ac{los}
status for every $\mathbf{p}\in\mathcal{B}(\mathbf{p}_{0})$ as shown
in Fig.~\ref{fig:region_B}. Define an effective search region $\mathcal{B}'(\mathbf{p}_{0})\cap\mathcal{S}$,
where $\mathcal{B}'(\mathbf{p}_{0})\triangleq\{\mathbf{p}\in\mathbb{R}^{3}:p_{3}\geq H'_{\text{min}}(\mathbf{p}_{0}),d_{0}(\mathbf{p})\leq d_{0}(\mathbf{p}_{0})\}$.
Consider a set of equally-spaced search trajectories $\mathcal{\mathcal{T}\in B}'(\mathbf{p}_{0})\cap\mathcal{S}$
parallel to the ground with heights given by $h_{j}=H_{0}-j\delta$,
where $\delta$ is the step size in altitude between adjacent horizontal
trajectories, and $j=1,2,\dots,\text{\ensuremath{\lfloor}}(H_{0}-H'_{\text{min}}(\mathbf{p}_{0}))/\delta\rfloor$.
Note that for $H'_{\text{min}}(\mathbf{p}_{0})\leq h_{j}<H_{\text{min}}$,
the \ac{los} status of $\mathbf{p}_{i}$ on $\mathcal{S}$ needs
to be inferred by searching $\mathbf{p}'_{i}$ on $\mathcal{H}$ as
given in (\ref{eq:transition_from_H_S1}) and (\ref{eq:transition_from_H_S2}).

As a result, the \ac{los} information collected along the trajectory
$\mathcal{T}$ appears as a set of \ac{los} intervals $\mathcal{I}_{1}\triangleq\mathcal{T}\cap\mathcal{D}_{0}^{(1)}$
and $\mathcal{I}_{2}\triangleq\mathcal{T}\cap\mathcal{D}_{0}^{(2)}$,
which are one-dimensional subsets of the \ac{los} regions $\mathcal{D}_{0}^{(1)}$
and $\mathcal{D}_{0}^{(2)}$. Therefore, a suboptimal solution can
be found by solving a set of problems $\mathscr{P}''$ parameterized
by the \ac{los} intervals $\left\{ (\mathbf{a}_{1},\mathbf{b}_{1})\right\} $
collected in $\mathcal{I}_{1}$ and $\left\{ (\mathbf{a}_{2},\mathbf{b}_{2})\right\} $
in $\mathcal{I}_{2}$, and picking the best solution. Mathematically,
this is formulated in the following problem
\begin{equation}
\begin{aligned}\mathop{\mbox{maximize}}\limits _{\mathbf{p}} & \quad F(\mathbf{p})\\
\mathop{\mbox{subject to}} & \quad\mathbf{p}=Q(\mathbf{a}_{1},\mathbf{b}_{1};\mathbf{\mathbf{a}}_{2},\mathbf{b}_{2})\\
 & \quad\left(\mathbf{a}_{1},\mathbf{b}_{1}\right)\in\mathcal{I}_{1}\\
 & \quad\left(\mathbf{a}_{2},\mathbf{b}_{2}\right)\in\mathcal{I}_{2}.
\end{aligned}
\label{eq:problem_found_from_trajectory}
\end{equation}

Let $\tilde{\mathbf{p}}$ be the solution to (\ref{eq:problem_found_from_trajectory}).
Then, if $d_{0}(\tilde{\mathbf{p}})\leq\sqrt{2}L/2$, it is found
that the gap to the globally optimal solution $\mathbf{p}^{*}$ to
$\mathscr{P}$ is bounded linearly in $\delta$, the vertical step
size between adjacent trajectories.
\begin{thm}[Performance Gap to the Globally Optimal Solution]
\label{thm:Alg2-Upper-bound-of-the-performance-gap-1} If the solution
$\tilde{\mathbf{p}}$ to (\ref{eq:problem_found_from_trajectory})
satisfies $d_{0}(\tilde{\mathbf{p}})\text{\ensuremath{\leq}}\sqrt{2}L/2$,
then, the performance gap to the globally optimal solution $\mathbf{p}^{*}$
is upper bounded as 
\begin{equation}
d_{0}(\tilde{\mathbf{p}})-d_{0}(\mathbf{p}^{*})\leq2\delta\sqrt{d_{0}^{2}(\tilde{\mathbf{p}})-H_{\text{min}}^{2}}/L.\label{eq:performance_gap}
\end{equation}
Moreover, if $f(d)$ is convex, then $F(\mathbf{p}^{*})-F(\tilde{\mathbf{p}})\leq-2\delta f'(d_{0}(\mathbf{p}^{*}))\sqrt{d_{0}^{2}(\tilde{\mathbf{p}})-H_{\text{min}}^{2}}/L.$
\end{thm}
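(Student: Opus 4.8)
\emph{Plan.} Since $f$ is decreasing, $F(\mathbf p)=\min\{f(d_1(\mathbf p)),f(d_2(\mathbf p))\}=f(d_0(\mathbf p))$, so solving \eqref{eq:problem_found_from_trajectory} is the same as minimizing $d_0$ over its feasible set, and in particular $d_0(\tilde{\mathbf p})\ge d_0(\mathbf p^*)$. The plan is to exhibit one explicit feasible point of \eqref{eq:problem_found_from_trajectory} whose $d_0$ exceeds $d_0(\mathbf p^*)$ by at most the claimed gap; optimality of $\tilde{\mathbf p}$ then settles \eqref{eq:performance_gap}, and a one-line convexity step gives the $F$-statement.

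First I would locate $\mathbf p^*$ with respect to the search planes. For each user $i$, let $\mathbf a_i^*$ be the intersection of the line through $\mathbf u_i$ and $\mathbf p^*$ with $\mathcal S$ when that intersection has altitude at least $\max\{H_{\min},H'_{\min}(\mathbf p_0)\}$, and otherwise its intersection with $\mathcal H$ transported to $\mathcal S$ via $T_i$ as in \eqref{eq:transition_from_H_S1}--\eqref{eq:transition_from_H_S2}. In either case $\mathbf a_i^*$ is LOS for user $i$: if $\mathbf a_i^*$ sits on the segment $\mathbf u_i\mathbf p^*$ the line of sight to $\mathbf u_i$ is clear along the whole segment, and if it sits beyond $\mathbf p^*$ this is exactly the colinear invariant property. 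Since the true LOS region on $\mathcal S$ already contains the double-ray pattern with lowest points $\mathbf a_1^*,\mathbf a_2^*$, and $\mathbf p^*$ lies in the double-LOS region of that sub-pattern, Proposition~\ref{thm:infer_given_two_positions} gives $\mathbf p^*=\mathbf q(\mathbf a_1^*,\mathbf a_2^*)$ (the degenerate opposite-side case, where the rays overlap, reduces to $\mathscr P'$ and is handled by Section~\ref{sec:algorithm_1}).

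Next I would discretize. Let $h_{j(i)}$ be the smallest trajectory altitude with $h_{j(i)}\ge(\mathbf a_i^*)_3$, and let $\hat{\mathbf a}_i$ be the point directly above $\mathbf a_i^*$ at altitude $h_{j(i)}$, so $0\le h_{j(i)}-(\mathbf a_i^*)_3<\delta$; by the upward invariant property $\hat{\mathbf a}_i$ is still LOS for user $i$, and it lies on $\mathcal T$ (or, for $h_{j(i)}<H_{\min}$, its colinear image lies on the $\mathcal H$-arc searched at that altitude), hence $\hat{\mathbf a}_i$ belongs to the collected LOS interval $\mathcal I_i$. Taking $(\mathbf a_i,\mathbf b_i)$ to be the LOS segment of $\mathcal I_i$ containing $\hat{\mathbf a}_i$, the point $Q(\mathbf a_1,\mathbf b_1;\mathbf a_2,\mathbf b_2)$ is feasible for \eqref{eq:problem_found_from_trajectory} and minimizes $d_0$ over those segments, so $d_0(\tilde{\mathbf p})\le d_0(Q(\cdot))\le d_0\big(\mathbf q(\hat{\mathbf a}_1,\hat{\mathbf a}_2)\big)$. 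It then remains to show $d_0(\mathbf q(\hat{\mathbf a}_1,\hat{\mathbf a}_2))\le d_0(\mathbf p^*)+2\delta\sqrt{d_0^2(\tilde{\mathbf p})-H_{\min}^2}/L$. This perturbation estimate is the crux: $\hat{\mathbf a}_i$ is obtained from $\mathbf a_i^*$ by a vertical shift of length $<\delta$, so by similar triangles the colinear ray from $\mathbf u_i$ rotates by an angle $<\delta/d_i(\mathbf a_i^*)$, which at $\mathbf q(\mathbf a_1^*,\mathbf a_2^*)=\mathbf p^*$ produces a displacement of order $\delta\,d_i(\mathbf p^*)/d_i(\mathbf a_i^*)$; pushing this through the closed form \eqref{eq:solution_two_positions} for $\mathbf q$ and using monotonicity of $d_0$ along each ray yields the increase in $d_0$ bounded by $2\delta$ times the horizontal reach $\sqrt{d_0^2-H_{\min}^2}$ divided by $L$. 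The hypothesis $d_0(\tilde{\mathbf p})\le\sqrt2\,L/2$ enters twice: it places the optimum in the regime where $\tilde{\mathcal B}(\mathbf p_0)=\mathcal B(\mathbf p_0)$ in Lemma~\ref{thm:search_area_to_find_optimalp}, so the auxiliary points $\hat{\mathbf a}_i$ are genuinely reachable on $\mathcal S\cup\mathcal H$, and it prevents the active index of $d_0=\max\{d_1,d_2\}$ from switching under the perturbation, making the one-sided estimate sharp with magnification factor $2\sqrt{d_0^2-H_{\min}^2}/L$. Since $d_0(\tilde{\mathbf p})\ge d_0(\mathbf p^*)$ and $d\mapsto\sqrt{d^2-H_{\min}^2}$ is increasing, the bound may be rewritten in terms of $d_0(\tilde{\mathbf p})$, which is \eqref{eq:performance_gap}. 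Finally, with $F=f\circ d_0$ and $f$ convex, $f(d_0(\tilde{\mathbf p}))\ge f(d_0(\mathbf p^*))+f'(d_0(\mathbf p^*))\big(d_0(\tilde{\mathbf p})-d_0(\mathbf p^*)\big)$, and since $f'<0$ this gives $F(\mathbf p^*)-F(\tilde{\mathbf p})\le -f'(d_0(\mathbf p^*))\big(d_0(\tilde{\mathbf p})-d_0(\mathbf p^*)\big)$, into which \eqref{eq:performance_gap} is substituted.

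The main obstacle is precisely that perturbation estimate: transporting the $\delta$-grid on $\mathcal S$ through the colinear maps $T_1,T_2$ onto $\mathcal H$ and then through the case-by-case closed form for $\mathbf q$ and $Q$, while checking that $d_0(\tilde{\mathbf p})\le\sqrt2\,L/2$ both keeps every auxiliary point inside the searched area and freezes the active user of the $\max$ in $d_0$, is where the real computation lies; the remaining steps are bookkeeping with the upward and colinear invariance properties.
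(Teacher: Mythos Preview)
Your plan is correct and follows essentially the same route as the paper's proof: project $\mathbf p^{*}$ along each user-ray onto $\mathcal S$ (the paper's $\mathbf p_i$, your $\mathbf a_i^{*}$), lift vertically by at most $\delta$ onto the nearest trajectory (the paper's $\bar{\mathbf p}_i$, your $\hat{\mathbf a}_i$), observe that $\mathbf q(\bar{\mathbf p}_1,\bar{\mathbf p}_2)$ is feasible for \eqref{eq:problem_found_from_trajectory} and lies on the vertical line through $\mathbf p^{*}$, then bound the resulting increase in $d_0$ and finish with the convexity inequality for $F$. Two small notes: (i) the paper executes the ``perturbation estimate'' not by pushing through the closed form of $\mathbf q$, but by a direct triangle computation with the angles $\phi$ (at $\mathbf p_1'$ between $\mathbf u_1$ and the vertical) and $\theta$ (between $\mathbf e_2$ and the vertical plane through $\mathbf u_1,\mathbf p_1$), which gives the factor $2\sqrt{d_0^2(\tilde{\mathbf p})-H_{\min}^2}/L$ in two lines; (ii) the hypothesis $d_0(\tilde{\mathbf p})\le\sqrt2\,L/2$ is used in the paper only to guarantee that the colinear point $\mathbf p_1$ lands in the effective search region $\mathcal B'(\mathbf p_0)\cap\mathcal S$, not to freeze the active index of $d_0$ --- the paper simply fixes $d_0(\mathbf p^{*})=d_1(\mathbf p^{*})$ without loss of generality and works on that side.
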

\begin{proof}
See Appendix \ref{sec:Proof-of-Theorem-upper-bound-performance-gap-Alg2}.
\end{proof}

Theorem~\ref{thm:Alg2-Upper-bound-of-the-performance-gap-1} finds
that, searching over a set of vertically $\delta$-spaced parallel
trajectories, if the solution computed from (\ref{eq:problem_found_from_trajectory})
satisfies $d_{0}(\tilde{\mathbf{p}})\text{\ensuremath{\leq}}\sqrt{2}L/2$,
then it is guaranteed that the gap to the global optimality is $O(\delta)$.
One can further compute that the coefficient $2\sqrt{d_{0}^{2}(\tilde{\mathbf{p}})-H_{\text{min}}^{2}}/L$
in (\ref{eq:performance_gap}) is upper bounded as 1.4, {\em i.e.},
$d_{0}(\tilde{\mathbf{p}})-d_{0}(\mathbf{p}^{*})\text{\ensuremath{\leq}}1.4\delta$.
In addition, the upper bound of the total length of the search trajectories
in $\mathcal{T}$ can be found upper bounded by $2H_{0}\sqrt{H_{0}^{2}-H_{\text{min}}^{2}}/\delta$
via the total bounded search area divided by the vertical step size
$\delta$. This implies that the total length of the search trajectory
is $O(1/\delta)$.

\subsection{A Dynamic Multi-stage Algorithm\label{subsec:A-Dynamic-Multi-stage-Algorithm}}

Inspiring from Theorem~\ref{thm:Alg2-Upper-bound-of-the-performance-gap-1},
an efficient search trajectory can be developed. Specifically, the
following properties derived from the theoretical bound $O(\delta)$
in Theorem~\ref{thm:Alg2-Upper-bound-of-the-performance-gap-1} can
be exploited. First, a dynamic multi-stage search can be developed,
where one first performs a coarse global search to identify promising
regions, and then, narrow down the search region in subsequent stages
for finer search. Specifically, consider to perform an $M$-stage
search. The first stage follows ($2^{M-1}\delta$)-spaced parallel
trajectories represented by the red lines in Fig.~\ref{fig:Algorithm2_trajectory}.
Then, define the {\em critical trajectories} as the line segments
lying between an \ac{los} segment of a trajectory and an \ac{nlos}
trajectory, for example, the solid green lines in Fig.~\ref{fig:Algorithm2_trajectory}.
According to the upward invariant property, in the next stage, it
suffices to search those critical trajectories for the fine-grained
\ac{los} information. Using such a search strategy, it can be found
that the total length of the critical trajectories in an $M$-stage
search is upper bounded as $2H_{0}\sqrt{H_{0}^{2}-H_{\text{min}}^{2}}/(2^{M-1}\delta)+2(M-1)\sqrt{H_{0}^{2}-H_{\text{min}}^{2}}$,
which is minimized at $M=g_{W}(H_{0}\ln2/\delta)/\ln2$, where $g_{W}(x)$
is the Lambert W function of $x$. We find that in a typical setting,
{\em e.g.}, $H_{0}\in[120,160]$ meters and $\delta\in[2,3]$ meters,
the optimal $M$ is $4$.

Second, the $O(\delta)$ bound helps filter out unpromising region
for subsequent finer search using the coarse \ac{los} information
obtained at the previous stages. For example, if $\tilde{\mathbf{p}}_{1}$
and $\tilde{\mathbf{p}}_{2}$ are both found as double-LOS points
with $d_{0}(\tilde{\mathbf{p}}_{2})-d_{0}(\tilde{\mathbf{p}}_{1})>2^{M-m+1}\delta\sqrt{d_{0}^{2}(\tilde{\mathbf{p}}_{2})-H_{\text{min}}^{2}}/L$,
then there is no need to finely search the local area related to $\tilde{\mathbf{p}}_{2}$,
because of the $O(\delta)$ upper bound in Theorem~\ref{thm:Alg2-Upper-bound-of-the-performance-gap-1}.
For example, the dashed green critical trajectory in Fig.~\ref{fig:Algorithm2_trajectory}
may be ignored if it is found substantially less promising than the
other critical trajectories in solid green.

Finally, there exist path planning algorithms to connect the isolated
critical trajectories using a short path. The overall search strategy
is summarized in Algorithm~\ref{Alg:symmetric_algorithm_2}, and
an typical realization of the trajectory is demonstrated in Fig.~\ref{fig:Algorithm2_trajectory},
where the target search region is reduced stage-by-stage.
\begin{figure}
\begin{centering}
\includegraphics[width=0.5\columnwidth]{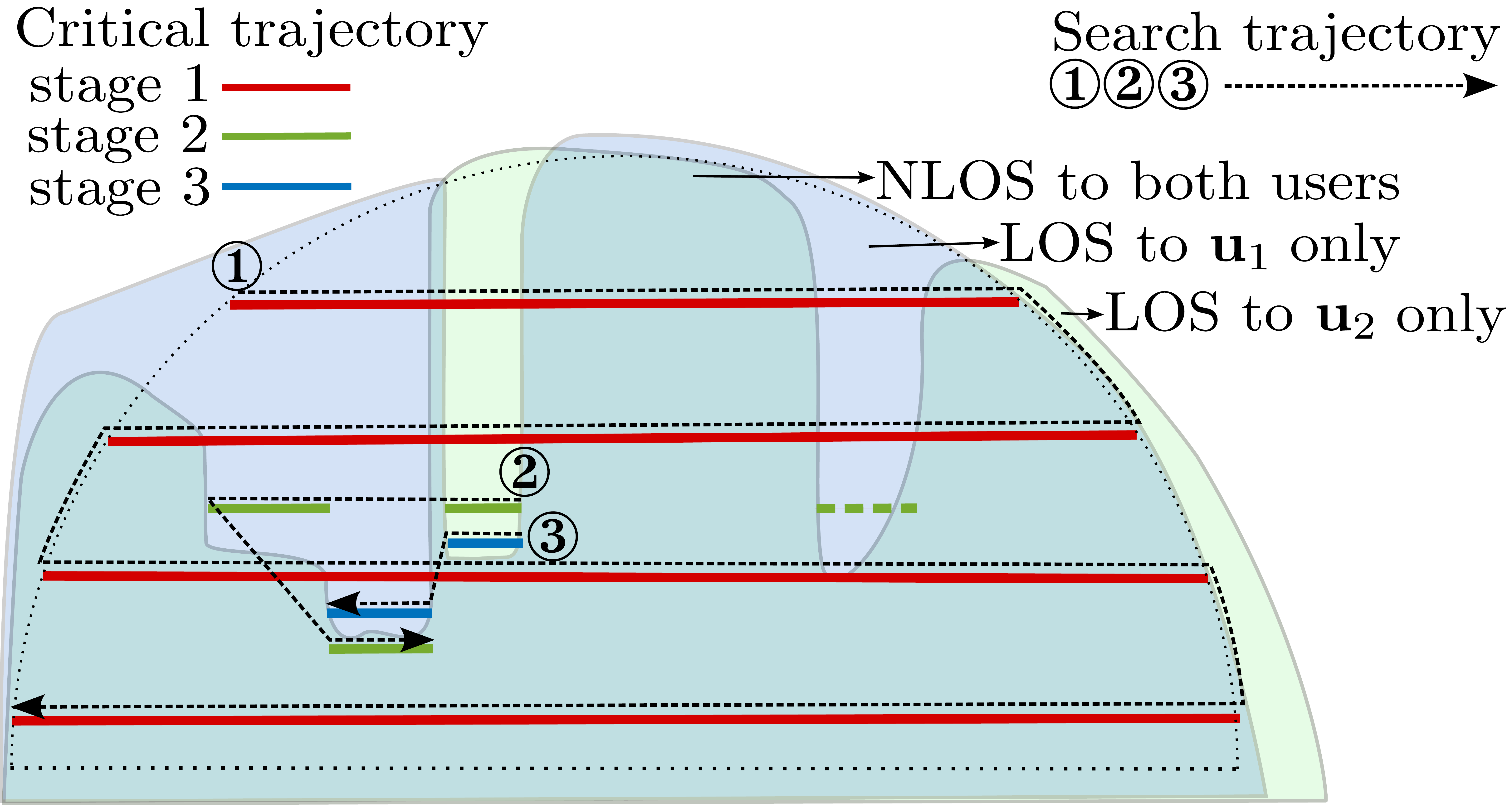}
\par\end{centering}
\caption{\label{fig:Algorithm2_trajectory}An example of search trajectory
on $\mathcal{B}(\mathbf{p}_{0})\cap\mathcal{S}$: The bold solid line
represents the critical trajectory. The thin dashed lines with arrows
indicate a search trajectory to connect the critical trajectories
at different stages ($m=1,2,3$). The light green and light blue shaded
areas portray the NLOS patterns to user 1 and user 2, respectively.
The overlapping region of the two NLOS patterns is NLOS to both users.}
\end{figure}
\begin{algorithm}
\begin{enumerate}
\item Initialization: Based on a double-LOS initial point $\mathbf{p}_{0}$,
set the initial solution $\tilde{\mathbf{p}}(1)=\mathbf{p}_{0}$.
Initialize $\mathcal{I}{}_{1}=\varnothing$, $\mathcal{I}{}_{2}=\varnothing$,
$H_{0}=\sqrt{d_{0}^{2}(\mathbf{p}_{0})-L^{2}/4}$, $H'_{\text{min}}(\tilde{\mathbf{p}}(1))=LH_{\text{min}}/(2\sqrt{d_{0}^{2}(\tilde{\mathbf{p}}(1))-H_{\text{min}}^{2}})$,
$j=1$, and $m=1$.
\item \label{enu:M-stage}Set $h_{j}=H_{0}-j2^{M-m}\delta$, for a step
size $\delta$. Initialize $\mathcal{I}_{i}^{(j)}=\varnothing$.
\item \textbf{Configure the initial search trajectory.}
\begin{enumerate}
\item \label{enu:If-:-Search_above_H_min}If $h_{j}\geq H_{\text{min}}$:
Search on $\mathcal{S}$ with height $h_{j}$ between $x=\pm\sqrt{d_{0}^{2}(\tilde{\mathbf{p}}(m))-L^{2}/4-h_{j}^{2}}$.
For any \ac{los} segment discovered for user $i$ with endpoints
$\mathbf{a}_{i}$ and $\mathbf{b}_{i}$, assign $(\mathbf{a}_{i},\mathbf{b}_{i})\rightarrow\mathcal{I}_{i}^{(j)}$.
\item \label{enu:If-:-Search_below_H_min}If $h_{j}<H_{\text{min}}$: Search
on $\mathcal{H}$ with $y=L/2\pm(H_{\text{min}}/h_{j}-1)L/2$ between
$x=\pm\sqrt{d_{0}^{2}(\tilde{\mathbf{p}}(m))-H_{\text{min}}^{2}-L^{2}H_{\text{min}}^{2}/(4h_{j}^{2})}$.
For any \ac{los} segment discovered for user $i$ with endpoints
$\mathbf{a}_{i}$ and $\mathbf{b}_{i}$, assign $(T_{i}(\mathbf{a}_{i}),T_{i}(\mathbf{b}_{i}))\rightarrow\mathcal{I}_{i}^{(j)}$
where $T_{i}$ for $i=1,2$ are given in (\ref{eq:transition_from_H_S1})
and (\ref{eq:transition_from_H_S2}).
\item $\mathcal{I}_{i}=\mathcal{I}_{i}\cup\mathbf{\mathcal{I}}_{i}^{(j)}$,
$j\leftarrow j+1$, Repeat from Step~\ref{enu:M-stage} until $h_{j}<H'_{\text{min}}(\tilde{\mathbf{p}}(m))$.
\end{enumerate}
\item Calculate $\tilde{\mathbf{p}}(m)$ as the solution to problem (\ref{eq:problem_found_from_trajectory}).
\item For the $k$th interval $(\mathbf{a}_{i}^{(k)},\mathbf{b}_{i}^{(k)})$
in $\mathcal{I}_{i}$ where $k\in\{1,2,\dots,|\mathcal{I}_{i}|\}$
and $i\in\{1,2\}$, calculate the solution $\text{\ensuremath{\tilde{\mathbf{p}}_{i}^{(k)}}}$
to problem (\ref{eq:problem_found_from_trajectory}) by replacing
$\mathcal{I}_{i}$ with $\left\{ (\mathbf{a}_{i}^{(k)},\mathbf{b}_{i}^{(k)})\right\} $.
If $d_{0}(\tilde{\mathbf{p}}_{i}^{(k)})-d_{0}(\tilde{\mathbf{p}}(m))>2^{M-m+1}\delta\sqrt{d_{0}^{2}(\tilde{\mathbf{p}}_{i}^{(k)})-H_{\text{min}}^{2}}/L$,
then remove $(\mathbf{a}_{i}^{(k)},\mathbf{b}_{i}^{(k)})$ from $\mathcal{I}_{i}$.
\item Reset $j=1$, and update $m\leftarrow m+1$.
\item \textbf{Configure the refined search trajectory.}
\begin{enumerate}
\item \label{enu:For-each-interval_search}For each interval $(\mathbf{a}_{i},\mathbf{b}_{i})$
in $\mathcal{I}_{i}$, set $H_{0}=a_{i3}$, and repeat from Step~\ref{enu:M-stage},
but change the range of $x$ in Step~\ref{enu:If-:-Search_above_H_min}
as $[a_{i1},b_{i1}]$ and in Step~\ref{enu:If-:-Search_below_H_min}
as $[a_{i1}H_{\text{min}}/a_{i3},b_{i1}H_{\text{min}}/b_{i3}]$, until
$h_{j}<H_{0}-2^{M-m}\delta$.
\item Repeat Step~\ref{enu:For-each-interval_search} until $m=M$, and
then, output $\tilde{\mathbf{p}}(M)$.
\end{enumerate}
\end{enumerate}
\caption{$M$-stage Dynamic Search for 3D Optimal UAV Placement}

\label{Alg:symmetric_algorithm_2}
\end{algorithm}

\section{Numerical Results}

\label{sec:Numerical-Results}

\selectlanguage{english}%
In this section, the proposed algorithms are compared with four baseline
schemes on four real-world city maps.
\selectlanguage{american}%

\subsection{Environment Setup and Scenarios}

\label{subsec:Propagation-Environment-Modeling}

We perform experiments over four city topologies from real data. As
shown in Fig.~\ref{fig:Four-local-areas}, Map A and Map B are 3D
maps of two different areas in Beijing, China. They represent the
3D environment of typical commercial center and traditional commercial
area, respectively. Map C and Map D are 2D street maps of two different
areas in Guangzhou, China. Based on the street maps, we manually generate
the height of the buildings following a uniform distribution of $[50,80]$
meters. The two simulated environments respectively represent the
modern dense residential area and the ultra dense area probably appear
in the future.

The characteristics of the four maps are summarized in Table~\ref{tab:Maps-with-four-function-areas},
where we use the building coverage ratio (BCR) and floor area ratio
(FAR) \cite{GonCreHerRod:J13} to quantify the building density of
the areas. It is observed that Map A is the most sparse, and Map D
has the largest building density.

For the experiment on each map, the minimum height of the \ac{uav}
$H_{\text{min}}$ is set as the maximum building height in the map,
as shown in Table~\ref{tab:Maps-with-four-function-areas}, to avoid
potential collision of the \ac{uav}. There are $5,000$ user pairs
placed uniformly at random in the non-building area of each map.
\begin{figure}
\begin{centering}
\subfigure[Map A: Typical commercial center]{\includegraphics[width=0.29\columnwidth]{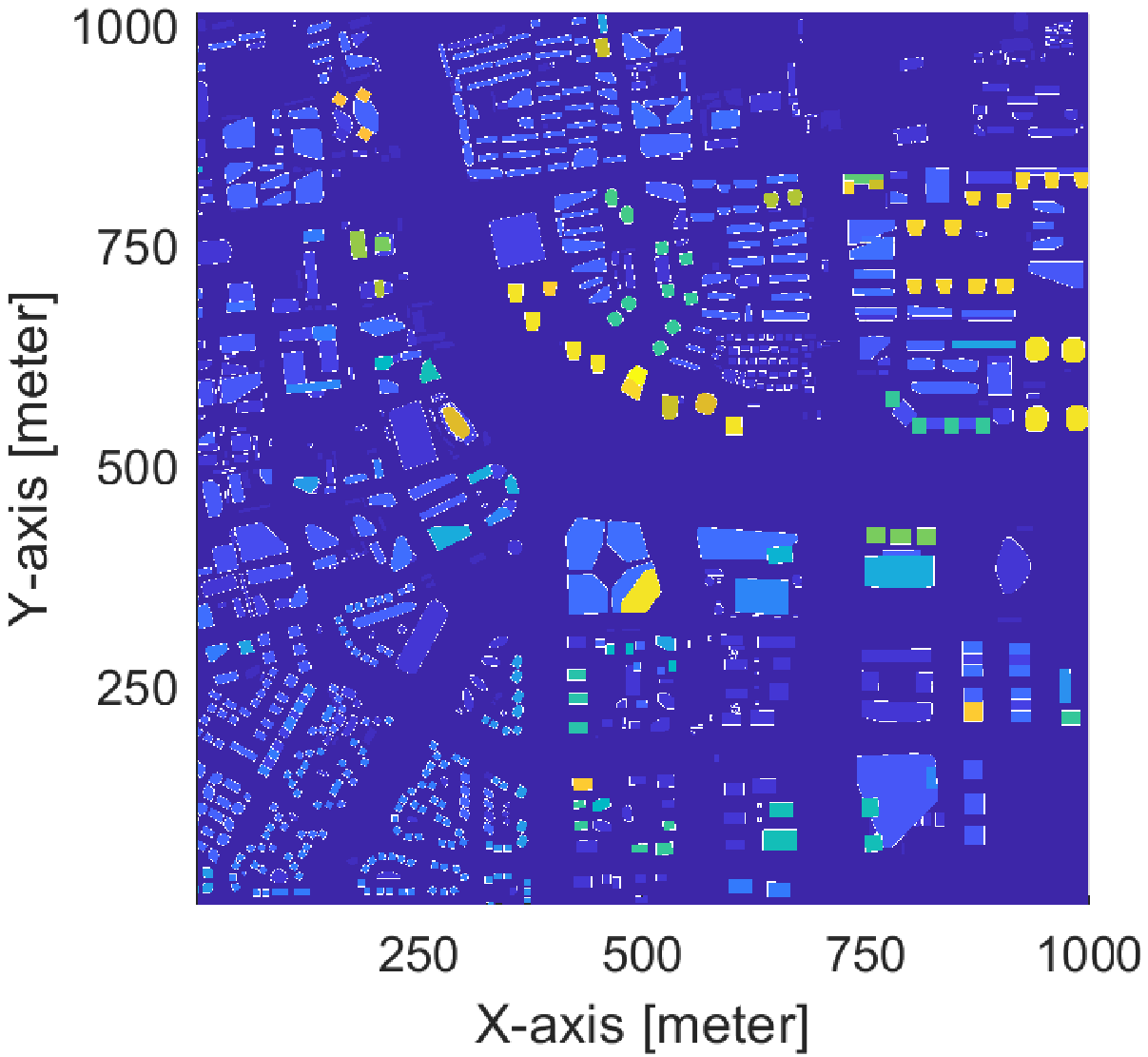}}\subfigure[Map B: Traditional commercial area]{\includegraphics[width=0.32\columnwidth]{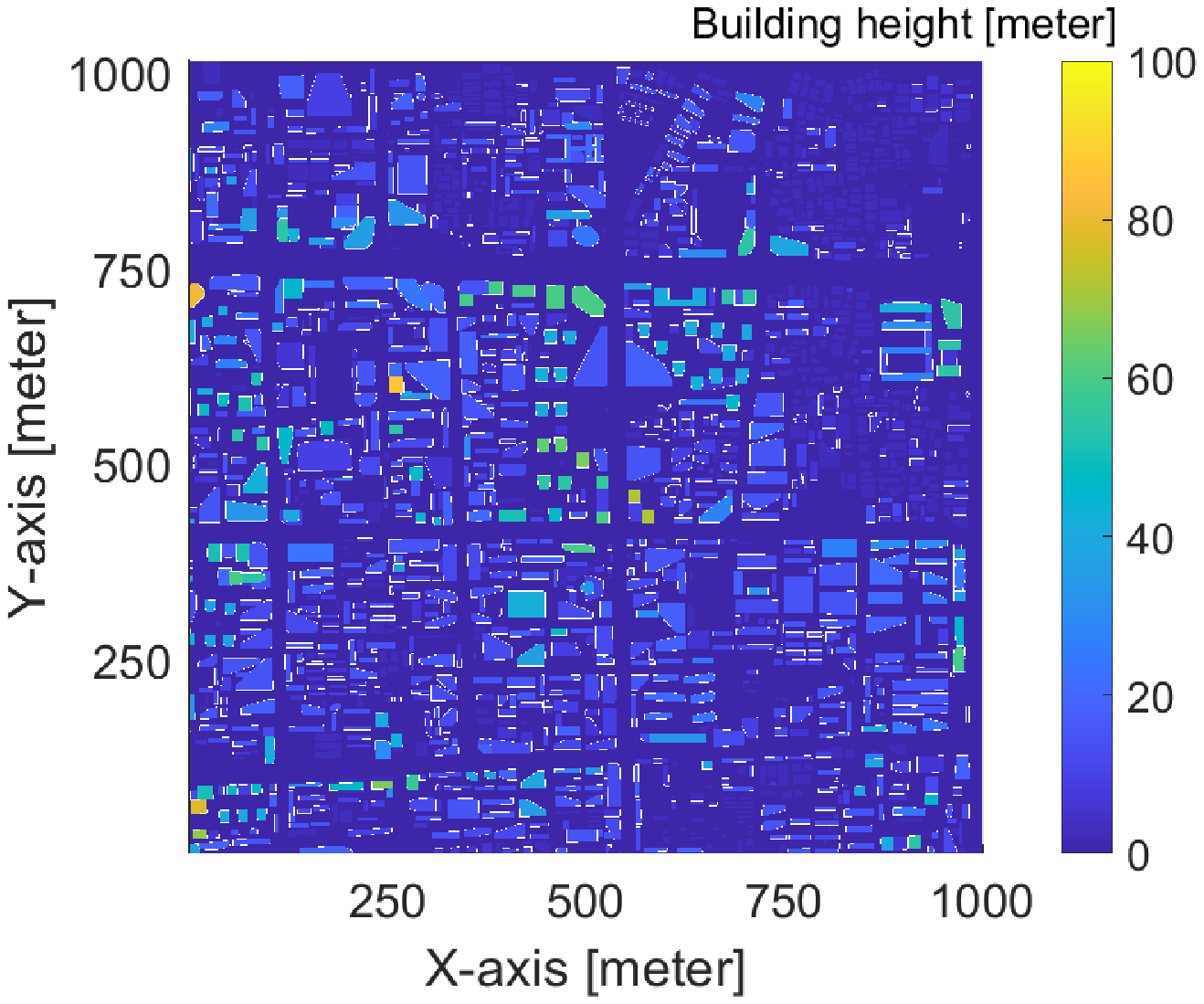}}
\par\end{centering}
\begin{centering}
\subfigure[Map C: Dense residential area]{\includegraphics[width=0.32\columnwidth]{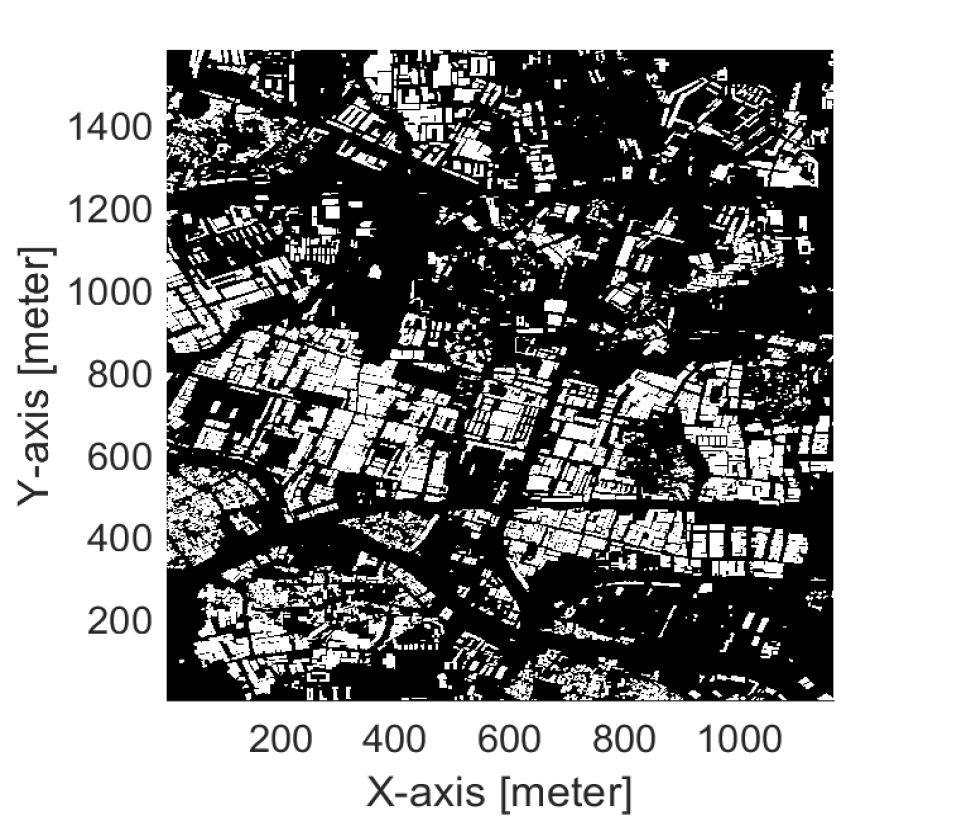}}\subfigure[Map D: Ultra dense area]{\includegraphics[width=0.28\columnwidth]{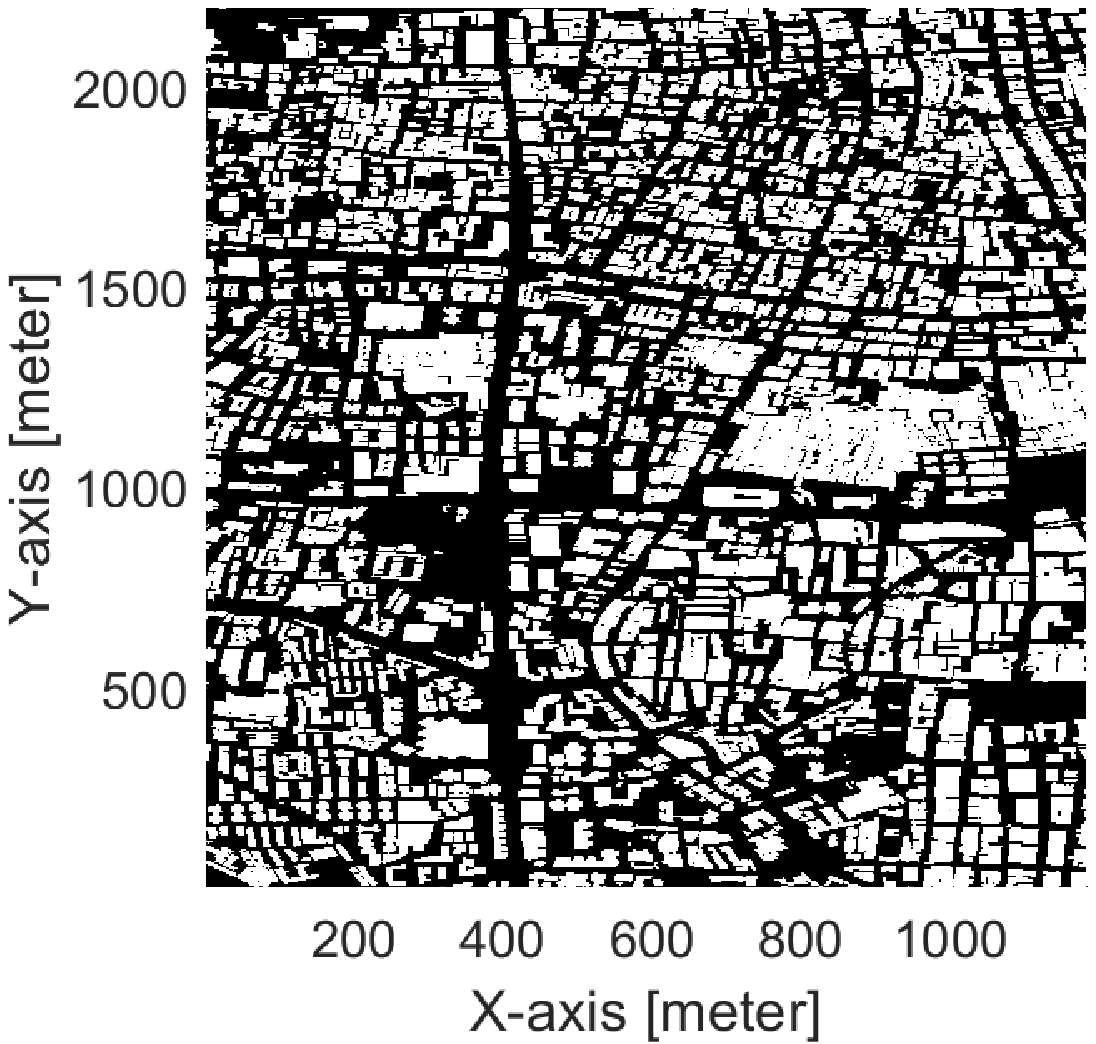}}
\par\end{centering}
\caption{\label{fig:Four-local-areas}Four local areas, where Map A and B are
3D city maps of different areas in Beijing, China, and Map C and D
are generated from 2D street maps from Guangzhou, China, with simulated
building heights.}

\label{fig:simulation_maps-2}
\end{figure}

\selectlanguage{english}%
Two application scenarios are evaluated in our experiments.
\begin{itemize}
\item {\bf UAV relaying:} A UAV is placed to establish \ac{los} relay
channels for two ground users under decode-and-forward relaying. Consider
the path loss model of millimeter wave cellular reported in \cite{CheMitGes:T21}
as $\text{PL}_{\text{LOS}}(d)=61.4+20.0\log_{10}(d)$ with the shadowing
parameter $\sigma_{\text{sf}}=1$~dB for \ac{los} link, and $\text{\ensuremath{\text{PL}_{\text{NLOS}}}}(d)=72.0+29.2\log_{10}(d)$
with $\sigma_{\text{sf}}=5$~dB for NLOS link at a carrier frequency
of $28$~GHz. Correspondingly, the performance evaluation function
is defined as the channel capacity $f(d)=W\log(1+P\cdot10^{(-\text{PL}(d)-\sigma_{\text{sf}})/10}/(WN_{0}))$
where $W=1$ GHz is the allocated bandwidth, $P$ is the transmission
power, and $N_{0}$ is the noise power spectrum density set as $-169$
dBm/Hz.
\item {\bf UAV WPT:} A \ac{uav} is placed to wirelessly charge two ground
users simultaneously. The evaluation function $f(d)$ of the WPT channel
is adopted from the linear harvesting model as $f(d)=\eta P\beta/d^{\alpha}$
where $\eta=60\%$ denotes the linear RF-to-direct current (DC) energy
conversion efficiency, $P=40$ dBm denotes the transmit power, $\beta=-30$
dB denotes the channel power gain at reference distance $d_{0}=1$
meter, and $\alpha=3$ denotes the path loss exponent \cite{XieCaoXuZha:T21}.
\end{itemize}
\foreignlanguage{american}{}
\begin{table}
\selectlanguage{american}%
\caption{\label{tab:Maps-with-four-function-areas}Maps with four types of
function areas}

\renewcommand{\arraystretch}{1.2}
\centering{}%
\begin{tabular}{>{\raggedright}p{0.165\columnwidth}>{\raggedright}p{0.04\columnwidth}>{\raggedright}p{0.04\columnwidth}>{\raggedright}p{0.175\columnwidth}>{\raggedright}p{0.21\columnwidth}>{\raggedright}p{0.22\columnwidth}}
\hline 
\textbf{Map} & \centering{}\textbf{BCR} & \centering{}\textbf{FAR} & \centering{}\textbf{Mean height {[}meter{]}} & \centering{}\textbf{Maximum height {[}meter{]}} & \textbf{Comment}\tabularnewline
\hline 
Map A (Beijing) & \centering{}19\% & \centering{}1.4 & \centering{}22 & \centering{}96 & Typical commercial center\tabularnewline
Map B (Beijing) & \centering{}32\% & \centering{}1.8 & \centering{}16 & \centering{}87 & Traditional commercial area\tabularnewline
Map C (Guangzhou) & \centering{}22\% & \centering{}4.9 & \centering{}65 & \centering{}80 & Dense residential area\tabularnewline
Map D (Guangzhou) & \centering{}40\% & \centering{}8.8 & \centering{}65 & \centering{}80 & Ultra dense area\tabularnewline
\hline 
\end{tabular}\selectlanguage{english}%
\end{table}

\selectlanguage{american}%
We evaluate the following baseline schemes for performance benchmarking.
The exhaustive search schemes are implemented using a 5-meter step
size.
\begin{itemize}
\item \textit{Exhaustive 3D search}: This scheme performs an exhaustive
search in the 3D search space above the area of interest.
\item \textit{Exhaustive 2D searc}\emph{h (horizontal)} \cite{LyuZenZhaLim:J17}:
This scheme performs an exhaustive search over a 2D horizontal plane
$\{\mathbf{p}\in\mathbb{R}^{3}:p_{3}=H_{\text{2D}}\}$ where $H_{\text{2D}}$
is set as 120 meters here.
\item \emph{Exhaustive 2D search (vertical)}: This scheme is designed to
confirm the optimality of the output of Algorithm~1 on the middle
perpendicular plane $\mathcal{S}$. It performs an exhaustive search
over the 2D middle perpendicular plane $\mathcal{S}$.
\item \textit{Statistical }\emph{method} \cite{AlhKanLar:J14,KumSinDarSha:J21,CheHua:J22}:
The average path loss from the \ac{uav} position $\mathbf{p}$ to
the $i$th ground user can be formulated as
\begin{equation}
\text{PL}_{\text{ave}}=\text{P}_{\text{LOS}}(\mathbf{p})\times\text{PL}_{\text{LOS}}(d_{i}(\mathbf{p}))+(1-\text{P}_{\text{LOS}}(\mathbf{p}))\text{PL}_{\text{NLOS}}(d_{i}(\mathbf{p}))
\end{equation}
where $\text{P}_{\text{LOS}}(\mathbf{p})$ is the \ac{los} probability
of the \ac{uav} position $\mathbf{p}$. The \ac{los} probability
$\text{P}_{\text{LOS}}(\mathbf{p})$ is defined as
\begin{equation}
\text{P}_{\text{LOS}}(\mathbf{p})=\frac{1}{1+a\times\text{exp}(-b(\text{arctan}(p_{3}/r_{i})-a))}
\end{equation}
where $r_{i}=\sqrt{\|\mathbf{p}-\mathbf{u}_{i}\|_{2}^{2}-p_{3}^{2}}$,
and the environmental parameter pair $(a,b)$ is learned from the
actual distribution of \ac{los} regions, and obtained as $(2.60,0.05)$,
$(58.91,8.90),$ $(63.77,3.95)$, and $(64.33,141.17)$ in Maps A,
B, C, and D, respectively.
\end{itemize}

\subsection{UAV Communication Relaying}

Fig. \ref{fig:Capacity-on-diffenert-function-areas} summarizes the
mean capacity of different schemes on the four maps under transmission
power $P=30$ dBm. The numerical result in Fig. \ref{fig:Capacity-on-diffenert-function-areas}
confirms the global optimality of Algorithm~1 on the perpendicular
plane $\mathcal{S}$. The mean capacity of Algorithm~1 and exhaustive
2D search on $\mathcal{S}$ is the same as each other in the four
function areas. It is worth noting that the trajectory length of Algorithm~1
is at most $1/45$ of that of the Exhaustive 2D search on $\mathcal{S}$
as shown in Table \ref{tab:comparison-of-mean-trajectory-length}.
Thus, Algorithm~1 is much more efficient than the Exhaustive 2D search
on $\mathcal{S}$.

In Map A and B, where the distribution of buildings is relatively
sparse, both of the two proposed algorithms achieve mostly the globally
optimal performance with a negligible performance gap to the Exhaustive
3D scheme as shown in Fig. \ref{fig:Capacity-on-diffenert-function-areas},
although Algorithm~1 only searches on the middle perpendicular plane
with limited search length. Such a result suggests that the globally
optimal solution has a high chance to locate on the middle perpendicular
plane over a sparse city topology.

In Map C and D, where buildings are denser, the performance of Algorithm~1
degrades and it achieves only $85.4\%$ on Map C and $75.2\%$ on
Map D to the Exhaustive 3D scheme in Fig.~\ref{fig:Capacity-on-diffenert-function-areas}.
By contrast, the performance of Algorithm~2 with $\delta=3$ meters
is still close, {\em i.e.}, above $99.8\%$ in Map C and above $98.0\%$
in Map D, to that of the Exhaustive 3D scheme. This is because Algorithm~\ref{Alg:symmetric_algorithm_2}
is capable of discovering those potentially better \ac{los} positions
off the middle perpendicular plane using coarse \ac{los} information
on a bounded 2D region. However, the performance of the statistical
method is relatively poor since this method does not examine the actual
obstacle occlusion, resulting in no \ac{los} guarantee in practical
applications.

Fig.~\ref{fig:Capacity-under-different-power} illustrates the average
capacity versus the transmission power. For a sparse city topology
(Map A), both of the proposed algorithms achieve almost identical
performance to the Exhaustive 3D scheme. For a dense topology (Map
C), Algorithm~1 degrades from the Exhaustive 3D scheme, but the performance
degradation is small at the high transmission power regime. Specifically,
Algorithm~1 achieves above $90\%$ to the Exhaustive 3D scheme under
transmission power of $40$~dBm. In contrast, Algorithm~2 achieves
above $99.7\%$ to the Exhaustive 3D scheme at different transmission
powers under $\delta=3$ meters.
\begin{figure}
\centering{}%
\begin{minipage}[t]{0.5\columnwidth}%
\begin{center}
\includegraphics[width=1\columnwidth]{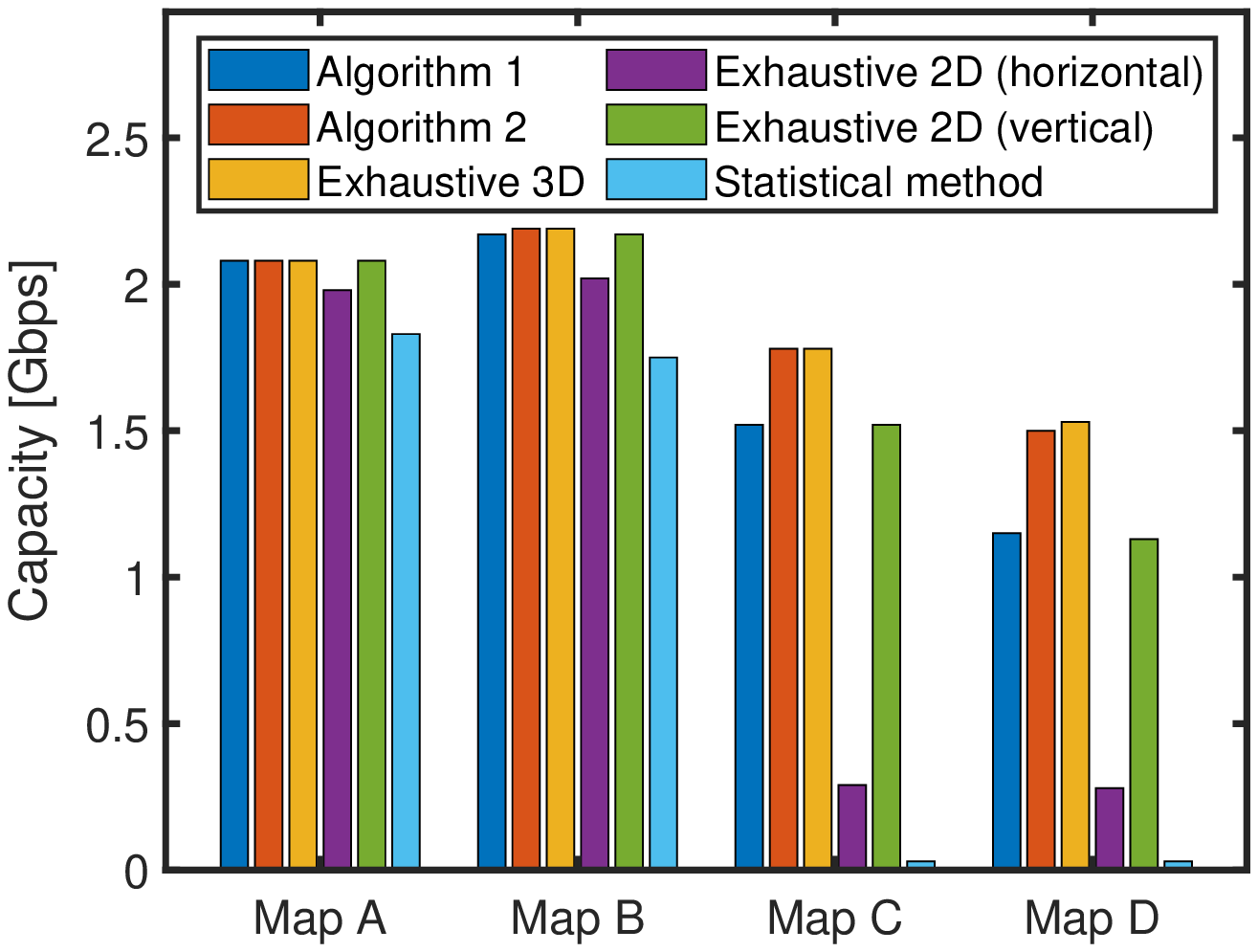}\caption{\label{fig:Capacity-on-diffenert-function-areas}Capacity in different
function areas}
\par\end{center}%
\end{minipage}%
\begin{minipage}[t]{0.5\columnwidth}%
\begin{center}
\includegraphics[width=1\columnwidth]{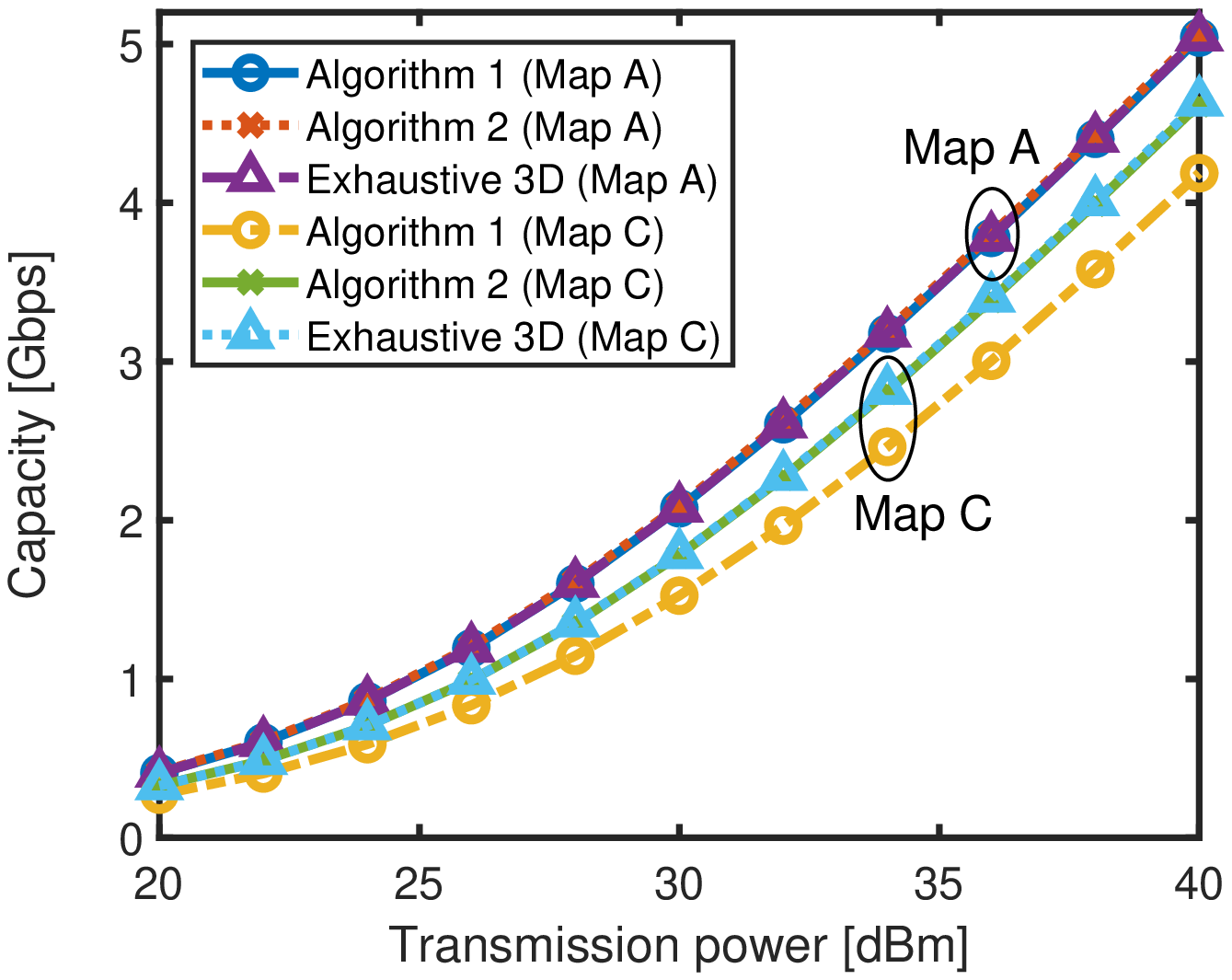}\caption{\label{fig:Capacity-under-different-power}Capacity under different
transmission power}
\par\end{center}%
\end{minipage}
\end{figure}

Fig. \ref{fig:Capacity-under-different-inter-user-distance} demonstrates
the average capacity versus different inter-user distance separating
the two ground users. It is observed that the capacity decreases as
the inter-user distance increases because increasing the inter-user
distance not only increases the propagation distance (resulting in
energy loss in free space), but also increases the chance of blockage,
and therefore, the UAV needs to fly higher to seek a double-LOS opportunity.
In addition, the performance gap between Algorithm~1 and the exhaustive
3D search scheme becomes smaller under larger inter-user distance.
In particular, Algorithm~1 achieves about $73\%$ to the exhaustive
3D scheme under the average inter-user distance of $105$~meters
and about $77\%$ under the average inter-user distance of $175$~meters
in Map D. Algorithm~2 with $\delta=3$ meters achieves above $97\%$
to the exhaustive 3D scheme under all inter-user distances both in
Map B and Map D.
\begin{figure}
\centering{}%
\begin{minipage}[t]{0.5\columnwidth}%
\begin{center}
\includegraphics[width=1\columnwidth]{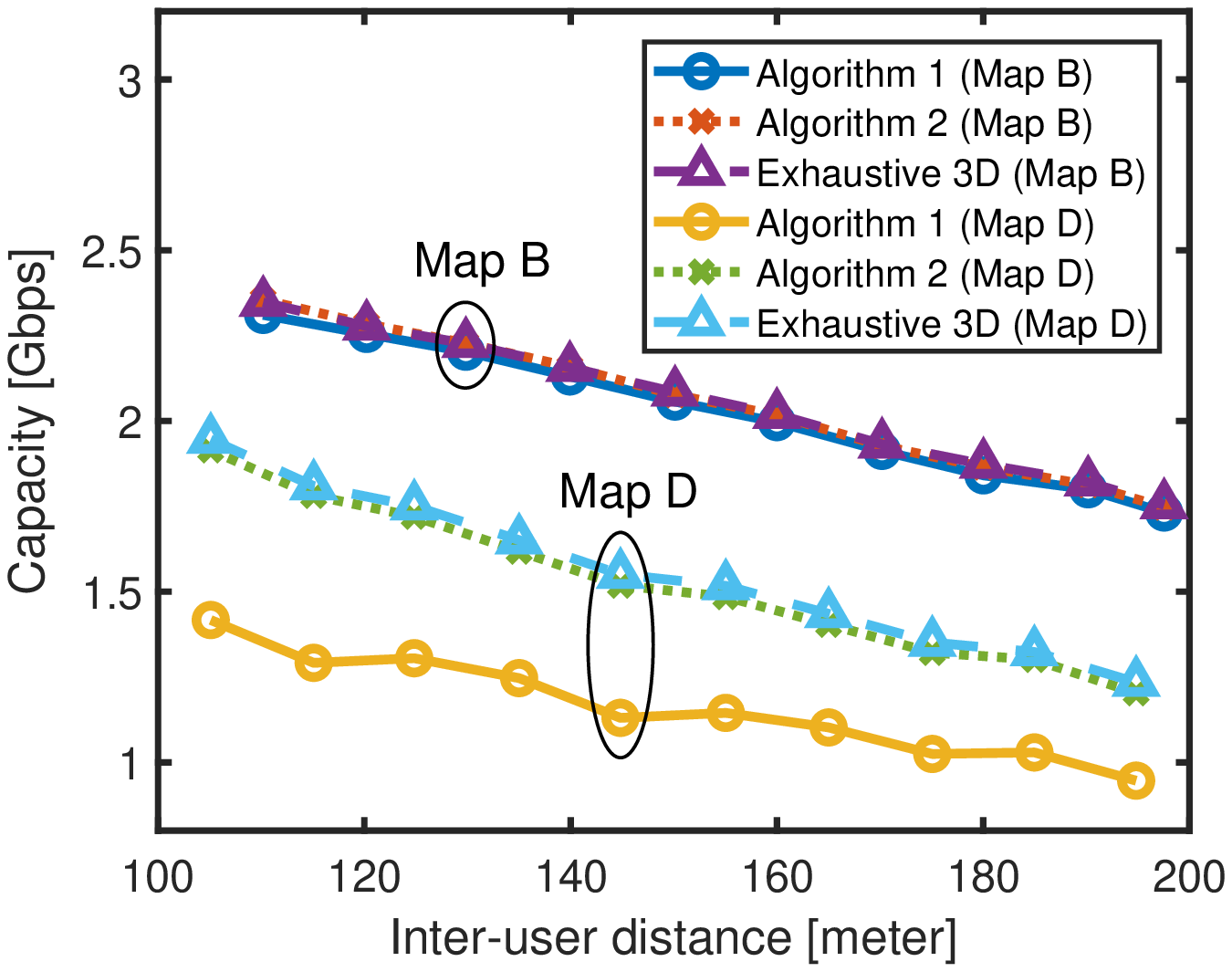}\caption{\label{fig:Capacity-under-different-inter-user-distance}Capacity
under different inter-user distance}
\par\end{center}%
\end{minipage}%
\begin{minipage}[t]{0.5\columnwidth}%
\begin{center}
\includegraphics[width=1\columnwidth]{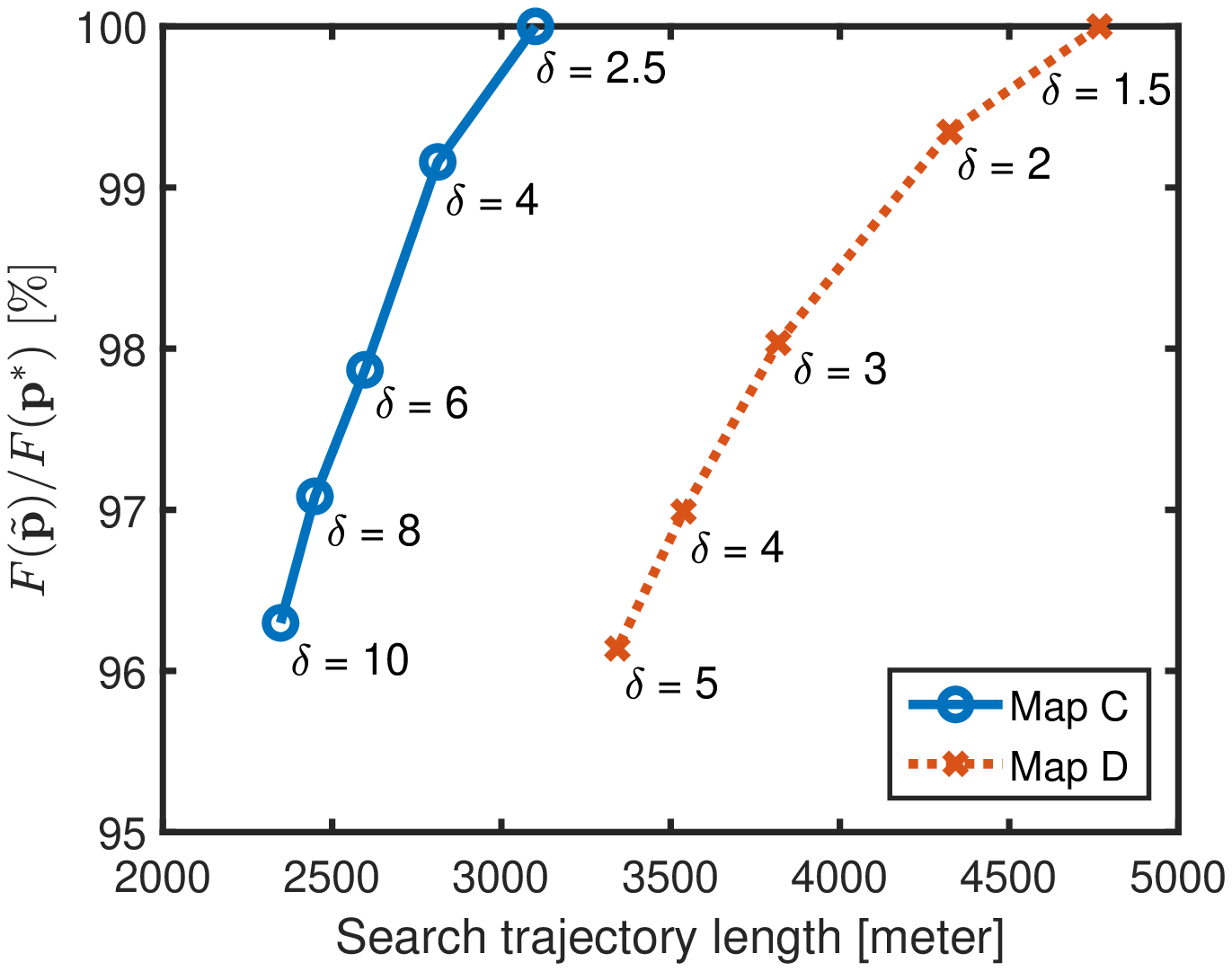}\caption{\label{fig:Comletion-time-optimality-trade-off}Trade-off of trajectory
length and optimality for Algorithm~2}
\par\end{center}%
\end{minipage}
\end{figure}

Fig. \ref{fig:Comletion-time-optimality-trade-off} evaluates the
performance-complexity trade-off of Algorithm~2 in terms of the percentage
of the performance $F(\mathbf{p}^{*})$ obtained from the Exhaustive
3D scheme, \emph{i.e.}, $F(\text{\ensuremath{\tilde{\mathbf{p}}}})/F(\mathbf{p}^{*})$,
over Maps C and D, under transmission power $P=30$ dBm and $M=4$.
With a choice of $\delta=4$ meters, Algorithm~2 achieves above $99.1\%$
to the Exhaustive 3D scheme in Map C within $2900$-meter search.
Additionally, it achieves above $99.3\%$ to the Exhaustive 3D scheme
within $4400$-meter search in Map D if $\delta$ is chosen as $2$
meters.

Finally, Table \ref{tab:comparison-of-mean-trajectory-length} summarizes
the mean trajectory length of different schemes on the four maps.
The search lengths of Algorithm~\ref{Alg:symmetric_algorithm} and
Algorithm~\ref{Alg:symmetric_algorithm_2} with $\delta=3$ meters
are merely several hundreds of meters for Maps A and B, but both of
the algorithms achieve above $99\%$ of the Exhaustive 3D search scheme
as seen from Fig.~\ref{fig:Capacity-on-diffenert-function-areas}.
The search lengths of Algorithm~\ref{Alg:symmetric_algorithm_2}
are about $3$-$4$ kilometers on Maps C and D to achieve above $98\%$
performance of the Exhaustive 3D scheme (see Fig.~\ref{fig:Comletion-time-optimality-trade-off}).
This corresponds to $3$-$4$ minutes flight time for a light-weight
commercial drone at a cruise speed of $20$ m/s.
\begin{table}
\caption{\label{tab:comparison-of-mean-trajectory-length}Comparison of mean
trajectory length {[}kilometer{]} under five schemes}

\renewcommand{\arraystretch}{1.2}
\centering{}%
\begin{tabular}{>{\centering}p{0.25\columnwidth}>{\centering}p{0.13\columnwidth}>{\centering}p{0.13\columnwidth}>{\centering}p{0.13\columnwidth}>{\centering}p{0.13\columnwidth}}
\hline 
\textbf{Scheme} & \textbf{Map A} & \textbf{Map B} & \textbf{Map C} & \textbf{Map D}\tabularnewline
\hline 
Algorithm~1 & 0.097 & 0.173 & 1.715 & 2.124\tabularnewline
Algorithm~2 & 0.272 & 0.569 & 3.002 & 3.818\tabularnewline
Exhaustive 2D (horizontal) & 9.353 & 11.3 & 185.5 & 191.5\tabularnewline
Exhaustive 2D (vertical) & 6.82 & 7.875 & 107 & 110\tabularnewline
Exhaustive 3D & 546 & 779.3 & 21,630 & 22,590\tabularnewline
\hline 
\end{tabular}
\end{table}

\subsection{Application in WPT}

\selectlanguage{english}%
Fig. \ref{fig:Harvested-power-in-WPT} shows the average harvested
power by the two ground users in the UAV WPT application \foreignlanguage{american}{under
transmission power $P=40$~dBm}. First, the relative performance
of Algorithm~1 to the global optimality (represented by that of the
Exhaustive 3D scheme) decreases as compared to the UAV relay communication
as shown in Fig. \ref{fig:Capacity-on-diffenert-function-areas}.
For example, Algorithm~1 achieves about $80.2\%$ to the \foreignlanguage{american}{Exhaustive
3D scheme} over Map C in the WPT application, but it achieves above
$85.3\%$ optimality in the relay communication application in Fig.
\ref{fig:Capacity-on-diffenert-function-areas}. This is because WPT
is more sensitive to the propagation distance as observed from its
objective function $f(d)$ defined in Section~\ref{subsec:Propagation-Environment-Modeling}.
However, Algorithm~2 with $\delta=3$ meters still achieves above
$98.9\%$ to the \foreignlanguage{american}{Exhaustive 3D scheme}
over all maps. Second, the performance gain of Algorithm~2 over Algorithm~1
is larger in the WPT application. For example, Algorithm~1 achieves
about $76.7\%$ of Algorithm~2 in Map D in the relaying application
while it achieves only $66.1\%$ of Algorithm~2 in WPT. Third, the
Exhaustive 2D scheme shows poor performance in dense areas (Maps C
and D) while it achieves over $90.3\%$ to the \foreignlanguage{american}{Exhaustive
3D scheme} in relatively sparse areas (Maps A and B). This is because
more blockage leads to fewer chances of finding a double-LOS position
close to both users on a fixed horizontal plane, and the power acquisition
efficiency decreases sharply with increasing distance when the \ac{uav}
flies away from the midpoint of the two users.\foreignlanguage{american}{}
\begin{figure}
\selectlanguage{american}%
\begin{centering}
\includegraphics[width=0.5\columnwidth]{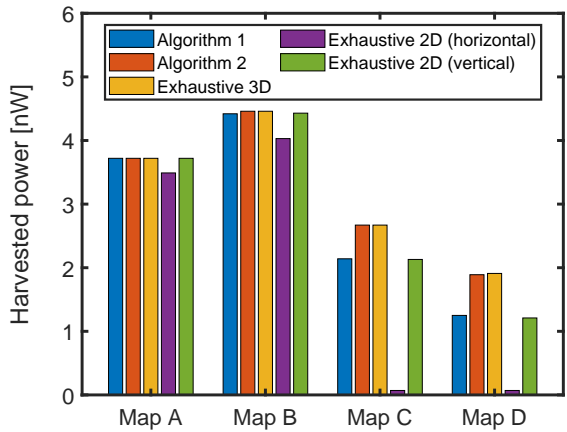}
\par\end{centering}
\centering{}\caption{\label{fig:Harvested-power-in-WPT}Harvested power in a WPT application}
\selectlanguage{english}%
\end{figure}

\selectlanguage{american}%

\section{Conclusions}

\label{sec:conclusion}

This paper developed two online search algorithms to search for the
globally optimal \ac{uav} position for establishing \ac{los} links
with two ground terminals in deep shadow. A key challenge addressed
here was to ensure \ac{los} conditions without the assistance of
3D maps. Exploiting the universal properties of any \ac{los} patterns
over an almost arbitrary terrain structure, Algorithm~1 found the
optimal position on the middle perpendicular plane with the search
length bounded by a linear function of the altitude of the initial
point. Algorithm~2 was proposed to search on a bounded 2D local area
for the $\epsilon$-optimal \ac{uav} position in 3D with search complexity
$O(1/\epsilon)$ under some mild condition. The optimality and complexity
were further confirmed by numerical experiments with real map data.
Both proposed algorithms achieved near 100\% global optimality over
several real city environments. Additionally, Algorithm~\ref{Alg:symmetric_algorithm_2}
achieves above $98.9\%$ \foreignlanguage{english}{performance of
the }Exhaustive 3D scheme over two simulated dense environments in
\foreignlanguage{english}{a WPT application.}


\appendices{}


\section{Proof of Theorem \ref{thm:Algorithm-1-global-optimal-on-S}}

\label{sec:Proof-of-Theorem1}

Suppose that the double-LOS trajectory $\hat{\mathbf{p}}(t)$ does
not terminate at the globally optimal solution $\hat{\mathbf{p}}$
of $\mathscr{P}'$. Contradiction can be shown in the following case-by-case
discussion.

Case 1: $\hat{\mathbf{p}}(t)$ passes by $\hat{\mathbf{p}}$ at time
$t_{1}$, but does not terminate at $\hat{\mathbf{p}}$. Since $\hat{\mathbf{p}}(t)$
only updates when meeting an double-\ac{los} position and each update
of $\hat{\mathbf{p}}(t)$ leads to a smaller radius to $\mathbf{o}$,
there exists another double-\ac{los} position $\hat{\mathbf{p}}(t_{2})$
such that $F(\hat{\mathbf{p}}(t_{2}))>F(\hat{\mathbf{p}}(t_{1}))=F(\hat{\mathbf{p}})$,
and $t_{2}>t_{1}$. This is a contradiction to the fact that $\hat{\mathbf{p}}$
is the globally optimal solution.

Case 2: $\hat{\mathbf{p}}(t)$ does not pass by $\hat{\mathbf{p}}$.
Suppose $\hat{\mathbf{p}}(t)$ terminates at $\hat{\mathbf{p}}(T_{1})$
and $\hat{\mathbf{p}}(T_{2})$ at the clockwise search stage and the
anticlockwise search stage, respectively. Since $\hat{\mathbf{p}}$
is the globally optimal solution of $\mathscr{P}'$, $F\text{(\ensuremath{\hat{\mathbf{p}}}(\ensuremath{T_{1}}))}\leq F(\hat{\mathbf{p}})$
and $F\text{(\ensuremath{\hat{\mathbf{p}}}(\ensuremath{T_{2}}))}\leq F(\hat{\mathbf{p}})$
result in $r(\hat{\mathbf{p}}(T_{1}))\geq r(\hat{\mathbf{p}})$ and
$r(\hat{\mathbf{p}}(T_{2}))\geq r(\hat{\mathbf{p}})$ according to
Lemma~\ref{lem:minimize-the-radius}. Thus, $\hat{\mathbf{p}}$ is
embraced by the search trajectory, which implies that the search trajectory
$\mathbf{p}(t)$ must pass a point $\mathbf{p}'$ that is perpendicularly
above $\hat{\mathbf{p}}$. As Lemma~\ref{lem:Double-LOS-structure-on-S}
shows that any position perpendicularly above $\hat{\mathbf{p}}$
is also double-LOS, then, the line segment joining $\hat{\mathbf{p}}$
and $\mathbf{p}'$ must be double-LOS. However, according to Step~\ref{enu:Clockwise_if}
of Algorithm~1, the \ac{uav} must go downwards until reaching $\hat{\mathbf{p}}$,
leading to a contradiction for Case 2.

To summarize, the double-LOS trajectory $\hat{\mathbf{p}}(t)$ must
terminate at the globally optimal solution to $\mathscr{P}'$.

\section{Proof of Proposition \ref{thm:infer_given_two_positions}}

\label{sec:Proof-of-Theorem-double-LOS-line}

Denote the plane perpendicular to the ground and passing through points
$\mathbf{u}_{i}$ and $\mathbf{a}_{i}$ as $\pi_{i}$, $i\in\left\{ 1,2\right\} $.
Define the intersection line of $\pi_{1}$ and $\pi_{2}$ as $l_{12}$.
Given $a_{11}a_{21}>0$, define a position $\tilde{\mathbf{a}}_{i}$
on $l_{12}$ such that $\mathbf{u}_{i}$, $\mathbf{a}_{i}$, and $\tilde{\mathbf{a}}_{i}$
are colinear and $\tilde{a}_{i3}>0$. The colinear invariant property
of the \ac{los} regions implies that $\tilde{\mathbf{a}}_{i}$ is
\ac{los} from user $i$, {\em i.e.}, $\tilde{\mathbf{a}}_{i}\in\mathcal{D}_{0}^{(i)}$,
given $\mathbf{a}_{i}\in\mathcal{D}_{0}^{(i)}$. Then, by applying
the upward invariant property of the \ac{los} regions, one can obtain
$\tilde{\mathbf{a}}_{1}$ is a double-LOS position if $\tilde{\mathbf{a}}_{1}$
is higher than $\tilde{\mathbf{a}}_{2}$, \emph{i.e.}, $a_{13}/a_{23}>a_{11}/a_{21}$.
Otherwise, $\tilde{\mathbf{a}}_{2}$ is a double-LOS position.

The global optimality can be proved by contradictions. Suppose there
exists another double-LOS point $\tilde{\mathbf{a}}$ off $l_{12}$
or there exists another double-LOS point $\tilde{\mathbf{a}}$ lower
than both $\tilde{\mathbf{a}}_{1}$ and $\tilde{\mathbf{a}}_{2}$
on $l_{12}$. If so, the intersection point between the middle perpendicular
plane and the line joining $\mathbf{u}_{i}$ and $\tilde{\mathbf{a}}$
will be off the double-ray \ac{los} pattern. This is a contradiction
to the unique existence of double-ray \ac{los} pattern. Thus, either
$\tilde{\mathbf{a}}_{1}$ or $\tilde{\mathbf{a}}_{2}$ is the globally
optimal solution to $\mathscr{P}$ since one of them is the lowest
double-LOS point on $l_{12}$.

By applying the knowledge of analytic geometry, one can obtain $\tilde{\mathbf{a}}_{1}$
and $\tilde{\mathbf{a}}_{2}$ as
\[
\tilde{\mathbf{a}}_{1}=\frac{2a_{21}}{a_{11}+a_{21}}\mathbf{a}_{1}-\frac{L}{2}\mathbf{e}_{2},\text{ and }\tilde{\mathbf{a}}_{2}=\frac{2a_{11}}{a_{11}+a_{21}}\mathbf{a}_{2}+\frac{L(a_{21}-2a_{11})}{2a_{11}}\mathbf{e}_{2},
\]
respectively, where $\mathbf{e}_{2}=(\mathbf{u}_{2}-\mathbf{u}_{1})/\|\mathbf{u}_{2}-\mathbf{u}_{1}\|_{2}$,
and $L=\|\mathbf{u}_{2}-\mathbf{u}_{1}\|_{2}$.

\section{The Closed-form Solution to $\mathscr{P}''$}

\label{sec:The-Closed-form-Solution}

Without loss of generality, one only needs to consider the case $b_{21}\geq a_{21}>b_{11}\geq a_{11}>0$
due to the symmetric properties, and this case offers $\max\left\{ d_{1}(\mathbf{p}),d_{2}(\mathbf{p})\right\} =d_{1}(\mathbf{p})$.
Since $F(\mathbf{p})$ is decreasing with $\max\left\{ d_{1}(\mathbf{p}),d_{2}(\mathbf{p})\right\} $,
maximizing $F(\mathbf{p})$ will be equivalent to minimizing $d_{1}(\mathbf{p})$.
According to Proposition~\ref{thm:infer_given_two_positions}, if
$x_{13}/x_{23}\geq x_{11}/x_{21}$, $\mathbf{q}_{1}(\mathbf{x}_{1},\mathbf{x}_{2})=2x_{21}/(x_{11}+x_{21})\mathbf{x}_{1}-L/2\mathbf{e}_{2}$
will be the 3D globally optimal solution to $\mathscr{P}$ given the
\ac{los} pair $(\mathbf{x}_{1},\mathbf{x}_{2})$ for a double-ray
pattern. Otherwise, $\mathbf{q}_{2}(\mathbf{x}_{1},\mathbf{x}_{2})=2x_{11}/(x_{11}+x_{21})\mathbf{x}_{2}+L(x_{21}-2x_{11})/(2x_{11})\mathbf{e}_{2}$
will be the solution. Here, we consider the case $x_{13}/x_{23}\geq x_{11}/x_{21}$,
and the other case is similar to it. Given the above conditions, $\mathscr{P}''$
can be transformed as the following problem.
\begin{equation}
\begin{aligned}\mathop{\mbox{minimize}}\limits _{\mathbf{q}_{1}(\mathbf{x}_{1},\mathbf{x}_{2})} & \quad d_{1}(\mathbf{q}_{1}(\mathbf{x}_{1},\mathbf{x}_{2}))\\
\mathop{\mbox{subject to}} & \quad a_{i1}\leq x_{i1}\leq b_{i1}\\
 & \quad x_{11}\leq x_{21}x_{13}/x_{23}\\
 & \quad\mathbf{x}_{1}=(x_{11},a_{12},h_{1}),\mathbf{x}_{2}=(x_{21},a_{22},h_{2}).
\end{aligned}
\label{eq:problem_transformed_case1}
\end{equation}

The objective function in (\ref{eq:problem_transformed_case1}) is
decreasing with $x_{21}$, and it has only one stationary point with
$x_{11}$. In addition, the feasible domains of $x_{11}$ and $x_{21}$
in (\ref{eq:problem_transformed_case1}) are bounded while other variables
are constants. Then the solving process can be summarized as the following
four steps.
\begin{enumerate}
\item Pick the optimal value $x_{21}^{*}(x_{11})$ from $x_{21}\in[\max\left\{ a_{21},x_{11}h_{2}/h_{1}\right\} ,b_{21}]$
through the monotonicity of $d_{1}(\mathbf{q}_{1}(\mathbf{x}_{1},\mathbf{x}_{2}))$.
\item Given $x_{21}^{*}(x_{11})$, calculate the stationary point of $d_{1}(\mathbf{q}_{1}(\mathbf{x}_{1},\mathbf{x}_{2}))$
over $x_{11}$, and obtain the optimal value $x_{11}^{*}$ from $x_{11}\in[a_{11},\min\left\{ b_{11},x_{21}^{*}(x_{11})\cdot h_{1}/h_{2}\right\} ]$.
\item Calculate the optimal value of $x_{21}$ as $x_{21}^{*}=x_{21}^{*}(x_{11}^{*})$.
\item The solution $\mathbf{q}_{1}(\mathbf{x}_{1}^{*},\mathbf{x}_{2}^{*})$
can be obtained by substituting $\mathbf{x}_{1}^{*}=x_{11}^{*}\mathbf{e}_{1}+a_{12}\mathbf{e}_{2}+h_{1}\mathbf{e}_{3}$
and $\mathbf{x}_{2}^{*}=x_{21}^{*}\mathbf{e}_{1}+a_{22}\mathbf{e}_{2}+h_{2}\mathbf{e}_{3}$
into formula (\ref{eq:solution_two_positions}).
\end{enumerate}

The similar method can be applied to other cases. Finally, the solution
to $\mathscr{P}''$ is given as
\begin{equation}
Q(\mathbf{a}_{1},\mathbf{b}_{1};\mathbf{a}_{2},\mathbf{b}_{2})=\begin{cases}
q_{1}(\rho_{1})\quad\text{if \ensuremath{|b_{21}|\geq|a_{21}|>|b_{11}|\geq|a_{11}|}\text{ and }}\ensuremath{\frac{h_{2}}{h_{1}}\leq\frac{b_{21}}{a_{11}}}\\
q_{2}(\rho_{2})\quad\text{if \ensuremath{|b_{21}|\geq|a_{21}|>|b_{11}|\geq|a_{11}|}}\text{\text{ and }}\frac{h_{2}}{h_{1}}>\frac{b_{21}}{a_{11}}\\
q_{2}(\rho_{3})\quad\text{if \ensuremath{|b_{11}|\geq|a_{11}|>|b_{21}|\geq|a_{21}|}}\text{\text{ and }}\ensuremath{\frac{h_{2}}{h_{1}}\geq\frac{a_{11}}{b_{21}}}\\
q_{1}(\rho_{4})\quad\text{if \ensuremath{|b_{11}|\geq|a_{11}|>|b_{21}|\geq|a_{21}|}}\text{\text{ and }}\ensuremath{\frac{h_{2}}{h_{1}}<\frac{a_{11}}{b_{21}}}
\end{cases}\label{eq:solution_problem-double-stripe}
\end{equation}
where
\begin{align*}
\mathbf{q}_{1}(\rho) & =\frac{2a_{11}\rho}{\rho+h_{1}}\mathbf{e}_{1}+\frac{(\rho-h_{1})L}{2(\rho+h_{1})}\mathbf{e}_{2}+\frac{2h_{1}\rho}{\rho+h_{1}}\mathbf{e}_{3},\,\mathbf{q}_{2}(\rho)=\frac{2a_{21}\rho}{\rho+h_{2}}\mathbf{e}_{1}+\frac{(h_{2}-\rho)L}{2(\rho+h_{2})}\mathbf{e}_{2}+\frac{2h_{2}\rho}{\rho+h_{2}}\mathbf{e}_{3},\\
\rho_{1} & =\max\left\{ h_{2},\frac{a_{21}h_{1}}{a_{11}}\right\} ,\,\rho_{2}=\text{median}\left\{ \frac{a_{11}h_{2}}{a_{21}},\frac{b_{11}h_{2}}{a_{21}},h_{1},h_{2},\frac{L^{2}h_{2}}{4a_{21}^{2}+4h_{2}^{2}}\right\} ,\\
\rho_{3} & =\max\left\{ h_{1},\frac{a_{11}h_{2}}{a_{21}}\right\} ,\,\rho_{4}=\text{median}\left\{ \frac{a_{21}h_{1}}{a_{11}},\frac{b_{21}h_{1}}{a_{11}},h_{1},h_{2},\frac{L^{2}h_{1}}{4a_{11}^{2}+4h_{1}^{2}}\right\} .
\end{align*}

\section{Proof of Lemma \ref{thm:search_area_to_find_optimalp}}

\label{sec:Proof-of-Theorem-search-area}

First, it can be easily verified that $\mathcal{B}(\mathbf{p}_{0})\subseteq\tilde{\mathcal{B}}(\mathbf{p}_{0})$.
It follows that, given a double-LOS point $\mathbf{p}_{0}$, the optimal
solution $\mathbf{p}^{*}$ to $\mathscr{P}$ must lie in $\tilde{\mathcal{B}}(\mathbf{p}_{0})$
due to the fact that the objective function is decreasing in $d_{0}(\mathbf{p})$.

Then, for each user $i\in\{1,2\}$, find the points $\mathbf{p}_{i}\in\mathcal{S}$
and $\mathbf{p}_{i}'\in\mathcal{H}$, such that the points $\mathbf{u}_{i}$,
$\mathbf{p}^{*}$, $\mathbf{p}_{i}$, and $\mathbf{p}_{i}'$ are colinear
as illustrated in Fig.~\ref{fig:region_B}. It is clear that either
$\mathbf{p}_{i}$ or $\mathbf{p}_{i}'$ belongs to $\tilde{\mathcal{B}}(\mathbf{p}_{0})\cap\left\{ \mathcal{S}\cup\mathcal{H}\right\} $.
Specifically, for $d_{0}(\mathbf{p}_{0})\leq\sqrt{2}L/2$, we have
$\mathbf{p}_{i}\in\mathcal{B}(\mathbf{p}_{0})\cap\mathcal{S}$ or
$\mathbf{p}_{i}'\in\mathcal{B}(\mathbf{p}_{0})\cap\mathcal{H}$; for
$d_{0}(\mathbf{p}_{0})\geq L$, we have $\mathbf{p}_{i}'\in\mathcal{B}(\mathbf{p}_{0})\cap\mathcal{H}$;
and for $\sqrt{2}L/2<d_{0}(\mathbf{p}_{0})<L$, we have $\mathbf{p}_{i}\in\tilde{\mathcal{B}}(\mathbf{p}_{0})\cap\mathcal{S}$
or $\mathbf{p}_{i}'\in\tilde{\mathcal{B}}(\mathbf{p}_{0})\cap\mathcal{H}$.

Due to the colinear invariant property for the \ac{los} regions,
it suffices to search either $\mathbf{p}_{i}$ or $\mathbf{p}_{i}'$
for the \ac{los} status to user $i$, where both $\mathbf{p}_{i}$
and $\mathbf{p}_{i}'$ have the same \ac{los} status.

As a result, given the \ac{los} status found for $\mathbf{p}_{1}$
and $\mathbf{p}_{2}$, Proposition~\ref{thm:infer_given_two_positions}
asserts that the optimal solution can be found as $\mathbf{p}^{*}=\mathbf{q}(\mathbf{p}_{1},\mathbf{p}_{2})$
in (\ref{eq:solution_two_positions}). Since either $\mathbf{p}_{i}$
or $\mathbf{p}_{i}'$ belongs to $\tilde{\mathcal{B}}(\mathbf{p}_{0})\cap\left\{ \mathcal{S}\cup\mathcal{H}\right\} $
and the two points are related according to (\ref{eq:transition_from_H_S1})
and (\ref{eq:transition_from_H_S2}), it thus suffices to search $\tilde{\mathcal{B}}(\mathbf{p}_{0})\cap\left\{ \mathcal{S}\cup\mathcal{H}\right\} $
for computing $\mathbf{p}^{*}$. The result of Lemma~\ref{thm:search_area_to_find_optimalp}
is thus proven.

\section{Proof of Theorem \ref{thm:Alg2-Upper-bound-of-the-performance-gap-1}}

\label{sec:Proof-of-Theorem-upper-bound-performance-gap-Alg2}
\begin{figure}
\begin{centering}
\includegraphics[width=0.5\linewidth]{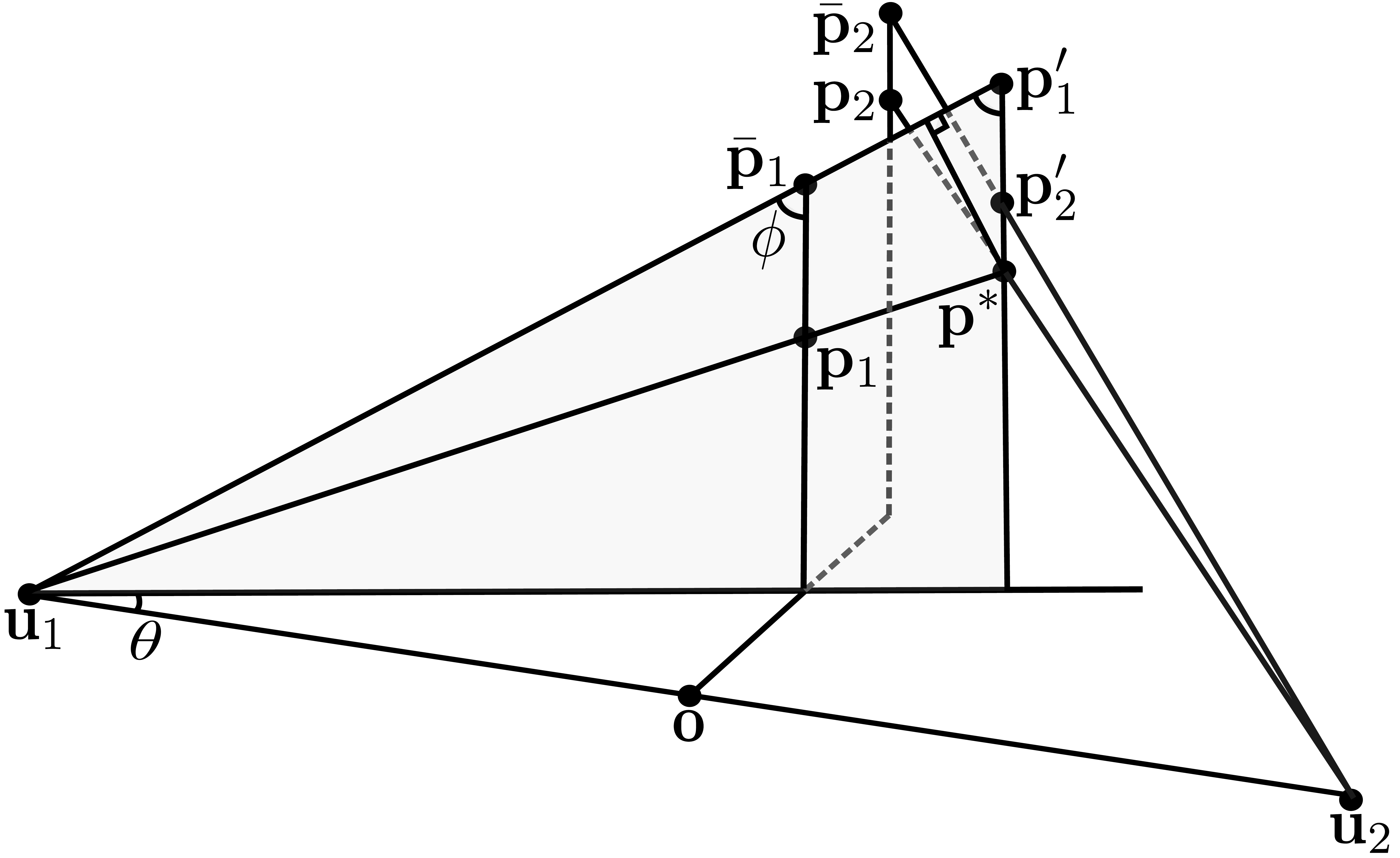}
\par\end{centering}
\caption{Performance gap due to the vertical step size $\delta$.}

\label{fig:performancegap_case1}
\end{figure}
Consider the region $\mathcal{B}'(\tilde{\mathbf{p}})\triangleq\{\mathbf{p}\in\mathbb{R}^{3}:p_{3}\geq H'_{\text{min}}(\tilde{\mathbf{p}}),d_{0}(\mathbf{p})\leq d_{0}(\tilde{\mathbf{p}})\}$.
Since $\tilde{\mathbf{p}}$ as the solution to problem (\ref{eq:problem_found_from_trajectory})
is a double-LOS point and $H'_{\text{min}}(\tilde{\mathbf{p}})<H_{\text{min}}$,
we must have $\mathbf{p}^{*}\in\mathcal{B}'(\tilde{\mathbf{p}})$.
In addition, $\mathcal{B}'(\tilde{\mathbf{p}})\subseteq\mathcal{B}'(\mathbf{p}_{0})$.
Without loss of generality, consider the case $\text{\ensuremath{d_{0}(\mathbf{p}^{*})=}}d_{1}(\mathbf{p}^{*})>d_{2}(\mathbf{p}^{*})$
which means that $\mathbf{p}^{*}$ is closer to $\mathbf{u}_{2}$
as shown in Fig.~\ref{fig:performancegap_case1}.

First, given $d_{0}(\tilde{\mathbf{p}})\leq\sqrt{2}L/2$, it holds
from the geometry that the colinear point $\mathbf{p}_{1}$, \ac{wrt}
$\mathbf{u}_{1}$ and $\mathbf{p}^{*}$, lies in the effective search
region $\mathcal{B}'(\mathbf{p}_{0})\cap\mathcal{S}$. Then, define
$\bar{\mathbf{p}}_{1}\in\mathcal{T}$ as a point on trajectories such
that $\bar{\mathbf{p}}_{1}$ is perpendicularly above $\mathbf{p}_{1}$
and the closest to $\mathbf{p}_{1}$. It is clear that $\|\bar{\mathbf{p}}_{1}-\mathbf{p}_{1}\|_{2}\leq\delta$,
since the trajectories are parallel to the ground with $\delta$ space.
Additionally, $\bar{\mathbf{p}}_{1}$ is \ac{los} to $\mathbf{u}_{1}$
due to the upward invariant property. Similarly, one can define $\mathbf{p}_{2}$
and $\bar{\mathbf{p}}_{2}$.

Next, define $\mathbf{p}_{1}'$ as a point perpendicularly above $\mathbf{p}^{*}$
such that $\mathbf{u}_{1}$, $\bar{\mathbf{p}}_{1}$, and $\mathbf{p}_{1}'$
are colinear. Similarly, one can obtain the colinear point $\mathbf{p}_{2}'$.
Given the \ac{los} status of $\bar{\mathbf{p}}_{1}$ and $\bar{\mathbf{p}}_{2}$,
Proposition~\ref{thm:infer_given_two_positions} asserts that either
$\mathbf{p}_{1}'=\mathbf{q}(\bar{\mathbf{p}}_{1},\bar{\mathbf{p}}_{2})$
or $\mathbf{p}_{2}'=\mathbf{q}(\bar{\mathbf{p}}_{1},\bar{\mathbf{p}}_{2})$
is a double-LOS point, and the other one is a non-double-LOS point.
Since $\mathbf{q}(\bar{\mathbf{p}}_{1},\bar{\mathbf{p}}_{2})$ is
a suboptimal solution to problem (\ref{eq:problem_found_from_trajectory}),
the double-LOS point $\mathbf{p}_{1}'$ or $\mathbf{p}_{2}'$ must
lie in the feasible set of problem (\ref{eq:problem_found_from_trajectory}).
As $\tilde{\mathbf{p}}$ is the solution to problem (\ref{eq:problem_found_from_trajectory}),
it holds that $d_{0}(\mathbf{p}_{1}')\geq d_{0}(\tilde{\mathbf{p}})$
if $\mathbf{p}_{1}'$ is double-LOS, and $d_{0}(\mathbf{p}_{2}')\geq d_{0}(\tilde{\mathbf{p}})$,
otherwise.

Without loss of generality, consider $\mathbf{p}_{1}'$ is double-LOS.
Then, there exists a point $\tilde{\mathbf{p}}'$ perpendicularly
above $\mathbf{p}^{*}$ and below $\mathbf{p}_{1}'$ such that $d_{0}(\mathbf{p}_{1}')\geq d_{0}(\tilde{\mathbf{p}}')=d_{0}(\tilde{\mathbf{p}})\geq d_{0}(\mathbf{p}^{*})$.
Define $\phi$ as the angle between $\mathbf{u}_{1}-\mathbf{p}_{1}'$
and $\mathbf{p}^{*}-\mathbf{p}_{1}'$, and define $\theta$ as the
angle between the basis vector $\mathbf{e}_{2}$ and the plane containing
$\mathbf{u}_{1}$, $\mathbf{p}_{1}$, and $\bar{\mathbf{p}}_{1}$.
Based on the geometric properties, one can obtain
\[
d_{0}(\tilde{\mathbf{p}})-d_{0}(\mathbf{p}^{*})\leq\|\tilde{\mathbf{p}}'-\mathbf{p}^{*}\|_{2}\cos\phi\leq\frac{2}{L}\|\bar{\mathbf{p}}_{1}-\mathbf{p}_{1}\|_{2}d_{0}(\mathbf{\tilde{\mathbf{p}}})\cos\phi\cos\theta\sin\phi.
\]

Given $\cos\theta\leq1$, $\cos\phi\leq1$ and $p_{13}\geq H'_{\text{min}}(\tilde{\mathbf{p}})$,
the upper bound of $\sin\phi$ is given by
\[
\sin\phi\leq\frac{L}{2\cos\theta\sqrt{(L/(2\cos\theta))^{2}+(p_{13}+\|\bar{\mathbf{p}}_{1}-\mathbf{p}_{1}\|_{2})^{2}}}\leq\frac{L}{2\cos\theta\sqrt{(L/2)^{2}+(H'_{\text{min}}(\tilde{\mathbf{p}}))^{2}}}.
\]

Given $\|\bar{\mathbf{p}}_{1}-\mathbf{p}_{1}\|_{2}\leq\delta$ and
$\cos\phi\leq1$, the gap $d_{0}(\tilde{\mathbf{p}})-d_{0}(\mathbf{p}^{*})$
is upper bounded by
\begin{equation}
d_{0}(\tilde{\mathbf{p}})-d_{0}(\mathbf{p}^{*})\leq\frac{2}{L}\|\bar{\mathbf{p}}_{1}-\mathbf{p}_{1}\|_{2}d_{0}(\tilde{\mathbf{p}})\cos\phi\cos\theta\sin\phi\leq\frac{2}{L}\delta\sqrt{d_{0}^{2}(\tilde{\mathbf{p}})-H_{\text{min}}^{2}}.\label{eq:distance_gap}
\end{equation}

Similarly, one can derive the same gap as (\ref{eq:distance_gap})
if $\mathbf{p}_{2}'$ is a double-LOS point. Therefore, the distance
gap between $d_{0}(\tilde{\mathbf{p}})$ and $d_{0}(\mathbf{p}^{*})$
is upper bounded by $2\delta\sqrt{d_{0}^{2}(\tilde{\mathbf{p}})-H_{\text{min}}^{2}}/L$.
As $f(d)$ is decreasing with $d$, $f'(d)<0$, and thus, $f'(d_{0}(\mathbf{p}^{*}))<0$.
Then, the first order condition of the convex function $f(d)$ shows
that $f(d_{0}(\tilde{\mathbf{p}}))-f(d_{0}(\mathbf{p}^{*}))\text{\ensuremath{\geq}}2\delta f'(d_{0}(\mathbf{p}^{*}))\sqrt{d_{0}^{2}(\tilde{\mathbf{p}})-H_{\text{min}}^{2}}/L.$

One can obtain $F(\mathbf{p})=\min\left\{ f(d_{1}(\mathbf{p})),f(d_{2}(\mathbf{p}))\right\} =f(d_{0}(\mathbf{p}))$
using the definition of $d_{0}(\mathbf{p})$ and the monotonicity
of $f(d_{i}(\mathbf{p}))$. Hence, the performance gap can be obtained
as
\[
F(\mathbf{p}^{*})-F(\tilde{\mathbf{p}})=f(d_{0}(\mathbf{p}^{*}))-f(d_{0}(\tilde{\mathbf{p}}))\leq-\frac{2}{L}\delta f'(d_{0}(\mathbf{p}^{*}))\sqrt{d_{0}^{2}(\tilde{\mathbf{p}})-H_{\text{min}}^{2}}.
\]

\bibliographystyle{IEEEtran}
\bibliography{IEEEabrv,StringDefinitions,JCgroup,ChenBibCV}

\end{document}